\newtheorem{theorem}{Theorem}
\newcommand\SystemName{\textsc{FASTLMPI}\xspace}
\newcolumntype{L}{>{\centering\arraybackslash}m{6cm}}
\newacro{kpi}[KPI]{Key Performance Indicator}
\newacro{tdoc}[Tdoc]{Technical Documents}
\newacro{cr}[CR]{Change Request}
\newacro{rf}[RF]{Radio Frequency}
\newacro{ran}[RAN]{Radio Access Network}
\newacro{sa}[SA]{system architecture}
\newacro{ct}[CT]{core network and terminals}
\newacro{qos}[QoS]{quality of service}
\newacro{wg}[WG]{working group}
\newacro{5ga}[5G-A]{5G advanced}
\newacro{bpe}[BPE]{byte pair encoding}
\newacro{mmb}[MMB]{multi-modal beamforming}
\newacro{jscc}[JSCC]{joint source-channel coding}
\newacro{sft}[SFT]{Supervised Fine-Tuning}
\newacro{dpo}[DPO]{Direct Preference Optimization}
\newacro{llm}[LLM]{Large Language Model}
\newacro{lmm}[LMM]{Large Multi-modal Model}
\newacro{fm}[FM]{Foundation Model}
\newacro{ai}[AI]{Artificial Intelligence}
\newacro{lm}[LM]{Language Modeling}
\newacro{ptlm}[PTLM]{Pre-Trained Language Model}
\newacro{nlp}[NLP]{Natural Language Processing}
\newacro{dl}[DL]{Deep Learning}
\newacro{nn}[NN]{Neural Network}
\newacro{dnn}[DNN]{Deep Neural Network}
\newacro{cnn}[CNN]{Convolutional Neural Network}
\newacro{rnn}[RNN]{Recurrent Neural Network}
\newacro{gnn}[GNN]{Graph Neural Network}
\newacro{ml}[ML]{Machine Learning}
\newacro{cv}[CV]{Computer Vision}
\newacro{ssl}[SSL]{Self-Supervised Learning}
\newacro{tl}[TL]{Transfer Learning}
\newacro{nlm}[NLM]{Neural Language Model}
\newacro{lstm}[LSTM]{Long Short-Term Memory}
\newacro{gpt}[GPT]{Generative Pre-trained Transformer}
\newacro{bert}[BERT]{Bidirectional Encoder Representation from Transformer}
\newacro{nlu}[NLU]{Natural Language Understanding}
\newacro{nlg}[NLG]{Natural Language Generation}
\newacro{t5}[T5]{Text-to-Text Transfer Transformer}
\newacro{icl}[ICL]{In-Context Learning}
\newacro{rlhf}[RLHF]{Reinforcement Learning with Human Feedback}
\newacro{mha}[MHA]{Multi-Head Attention}
\newacro{clm}[CLM]{Causal Language Modeling}
\newacro{mlm}[MLM]{Masked Language Modeling}
\newacro{plm}[PLM]{Permuted Language Modeling}
\newacro{dae}[DAE]{Denoising AutoEncoder}
\newacro{rf}[RF]{Radio Frequency}
\newacro{sota}[SOTA]{state of the art}
\newacro{rag}[RAG]{Retrieval Augmented Generation}
\newacro{moe}[MoE]{Mixture of Expert}
\newacro{peft}[PEFT]{Parameter-Efficient Fine-Tuning}
\newacro{sdo}[SDO]{Standards Developing Organization}
\newacro{cot}[CoT]{Chain-of-Thought}
\newacro{rl}[RL]{Reinforcement Learning}
\newacro{vlm}[VLM]{Visual Language Model}
\newacro{6g}[6G]{Sixth Generation}
\newacro{cv2x}[CV2X]{Cellular Vehicle-to-Everything}
\newacro{esti}[ESTI]{European Telecommunication Standards Institute}
\newacro{oran}[O-RAN]{Open Radio Access Network}
\newacro{qos}[QoS]{Quality of Service}
\newacro{3gpp}[3GPP]{Third Generation Partnership Project}
\newacro{itu}[ITU]{International Telecommunication Union}
\newacro{ran}[RAN]{Radio Access Network}
\newacro{bs}[BS]{Base Station}
\newacro{its}[ITS]{Intelligent Transport System}
\newacro{rrm}[RRM]{Radio Resource Management}
\newacro{lora}[LoRA]{Low Rank Adaptation}
\newacro{mlp}[MLP]{Multi-Layer Perceptron}
\newacro{vit}[ViT]{Vision Transformer}
\newacro{qat}[QAT]{Quantization Aware Training}
\newacro{ptq}[PTQ]{Post-Training Quantization}
\newacro{kv}[KV]{Key-Value}
\newacro{rleif}[RLEIF]{Reinforcement Learning from Evol-Instruct Feedback}
\newacro{v2x}[V2X]{Vehicle to Everything}
\newacro{rag}[RAG]{Retrieval Augmented Generation}
\newacro{fim}[FIM]{Fill-In-the-Middle}
\newacro{mcq}[MCQ]{Multiple-Choice Question}
\newacro{qa}[QA]{Question Answering}
\newacro{ieee}[IEEE]{Institute of Electrical and Electronics Engineers}
\newacro{urllc}[URLLC]{Ultra Reliable and Low Latency Communication}
\newacro{kl}[KL]{Kullback-Leibler}
\newacro{cdf}[CDF]{Cumulative Density Function}
\newacro{rrm}[RRM]{Radio Resource Management}
\title{Accelerating Private Large Transformers Inference through Fine-grained Collaborative Computation}
\author{
Yuntian Chen,
        % <-this % stops a space
Zhanyong Tang,
        % <-this % stops a space
Tianpei Lu, 
        % <-this % stops a space
Bingsheng Zhang, 
        % <-this % stops a space
Zhiying Shi, 
        % <-this % stops a space
Zheng Wang
        % <-this % stops a space
% \thanks{Yuntian Chen, Zhanyong Tang, and Zhiying Shi are with the School of Information Science and Technology, Northwest University, Xi’an 710127,  China ( e-mail: zytang@nwu.edu.cn ).\\
% Tianpei Lu and Bingsheng Zhang are with the College of Computer Science and Technology, Zhejiang University, Hangzhou 310058, China.\\
% Zheng Wang is with the School of Computer Science, University of Leeds, Leeds, LS2 9JT, United Kingdom.
% }

\thanks{Yuntian Chen, Zhanyong Tang, and Zhiying Shi are with the School of Information Science and Technology, Northwest University, Xi’an 710127,  China (e-mail: chenyt$\_$x@163.com, zytang@nwu.edu.cn, shizhiying@stumail.nwu.edu.cn).\\
% * is the corresponding author.\\
Tianpei Lu and Bingsheng Zhang are with the College of Computer Science and Technology, Zhejiang University, Hangzhou 310058, China (e-mail: lutianpei@zju.edu.cn, bingsheng@zju.edu.cn).\\
Zheng Wang is with the School of Computer Science, University of Leeds, Leeds, LS2 9JT, United Kingdom (e-mail: z.wang5@leeds.ac.uk).
}
}
\begin{document}

\maketitle

\begin{abstract}
Homomorphic encryption (HE) and secret sharing (SS) enable computations on encrypted data, providing significant privacy benefits for large transformer-based models (TBM) in sensitive sectors like medicine and finance. However, private TBM inference incurs significant costs due to the coarse-grained application of HE and SS. We present \SystemName, a new approach to accelerate private TBM inference through fine-grained computation optimization. Specifically, through the fine-grained co-design of homomorphic encryption and secret sharing, \SystemName achieves efficient protocols for matrix multiplication, SoftMax, LayerNorm, and GeLU. In addition, \SystemName introduces a precise segmented approximation technique for differentiable non-linear, improving its fitting accuracy while maintaining a low polynomial degree. Compared to solution BOLT (S\&P'24), \SystemName shows a remarkable 54\% to 64\%  decrease in runtime and an impressive 72.2\% reduction in communication costs.
\end{abstract}

\begin{IEEEkeywords}
Secure multiparty computation, homomorphic encryption, privacy preserving, large transformer models.
\end{IEEEkeywords}

\section{Introduction}
\IEEEPARstart{D}{eep} learning models are often hosted in the cloud to provide inference services~\cite{DLaaS2023}. This provides a way for the service provider to protect the intellectual property of the trained model. However, sending user data to an untrusted cloud provider can raise privacy concerns as this can compromise data confidentiality \cite{singh2018data}. 

Homomorphic encryption (HE) \cite{HE2024} and secure multi-party computation (MPC) \cite{MPC2024} are two techniques for privacy-preserving deep learning inference. HE enables models to compute on encrypted user inputs and return encrypted results that are decrypted only by the user. MPC allows multiple parties to jointly compute functions on their private inputs by splitting data into shared pieces, ensuring no party has access to the full dataset or intermediate results, thereby maintaining privacy throughout the inference process.

Unfortunately, HE and MPC also have drawbacks, as HE can massively slow down model performance, while MPC introduces substantial communication overhead. Previous efforts have sought to combine HE and MPC by using HE for matrix multiplication - the dominant computation pattern in deep neural networks (DNNs) - and MPC for other DNN operations. This approach aims to leverage the strengths of both techniques. However, it still faces significant computational overhead, as matrix multiplication can account for over 90\% of inference time in modern DNN architectures like Transformer-based models (TBMs) \cite{Attention2017}.

\begin{table}[t!]    
\caption{\footnotesize 
The performance percentage of each operator in BOLT under LAN network setting.}
    \centering \footnotesize
    \begin{tabular}{c|c|ccc}
    \toprule
          & \textbf{HE} & \multicolumn{3}{c}{\textbf{SS}}  \\
        \midrule
       \textbf{Operations}  & \textbf{Linear} & \textbf{SoftMax} & \textbf{GeLU} & \textbf{LayerNorm} \\
       \textbf{Runtime}  & $74.5\%$ & $9.3\%$ & $8.1\%$ &  $8.1\%$ \\
       \textbf{Comm.}  & $1.2\%$ & $40.7\%$ & $41.3\%$ & $16.8\%$\\
       \bottomrule
    \end{tabular}

    \label{tab: bolt_cost}
\end{table}

As a concrete example, Table \ref{tab: bolt_cost} reports the results for applying BOLT \cite{BOLT2024}. 
The high volume of linear operations, specifically matrix multiplications, is one of the primary contributors to inference latency. Non-linear operators, including SoftMax, GeLU, and LayerNorm, are major contributors to increased communication costs, especially in challenging network environments where they become the primary performance bottleneck. End-to-end inference experiments show that although BumbleBee \cite{Bumblebee2023} exhibits advantages in certain linear operations, \SystemName's superior performance in non-linear operations makes it more competitive overall.  

The design of more practical private TBM inference is hindered by the following challenges:

\begin{enumerate}

    \item \textbf{High inference latency}: The attention mechanism has significantly enhanced the learning and analysis capabilities of the model compared to traditional convolution. However, this comes at the cost of substantially increased computational complexity, demanding matrix multiplication operations. 
    with large dimensions of $128\times768\times64$ and $128\times768\times3072$. Large-scale matrix multiplication is a significant optimization problem. Computational efficiency under ciphertext is of particular concern. Most of the existing cryptographic protocols for private matrix multiplication rely on HE. While HE offers the benefit of concise communication cost, it requires time-consuming operation rotation, leading to poor inference efficiency. 

    \textbf{The first challenge} is to design an efficient secure matrix multiplication protocol.

    \IEEEpubidadjcol
    \item \textbf{High communication cost}: Compared to the ReLU non-linear functions commonly used in convolutional neural networks, TBM employs more sophisticated non-linear functions such as SoftMax within attention mechanisms, GeLU in feed-forward networks, and LayerNorm for normalization. 
    These non-linear operations often rely on MPC techniques like secret sharing (SS) \cite{HybridSS2012}. In BERT$_\mathsf{base}$, the SoftMax function is invoked $2.36 \times 10^6$ times, requiring 300 rounds of interaction, resulting in a communication cost of 1447.65 MB.

    \textbf{The second challenge} is to design efficient protocols for non-linear with low communication costs.

    \item \textbf{High-degree, low-accuracy piecewise approximation}: A line of works~\cite{BOLT2024, Nexus2024, Puma2023, Bumblebee2023} typically uses the piecewise fitting method for non-linear e.g, GeLU, often with very similar segment endpoints. However, we found that the errors were huge at these endpoints, which was the primary reason for the decline in the accuracy of the approximation function. Moreover, the use of piecewise approximation necessitates higher-degree polynomials, thereby increasing computational complexity. For instance, Lu et.al~\cite{Puma2023, Bumblebee2023} use sixth-degree polynomials for piecewise approximation of the GeLU function.
    
    \textbf{The third challenge} is to develop piecewise functions with higher precision while maintaining low-degree polynomial coefficients.

\end{enumerate}

Based on the above analysis, we observe a pattern: existing private TBM inference solutions employ HE techniques to design secure matrix multiplication protocols for linear operators and SS techniques to design secure protocols for non-linear operators such as SoftMax, GeLU, and LayerNorm. In other words, after distinguishing between linear and non-linear operators at a high level, different cryptographic primitives are independently used to design secure protocols. While this classic strategy has effectively advanced the inference latency and communication costs of private TBM inference, the independent use of HE and SS is more likely to reveal the drawbacks of slow computation in HE and high communication costs in SS. 

Motivated by this, we aim to provide a highly efficient and communication-friendly solution, called \SystemName, for private TBM inference. \SystemName integrates HE and SS at a finer granularity, proposing novel protocols that eliminate the high-level distinction between linear and nonlinear operations, such as Matmul-LayerNorm and Matmul-SoftMax. Rather than applying HE solely to Matmul and SS to SoftMax, LayrNorm, and GeLU, we break down these operator boundaries and choose between HE and SS at a more granular level. 
% We have observed significant benefits by applying this approach to the design of secure computation protocols for Matmul, Softmax, GeLU, and LayNorm. 
Specifically, the strategic integration of SS into HE-based Matmul protocol eliminates the computationally expensive rotation operation without increasing computational overhead. Moreover, using HE techniques in intermediate steps of non-linear leads to a substantial reduction in communication overhead. For example, in the exponential calculation, we can compute the exponential of additive SS, package intermediate results into SIMD-based HE ciphertexts, and perform ciphertext-plaintext addition to obtain the result, which is known to be more efficient than ciphertext-ciphertext multiplications~\cite{cong2022CCismoreefficient} and we just need to introduce little acceptable communication costs. Compared to Cryptflow2 \cite{Cryptflow22020}, which demands 117 rounds of interaction and incurs a communication cost of $592$KB for the computation of the exponential of a $128 \times 128$ dimensional, our method only needs to send a single SIMD-based ciphertext (approximately $340$KB).

Piecewise approximation of non-linear functions is another focus of our work. The concavity, convexity, and curvature of a function are usually determined by inflection points, which are the points where the second derivative of the function is zero. At these points, the curvature of the function is usually small, indicating that the function tends to be smooth. Therefore, we select these special points as segment endpoints for function fitting, which can solve the problem of large errors at segment endpoints in traditional methods. An additional benefit is that piecewise functions have lower polynomial degrees. This approach applies to GeLU and can be extended to other differentiable non-linear functions such as the conventional activate function, e.g. Sigmoid, Tanh, and Mish.

To summarize, our contributions are as follows:
\begin{enumerate}
    \item \textbf{Novel protocol based on fine-grained integration of HE and SS }: We propose new 2PC protocols, including secure matrix multiplication $\Pi_{\mathsf{matmul}}$, secure Softmax $\Pi_{\mathsf{softmax}}$, secure layerNorm $\Pi_{\mathsf{ln}}$, and secure GeLU $\Pi_{\mathsf{gelu}}$. Compared to BOLT, our protocols achieve $40 \times$, $14\times$, $2\times$, and $2.2 \times$ speedups, respectively. While reducing communication cost of $\Pi_{\mathsf{softmax}}$, $\Pi_{\mathsf{ln}}$ and $\Pi_{\mathsf{gelu}}$ by $260\times$, $5.5\times$, and $3\times$, respectively. 
% \vspace{-0.3em}
    \item \textbf{Better piecewise approximation for non-linear functions}: A novel piecewise approximation method is proposed for the differentiable nonlinear functions, offering enhanced accuracy with a lower polynomial degree. Such as Tanh uses polynomials of degree 4 to match the accuracy of BOLT, which uses polynomials of degree 5.

% \vspace{-0.3em}
    \item \textbf{End-to-end private inference performance improvement}: 
    We implemented the end-to-end private TBM inference and conducted extensive experiments to evaluate the performance of our method thoroughly. \SystemName significantly reduces communication by about $50$GB and end-to-end inference time by $2.2$ to $24$ minutes across various network environments. The code of \SystemName is available on the anonymous GitHub\footnote{https://anonymous.4open.science/r/FASTLMPI}.
   
\end{enumerate}

\section{PRELIMINARIES}
\subsection{Notations}
\label{sec:notations}
                                          
In this paper, we focus on 2PC setting $\mathcal{P} \in \{\mathcal{A}, \mathcal{B}\}$. We use $\mathbf{x} := \{x_0,\ldots, x_{d_m - 1}\} \in \mathbb{Q}^{d_m}$ \footnote{$\mathbb{Q}$ represents the set of all rational numbers.} to denote a vector of length $m$, and $x_i$ represents the $i$-th element of the vector. $\mathbf{X} \in \mathbb{Q}^{d_m*d_n} $ represents a matrix, where $d_m$ denotes the number of rows, $d_n$ denotes the number of columns, and $x_{ij}$ represents the $(i, j)$-th entry of matrix $\mathbf{X}$. We use $[\![\mathbf{m}]\!]_\mathcal{P}$ denotes the message encrypted by participant $\mathcal{P}$. $\langle m\rangle_\mathcal{P}$ and $\langle m \rangle^B_\mathcal{P}$ denote additive sharing and boolean sharing of $m$ held by $\mathcal{P}$, respectively. We use $\boxplus$ and $\boxdot$ to denote homomorphic addition and homomorphic multiplication computations. For plaintext, we denote the Hadamard product by $\odot$, element-wise addition by $\oplus$, matrix multiplication (or vector inner product) by $\otimes$, and XOR is denoted using the symbol $\wedge$. $\widetilde{\mathbf{X}}$ denotes a rearrangement of matrix $\mathbf{X}$ based on different requirements, while $\widehat{\mathbf{X}}$ denotes a copy of the $\mathbf{X}$. $\mathbb{Z}_p$ denotes prime fields ($p$ is prime), and $\mathbb{Z}_{2^k}$ denotes the ring. Given a real number $\overline{x}$, we use fixed-point encoding with $s$ bit of precision, denotes $x = \lfloor \overline{x} \cdot 2^s \rfloor$.

\subsection{TBM architecture}

In most cases, TBM uses the decoder-only
mode, exemplified by models such as $\rm BERT_\mathsf{base}$, as shown in Figure \ref{Fig_Bert}. The following section discusses the key operators of a single $\rm BERT_\mathsf{base}$ block.

\begin{figure}[h!]
    \centering
    \includegraphics[width=0.9\linewidth]{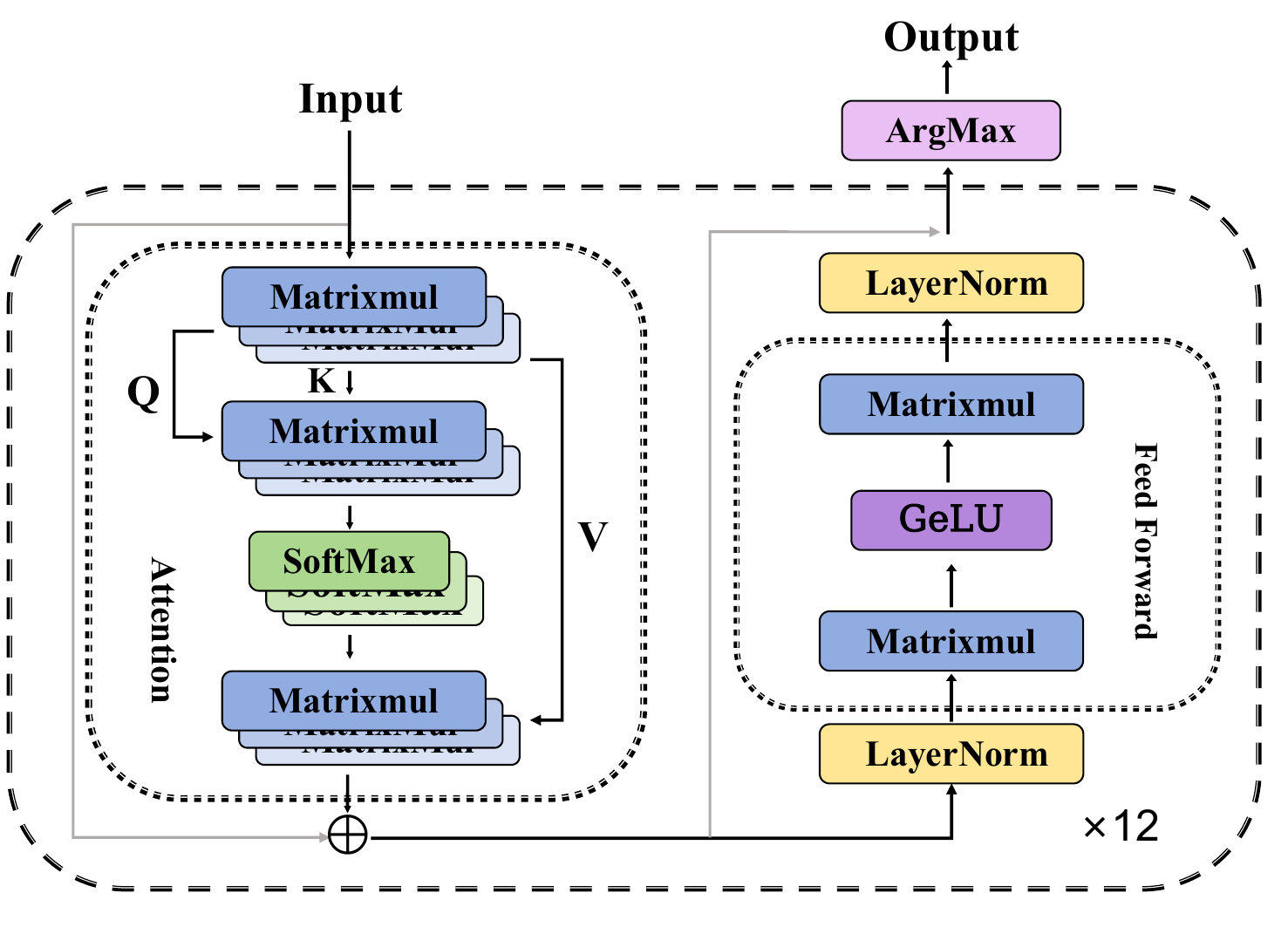}
    % \captionsetup{justification=centering}
    \caption{The architecture of $\rm BERT_\mathsf{base}$.}
    \label{Fig_Bert}
\end{figure}

\subsubsection{Attention Layer}
\label{sec:atten}

Given matrices $\mathbf{Q}$ (query), $\mathbf{K}$ (key), $\mathbf{V}$ (value) in the latent represent, This attention layer can be formalized as Eq.  
\eqref{eq:attention} :
\begin{equation}
\label{eq:attention}
Attention(\mathbf{Q,K,V})=SoftMax\left(\frac{\mathbf{Q \otimes K}^T}{\sqrt{d_k}}\right)\mathbf{V}
\end{equation}
where $\mathbf{Q, K, V} \in \mathbb{Q}^{d_s \times d_k}$. The SoftMax function applies to each row vector of a matrix. For a vector $\mathbf{x}$ $\in \mathbb{Q}^{d_s}$. The SoftMax on $\mathbf{x}$ returns a vector $\mathbf{y}$ $ \in \mathbb{Q}^{d_s}$ can be described as Eq. \eqref{eq:Softmax}:
\begin{equation}
\label{eq:Softmax}
\mathbf{y} = SoftMax(x_i) = \frac{e^{x_i}}{\sum{_{j=0}^{k-1}}{e^{x_j}}}
\end{equation}

The multi-head attention can be further extended by employing the attention mechanism. Considering $h$ different attention mechanisms with weight matrices $\mathbf{W}_i^Q, \mathbf{W}_i^K, \mathbf{W}_i^V$ for $i = 1, ..., h$, we have
\begin{equation}
\label{eq:head}
\mathbf{h}_i = Attention (\mathbf{Q \otimes W}_i^Q, \mathbf{K \otimes W}_i^K, \mathbf{V \otimes W}_i^V)
\end{equation}
where $\mathbf{W}_i^Q, \mathbf{W}_i^K, \mathbf{W}_i^V \in \mathbb{Q}^{d_k \times d_m}$, and $h_i \in \mathbb{Q}^{d_s \times d_m}$. Then, the multi-head attention can be formalized as Eq. \eqref{eq: Multi-attention}:
\begin{equation}
\label{eq: Multi-attention}
MH(\mathbf{Q, K, V}) = (\mathbf{h}_1||\mathbf{h}_2|| \ldots||\mathbf{h}_h) \odot \mathbf{W}^O
\end{equation}
where "||" denotes the horizontal concatenation of matrices, $\mathbf{W}^O \in \mathbb{Q}^{hd_k \times d_m}$, and MH($\mathbf{Q, K, V}$) $\in \mathbb{Q}^{d_s \times d_m}$

\subsubsection{LayerNorm Layer}\label{sec:LNL}
Let $ \mathbf{x} := (x_1, x_2, ..., x_k)$ be a vector representation of an input of size $d_k$ to normalization layers.
LayerNorm re-centers and re-scales input $\mathbf{x \in \mathbb{Q}^{d_k}}$ as Eq.\eqref{eq: LayerNorm}:
\begin{equation}
\label{eq: LayerNorm}
LayerNorm(x_i) = \mathbf{\gamma} \cdot \frac{x_i - \mu}{\sigma} + \mathbf{\beta}
\end{equation}
where $\mu = \frac{1}{N} \sum_{i=1}^{N} x_i $, and $\sigma = \sqrt{\frac{1}{N} \sum_{i=1}^{N} {(x_i - \mu)}^2}$. $\mu$ and $\sigma$ are the mean and standard deviation of $\mathbf{x}$. 

\subsubsection{Feed-forward Layer (FFN)}\label{sec: FFN} The FFN consists of two fully connected layers and is separated by a GeLU activation. Let $\mathbf{X} \in \mathbb{Q}^{d_{s \times d_m}}$ be an input, where $d_s$ is the sequence length. The formula is as Eq. \eqref{eq:FFN}:
\begin{equation}
\label{eq:FFN}
FFN(\mathbf{X})  = GeLU(\mathbf{X \otimes W}^{F1} \oplus \mathbf{B}^{F1}) \otimes \mathbf{W}^{F2} \oplus \mathbf{B}^{F2}
\end{equation}
where $\mathbf{W}^{F1} \in \mathbb{Q}^{d_m \times d_f}$, $\mathbf{W}^{F2} \in \mathbb{Q}^{d_m} \times d_f$ are the weight matrices and $\mathbf{B}^{F1} \in \mathbb{Q}^{d_f}$ $\mathbf{B}^{F2} \in
\mathbb{Q}^{d_m}$ are the bias vectors for the two layers of FFN. GeLU is the Gaussian
Error Linear Unit activation function, which returns a vector $\mathbf{y} \in \mathbb{Q}^{d_k}$ for the input vector $\mathbf{x} \in \mathbb{Q}^{d_k}$, as the Eq. \eqref{eq: GeLU}:
\begin{equation}
\label{eq: GeLU}
% GeLU(x_i) = \frac{x_i}{2}
% \cdot (1 + tanh(\sqrt{\frac{2}{\pi}} \cdot (x_i + 0.044715 \cdot x_i^3)))
GeLU(x_i) = x_i \cdot \Phi(x_i) = x_i \cdot \frac{1}{2} \left[ 1 + erf\left(\frac{x_i}{\sqrt{2}}\right)\right]
\end{equation}
where $\Phi(x_i)$ is the standard Gaussian cumulative distribution function.

\subsection{Threat Model}\label{sec:thread_model}
This work considers two-party computation secure against a semi-honest adversary~\cite{FIT22024}. The security is provided in simulation paradigm~\cite{lindell2017simulate} against static and semi-honest probabilistic polynomial-time (PPT) adversary $\mathsf{Adv}$. That is, the adversary $\mathsf{Adv}$ passively corrupts either $\mathcal{A}$ or $\mathcal{B}$ at the beginning of the protocol to learn something more than expected during interactions but honestly follows the protocol specification. The formal definition of the semi-honest security model is that for protocol $\Pi$ (corresponding to the functionality $\mathcal{F}$), we assume that there is a simulation $\mathsf{Sim}$ can build a real world where the views (including all the intermediate values and outputs) of adversary $\mathsf{Adv}$ are computationally indistinguishable with the views in the real world:

\begin{equation}
\label{eq:semihonest}
\begin{split}
    \{View_\mathsf{Adv}^\Pi(\lambda, \langle x \rangle_\mathcal{A}, \langle x \rangle_\mathcal{B}),
    output^{\Pi}(\mathbf{y})\}
    \stackrel{\text{c}}{\approx} \\
    \mathcal{S}(k, \{\langle x \rangle_\mathcal{P}, \mathcal{F}(\langle x \rangle_\mathcal{A}, \langle x \rangle_\mathcal{B}
     ))
\end{split}
\end{equation}
where $\lambda$ is the secure parameter. $View_\mathsf{Adv}^\Pi$ is the view of $\mathcal{P}$ in the execution of $\Pi$ on $x$, $output^\Pi$ is the output of all parties. A detailed security proof of our proposed protocol is given in Appendix \ref{appendix: proof}.

\subsection{Cryptographic Primitives}

\subsubsection{Fully Homomorphic Encryption}

Fully Homomorphic Encryption (FHE) is first proposed by Rivest et al. \cite{FHEgentry1978}. We chose the Brakerski-Fan-Vercauteren (BFV) scheme \cite{BFV2012, BFV2016}, whose security is based on the Ring Learning With Errors (RLWE) problem proposed by  Lyubashevsky et al. \cite{RLWE02010}. The BFV scheme has five algorithms, 
(\textbf{KeyGen}, \textbf{Encrypt}, \textbf{Decrypt}, \textbf{HAdd}, \textbf{HMult}, \textbf{Square}). \textbf{KeyGen} generates the keys used in the FHE scheme given the parameters chosen. \textbf{Encrypt} and \textbf{Decrypt} are the encyption and decryption algorithms respectively. \textbf{HAdd}, \textbf{HMult}, \textbf{Square} denote homomorphic addition, homomorphic multiplication operations and homomorphic squaring operations, respectively. A formal description with complete details can be found in~\cite{SOMEWHAT2012}. BFV scheme supports packing into ciphertext in a single-instruction-multiple-data (SIMD) \cite{SIMD2014}. We adopted this technique to encrypt the messages.

\subsubsection{Secret Sharing Scheme}

Secret Sharing (SS) \cite{ABY32018}: Allowing a set of $2$ participants $\mathcal{P} \in \{\mathcal{A}, \mathcal{B}\}$ to share a piece of information in such a way that only authorized subsets of the participants can recover the secret.

We have the \textbf{Additive Scheme}: For every secret $s \in \mathbb{Z}_{2^k}$, an additive scheme satisfies $s =\langle s \rangle_\mathcal{A} + \langle s \rangle_\mathcal{B}$ mod $2^k$. When $k=1$, the boolean secret sharing of $s$ can be represented as $\langle s \rangle^B_\mathcal{P}$, satisfying $s  =\langle s \rangle^B_\mathcal{A}$ $\wedge $$\langle s \rangle^B_\mathcal{B}$ mod $\mathbb{Z}_2$.

\subsubsection{Conversion between $\mathbb{Z}_\mathsf{2^k}$ and $\mathbb{Z}_\mathsf{p}$}

FASTLMPI uses fixed-point number expression (Similar to \cite{Privformer2023}, \cite{Pencilfp2024}, \cite{ScaleMPCfp2024}) and is based on the collaborative computation of HE and SS. It is worth noting that the HE is over $\mathbb{Z}_{p}$, while computation in SS is performed over $\mathbb{Z}_{2^k}$. We need to convert between $\mathbb{Z}_{p}$ and $\mathbb{Z}_{2^k}$. In detail, converting from $\mathbb{Z}_{p}$ to $\mathbb{Z}_{2^k}$ requires comparison and a multiplexer. When converting from $\mathbb{Z}_{2^k}$ to $\mathbb{Z}_{p}$, the probability that the sharing overflows is $\frac{|x|}{2^k}$. If $x$ is small or the ring size is large enough, the conversion from $\mathbb{Z}_{2^k}$ to $\mathbb{Z}_{p}$ can be omitted without affecting the accuracy.

\section{SYSTEM DESCIRPTION}
\label{sec:Sysdescirption}

In this section, we introduce the building blocks in FASTLMPI, which include four 2PC protocols: Secure Matrix Multiplication $\Pi_{\mathsf{matmul}}$, secure Softmax $\Pi_\mathsf{softmax}$, secure LayerNorm $\Pi_\mathsf{ln}$, fitting method of differentiable non-linear functions and secure GeLU $\Pi_\mathsf{gelu}$. FASTLMPI is built in a 2-PC scenario where Alice ($\mathcal{A}$) represents the client and Bob ($\mathcal{B}$) represents the server. It relies on the fact that each party $\mathcal{P}$ holds their data as input, and Alice and Bob can generate their private and public keys in the initial phase. FASTLMPI completes the protocol calculation when Alice and Bob interact to obtain the final privacy inference result.

\subsection{Secure Matrix multiplication $\Pi_\mathsf{matmul}$}

This section details a secure matrix multiplication protocol. In the initial stage, party $\mathcal{A}$ holds the input matrix $\mathbf{X} \in \mathbb{Z}_p^{d_m*d_n}$, party $\mathcal{B}$ holds the parameter $\mathbf{W}^I \in \mathbb{Z}_p^{d_n*d_h}$, where $I \in \{Q, K, V\}$. In $\rm BERT_\mathsf{base}$, $\mathbf{X}$ and $\mathbf{W}^I$ can be packaged and directly implemented by SIMD-based homomorphic ciphertext-plaintext multiplication to obtain $\mathbf{Q} \in \mathbb{Z}_p^{d_m*d_h}$, $\mathbf{K} \in \mathbb{Z}_p^{d_m*d_h}$ and $\mathbf{V} \in \mathbb{Z}_p^{d_m*d_h}$, which are in SIMD ciphertext form. However, $\mathbf{Q}$, $\mathbf{K}$, and $\mathbf{V}$ can be constructed in the form of additive secret sharing. $\mathcal{B}$ generates the uniformly random $\mathbf{R} \in \mathbb{Z}_p^{d_m*d_h}$ locally and constructs $\langle \mathbf{Q} \rangle_\mathcal{A}= \mathbf{Q} \boxminus \mathbf{R}$ , then $\langle \mathbf{Q} \rangle_\mathcal{B} = \mathbf{R}$. $\langle \mathbf{K} \rangle_\mathcal{A}$ and $\langle \mathbf{K} \rangle_\mathcal{B}$ can be obtained using the same way. Then $\mathbf{M} \in \mathbb{Z}_p^{d_m*d_m}= \mathbf{Q} \otimes \mathbf{K}^T$\footnote{$T$ denotes the transpose of the matrix} can be expressed as:

\begin{equation}
\begin{split}
\mathbf{M} &= \overbrace{\left(\langle \mathbf{Q} \rangle_\mathcal{A} \otimes \langle \mathbf{K}^T \rangle_\mathcal{A} \right)}^{\text{$\mathcal{A}$ can compute locally}} \oplus   \overbrace{\left( \langle \mathbf{Q} \rangle_\mathcal{B} \otimes \langle \mathbf{K}^T \rangle_\mathcal{B} \right)}^{\text{$\mathcal{B}$ can compute locally}}\\ 
&\quad \oplus \overbrace{\left( \langle \mathbf{Q} \rangle_\mathcal{B} \otimes \langle \mathbf{K}^T \rangle_\mathcal{A} \right)}^{\text{$\mathcal{A}$ and $\mathcal{B}$ compute interactively}}
 \oplus \overbrace{\left( \langle \mathbf{Q} \rangle_\mathcal{A} \otimes \langle \mathbf{K}^T \rangle_\mathcal{B} \right)}^{\text{$\mathcal{A}$ and $\mathcal{B}$ compute interactively}}\\ 
\end{split}
\label{eq: Matmul}
\end{equation}

It's worth noting that we need to deal with the part of interactive computation such as $\mathbf{X} \otimes \mathbf{W}$, $\langle \mathbf{Q} \rangle _\mathcal{A} \otimes  \langle \mathbf{K}^T \rangle_\mathcal{B}$ and $\langle \mathbf{Q} \rangle _\mathcal{B} \otimes  \langle \mathbf{K}^T \rangle_\mathcal{A} $. In subsequent descriptions, we will uniformly describe the matrices involved in the interactive computation: Alice holds matrix $\mathbf{A}\in \mathbb{Z}_p^{d_m*d_n} $, and Bob holds matrix $\mathbf{B} \in \mathbb{Z}_p^{d_n*d_k}$. $\mathbf{A}$ and $\mathbf{B}$ as illustrated in Eq. \eqref{eq: MatAB}.

\begin{equation}\scriptsize
\mathbf{A}=\begin{tikzpicture}[baseline=(m.center)]
 \matrix (m) [matrix of nodes, left delimiter={\{}, right delimiter={\}}] {
$a_{00}$ & $a_{01}$ & $\cdots$ & $a_{0n}$ & $= \mathbf{A}_1$ \\
$a_{10}$ & $a_{11}$ & $\cdots$ & $a_{1n}$ & $\vdots$ \\
$\vdots$ & $\vdots$ & $\ddots$ & $\vdots$ & $\vdots$\\
$a_{m0}$ & $a_{m1}$ & $\cdots$ & $a_{mn}$ & $= \mathbf{A}_m$\\
};
\draw[gray, thick] (m-1-1.north west) rectangle (m-1-4.south east); 
\draw[gray, thick] (m-4-1.north west) rectangle (m-4-4.south east); 
 \end{tikzpicture}
,
\mathbf{B}=\begin{tikzpicture}[baseline=(m.center)]
\matrix (m) [matrix of nodes, left delimiter={\{}, right delimiter={\}}] {
$b_{00}$ & $b_{01}$ & $\cdots$ & $b_{0k}$ \\
$b_{10}$ & $b_{11}$ & $\cdots$ & $b_{1k}$ \\
$\vdots$ & $\vdots$ & $\ddots$ & $\vdots$ \\
$b_{n0}$ & $b_{n1}$ & $\cdots$ & $b_{nk}$ \\
};
% \draw[red, thick] (m-1-1.north west) rectangle (m-4-1.south east);
\end{tikzpicture}
\label{eq: MatAB}
\end{equation}

$\mathbf{A}$ is first flattened row-wise to obtain $\mathbf{A}_i$, $i \in \{1, \ldots, m\}$, which is then copied $k$ times to form $\widetilde{\mathbf{A}}$. Meanwhile $\mathbf{B}$ copied $m$ times form $\widehat{\mathbf{B}}$. $\widetilde{\mathbf{A}}$ and $\widehat{\mathbf{B}}$ as shown in Eq. \eqref{eq: Mcopy} and Eq. \eqref{eq: Hcopy}, respectively.

\begin{equation}\footnotesize
\widetilde{\mathbf{A}} = \begin{bmatrix}
(\mathbf{A}_1^T  & \cdots &\mathbf{A}_1^T )_{d_k}&
(\mathbf{A}_2^T 
\cdots \mathbf{A}_2^T )_{d_k}&
(\mathbf{A}_m^T 
\cdots \mathbf{A}_m^T)_{d_k}
\end{bmatrix}
\label{eq: Mcopy}
\end{equation}

\begin{equation}\footnotesize
\widehat{\mathbf{B}} = \begin{bmatrix}
(\mathbf{B} & \mathbf{B} & \cdots &\mathbf{B})_{d_m}
\end{bmatrix}
\label{eq: Hcopy}
\end{equation}

As $\mathbf{A}$ is $\mathcal{A}$'s private data, it cannot be directly sent to $\mathcal{B}$. To preserve data confidentiality, we need to encrypt $\widetilde{\mathbf{A}}$ and send $[\![\widetilde{\mathbf{A}}]\!]_\mathcal{A}$ to $\mathcal{B}$. $\mathcal{B}$ computes $\sum_{i=0}^m ([\![\widetilde{a}_{ij}]\!]_\mathcal{A}\times \widehat{b}_{ij})$ to get $[\![\mathbf{C}]\!]_\mathcal{A}$, which is shown in Eq. \eqref{eq: MatMH}. 

\begin{equation}\footnotesize
[\![\mathbf{C}]\!]_\mathcal{A} =\begin{tikzpicture}[baseline=(m.center)]
 \matrix (m) [matrix of nodes, left delimiter={\{}, right delimiter={\}}] {
$[\![\widetilde{a}_{00}]\!]_\mathcal{A}  \widehat{b}_{00}$ & $[\![\widetilde{a}_{01}]\!]_\mathcal{A}  \widehat{b}_{01}$ & $\cdots$ & $[\![\widetilde{a}_{0 (m \times k)}]\!]_\mathcal{A}  \widehat{b}_{0 (m \times k)}$\\
 &  & $\boxplus$ &  \\
$[\![\widetilde{a}_{10}]\!]_\mathcal{A}  \widehat{b}_{10}$ & $[\![\widetilde{a}_{11}]\!]_\mathcal{A}  \widehat{b}_{11}$ & $\cdots$ & $[\![\widetilde{a}_{1(m \times k)}]\!]_\mathcal{A}  \widehat{b}_{1 (m \times k)}$\\
 &  & $\boxplus$ &  \\
$\vdots$ & $\vdots$ & $\ddots$ & $\vdots$ \\
 &  & $\boxplus$ &  \\
$[\![\widetilde{a}_{m0}]\!]_\mathcal{A}  \widehat{b}_{m0}$ & $[\![\widetilde{a}_{m1}]\!]_\mathcal{A}  \widehat{b}_{m1}$ & $\cdots$ & $[\![\widetilde{a}_{m (m \times k)}]\!]_\mathcal{A}  \widehat{b}_{m (m \times k)}$\\
};
% \draw[red, thick] (m-1-1.north west) rectangle (m-1-4.south east); 
\end{tikzpicture}
\label{eq: MatMH}
\end{equation}

Fig \ref{Fig: Boxmatmul} shows a toy example with $\mathbf{A}^{3*4}$ and $\mathbf{B}^{4*2}$.

\begin{figure}[h]
    \centering
    \includegraphics[width=0.9\linewidth]{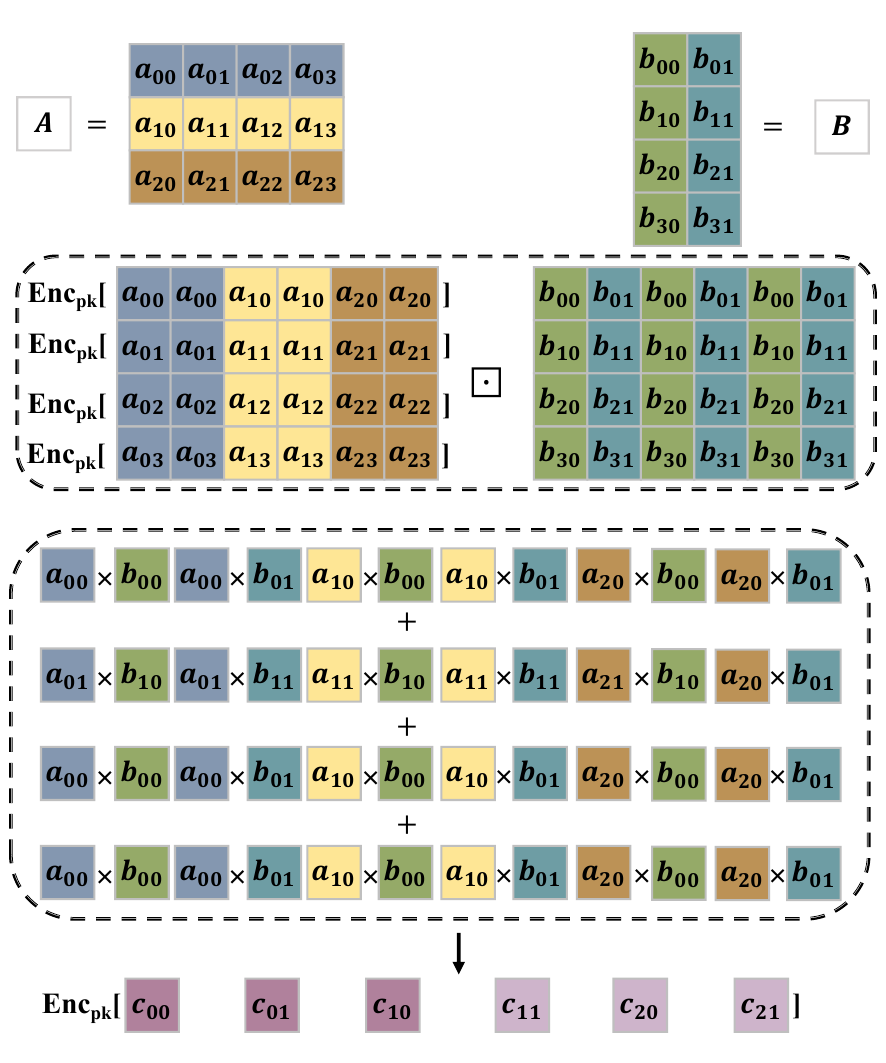}
     \caption{A toy example of SIMD-based matrix multiplication}
    \label{Fig: Boxmatmul}
\end{figure}
The matrix multiplication result obtained using protocol $\Pi_\mathsf{matmul}$ is a SIMD-based ciphertext without any wasted ciphertext slots. Compared with plaintext matrix multiplication, no additional multiplication operations are introduced. Employing the method mentioned earlier, the SIMD-based ciphertext $[\![\mathbf{C}]\!]_\mathcal{A}$ can be expressed as an additive secret sharing. This ensures that subsequent matrix multiplications can still be efficiently performed using homomorphic ciphertext-plaintext multiplication while remaining rotate-free. The protocol process is shown in Fig. \ref{pro: matmul}. Now, $\langle \mathbf{Q} \rangle_\mathcal{A}$, $\langle \mathbf{K} \rangle_\mathcal{A}$ and $\langle \mathbf{Q} \rangle_\mathcal{B}$, $\langle \mathbf{K} \rangle_\mathcal{B}$ can be computed by $\Pi_\mathsf{matmul}(\mathbf{X} \otimes \mathbf{W}_Q)$, $\Pi_\mathsf{matmul}(\mathbf{X} \otimes \mathbf{W}_K)$, respectively. Alice and Bob locally compute $\langle \mathbf{Q} \rangle_\mathcal{A} \odot \langle \mathbf{K} \rangle_\mathcal{A})$ and $\langle \mathbf{Q} \rangle_\mathcal{B} \odot \langle \mathbf{K} \rangle_\mathcal{B}$, respectively. Subsequently, Bob converts $\langle \mathbf{Q} \rangle_\mathcal{B} \odot \langle \mathbf{K} \rangle_\mathcal{B}$ to additive secret-sharing $\langle \mathbf{L} \rangle _\mathcal{A} = (\langle \mathbf{Q} \rangle_\mathcal{B} \odot \langle \mathbf{K} \rangle_\mathcal{B}) \ominus \mathbf{E}$ and $\langle \mathbf{L} \rangle _\mathcal{B} = \mathbf{E}$ ($\mathbf{E}$ is a uniformly random matrix sampled from $\mathbb{Z}_{p}$, and has the same dimension with $\langle \mathbf{Q} \rangle_\mathcal{B} \odot \langle \mathbf{K} \rangle_\mathcal{B}$). Then Alice can obtain $\langle \mathbf{M} \rangle_\mathcal{A}$:
\begin{equation}\begin{split}
        \langle \mathbf{M} \rangle_\mathcal{A} &= (\langle \mathbf{Q} \rangle_\mathcal{A} \odot \langle \mathbf{K} \rangle_\mathcal{A}) \oplus \Pi_\mathsf{matmul}(\langle \mathbf{Q}\rangle_\mathcal{A} \otimes \langle \mathbf{K}\rangle_\mathcal{B}) \\
        &
        \oplus \Pi_\mathsf{matmul}(\langle  \mathbf{Q} \rangle_\mathcal{B} \boxtimes \langle \mathbf{K}\rangle_\mathcal{A}) \oplus \langle \mathbf{L} \rangle _\mathcal{A} 
\end{split}
\end{equation}
and Bob can obtain $\langle \mathbf{M} \rangle_\mathcal{B}$. :
\begin{equation}\begin{split}
        \langle \mathbf{M} \rangle_\mathcal{B} &=   \Pi_\mathsf{matmul}(\langle \mathbf{Q}\rangle_\mathcal{A} \otimes \langle \mathbf{K}\rangle_\mathcal{B}) 
        \oplus \Pi_\mathsf{matmul}(\langle  \mathbf{Q} \rangle_\mathcal{B} \boxtimes \langle \mathbf{K}\rangle_\mathcal{A})  \\&
        \oplus
        \langle \mathbf{L} \rangle _\mathcal{B} 
\end{split}
\end{equation}

We opt for communication over the HE rotation. A single round of communication allows HE ciphertexts to be decomposed into the form of additive secret sharing, thus eliminating the need for rotation operations. Moreover, for specific matrix sizes, the communication cost is also acceptable.

%############## input the protocol file %###############
 
\begin{figure}[h]
    \centering\scalebox{0.82}{
    \begin{tikzpicture}
        \node[line width=1pt,draw, rectangle, fill=gray!20, rounded corners = 3pt, minimum height=0.5cm, minimum width=2.4cm] at (-0.45, 5.81) 
        {Protocol $\Pi_\mathsf{matmul}$};
        \node[line width=1pt, draw, rectangle, rounded corners = 5pt, minimum height=6cm, minimum width=7cm, text width=9cm, align=left] at (2.8, 1) 
        {
            \textbf{Initialization}: \\
            ~~$\mathcal{A}$ and $\mathcal{B}$ generate each private key $sk_\mathcal{A}$, $sk_\mathcal{B}$ and public key $pk_\mathcal{A}$, $pk_\mathcal{B}$.\\
            % \textbf{Preprocessing}:\\
            % ~~$\mathcal{B}$ generates uniformly random $\mathbf{R} \in \mathbb{Z}_p^{d_{m}*d_{h}}$.\\
            \textbf{Private inputs}: \\
            ~~$\bm{\mathcal{A}}$: $\mathbf{A} \in \mathbb{Z}_p^{d_{m}*d_{n}}$\\
            ~~$\bm{\mathcal{B}}$:  $\mathbf{B} \in \mathbb{Z}_p^{d_{n}*d_{h}}$\\
            \textbf{Protocol}:
            \begin{itemize}
            \resetlinenumber
            \internallinenumbers
                \item[$\bm{\mathcal{A}}$]: Flattens $\mathbf{A}$ by rows and copys $k$ times to obtain $\widetilde{\mathbf{A}}$; Encrypts each row of $\widetilde{\mathbf{A}}$ by $pk_\mathcal{A}$ yields SIMD-ciphertesxts $[\![ \widetilde{\mathbf{A}}_\mathsf{i} ]\!]_\mathcal{A}$ \footnote{$i \in \{0, 1, \ldots, m\}$}; Sends $[\![ \widetilde{\mathbf{A}}_\mathsf{i} ]\!]_\mathcal{A}$ to $\mathcal{B}$.

                 \item[$\bm{\mathcal{B}}$]: Receives $[\![ \widetilde{\mathbf{A}}_\mathsf{i} ]\!]_\mathcal{A}$; Copies $ \mathbf{B}$ $m$ time to get $\widehat{\mathbf{B}}$; Computes 
                 $[\![\mathbf{C}_\mathsf{i}]\!]_\mathcal{A} = [\![ \widetilde{\mathbf{A}}_\mathsf{i} ]\!]_\mathcal{A} \boxdot \widehat{\mathbf{B}}_\mathsf{i}$; Computes $[\![\mathbf{C}]\!]_\mathcal{A}= \sum_\mathsf{i=0}^\mathsf{m} [\![\mathbf{C}_\mathsf{i}]\!]_\mathcal{A} $; Generates uniformly random $\mathbf{R} \in \mathbb{Z}_p^{d_{m}*d_{h}}$; Flattens $\mathbf{R}$ to obtain $\widetilde{\mathbf{R}}$; Computes $[\![\langle \mathbf{C} \rangle_\mathcal{A}]\!]_\mathcal{A} =$ $[\![\mathbf{C}]\!]_\mathcal{A} \boxminus \widehat{\mathbf{R}}$; Sends $[\![\langle \mathbf{C} \rangle_\mathcal{A}]\!]_\mathcal{A} $ to $\mathcal{A}$.
                 
                \item[$\bm{\mathcal{A}}$]: Receives $[\![\langle \mathbf{C} \rangle_\mathcal{A}]\!]_\mathcal{A} $.

            \end{itemize}  
            \textbf{Private outputs}: \\
            $\bm{\mathcal{A}}$: $\langle \mathbf{C} \rangle_\mathcal{A}$$ =$ $\textbf{BFV.Decrypt}(sk_\mathcal{A}$, $[\![\langle \mathbf{C} \rangle_\mathcal{A}]\!]_\mathcal{A} )$$ \in \mathbb{Z}_p$ \\
            $\bm{\mathcal{B}}$: $\langle \mathbf{C} \rangle_\mathcal{B}= \widehat{\mathbf{R}}\in \mathbb{Z}_p$\\
        };
    \end{tikzpicture}
}
    \caption{Secure matrix multiplication $\Pi_\mathsf{matmul}$}
    \label{pro: matmul}
\end{figure}

%############## input the protocol file%###############

\subsection{Secure SoftMax $\Pi_\mathsf{softmax}$}

This subsection describes the secure SoftMax $\Pi_\mathsf{softmax}$. As we know, mixing ring and prime shares will perform better. ie., evaluate non-linear function over ring ($\mathbb{Z}_{2^k}$) shares, but switch to prime field ($\mathbb{Z}_p$) shares for arithmetic computations. Therefore, before computing function $exp$, we convert the shares from $\mathbb{Z}_p$ to $\mathbb{Z}_{2^k}$. In subsequent calculations, it is assumed that the computational domain requires conversion, and we omit the description of the conversion process.

To describe the protocol implementation, assuming shares are already converted to $\mathbb{Z}_{2^k}$ from $\mathbb{Z}_p$, $\mathcal{A}$ and $\mathcal{B}$ hold $\langle \mathbf{X} \rangle_\mathcal{A}\in \mathbb{Z}_{2^k}^{d_{m}*d_{m}}$ and $\langle \mathbf{X} \rangle_\mathcal{B}\in \mathbb{Z}_{2^k}^{d_{m}*d_{m}}$, respectively. We can directly call the $\Pi_\mathsf{rExp}$ protocol in SIRNN \cite{SIRNN2021} to compute the $exp$ function. $\mathcal{A}$ and $\mathcal{B}$ gets $\langle exp(\mathbf{X} )\rangle_\mathcal{A}\in \mathbb{Z}_{p}^{d_{m}*d_{m}}$ and $\langle exp(\mathbf{X} )\rangle_\mathcal{B}\in \mathbb{Z}_{p}^{d_{m}*d_{m}}$, respectively. Let $\mathcal{B}$ flatten $\langle exp(\mathbf{X} )\rangle_\mathcal{B}$ row-wise, As shown in Eq. \eqref{eq: Exprow}:

\begin{equation}\footnotesize 
 [exp(\mathbf{X} )]_\mathcal{B}
=
\begin{tikzpicture}[baseline=(m.center)]
\matrix (m) [matrix of nodes, left delimiter={\{}, right delimiter={\}}] {
$\langle e^{ x_{00} }\rangle_\mathcal{B}$ & $\langle e^{ x_{01} }\rangle_\mathcal{B}$ & $\cdots$ &$\langle e^{ x_{0k} }\rangle_\mathcal{B}$ & $= \left(\langle exp(\mathbf{X} )\rangle_\mathcal{B}\right)_{1}$ \\
$\langle e^{ x_\mathsf{10} }\rangle_\mathcal{B}$ & $\langle e^{ x_{11} }\rangle_\mathcal{B}$ & $\cdots$ & $\langle e^{ x_{1k} }\rangle_\mathcal{B}$ & $= \left(\langle exp(\mathbf{X} )\rangle_\mathcal{B}\right)_{2}$ \\
$\vdots$ & $\vdots$ & $\ddots$ & $\vdots$ \\
$\langle e^{ x_{n0} }\rangle_\mathcal{B}$ & $\langle e^{ x_{n1} }\rangle_\mathcal{B}$ & $\cdots$ & $\langle e^{ x_{nk} }\rangle_\mathcal{B}$ & $= \left(\langle exp(\mathbf{X} )\rangle_\mathcal{B}\right)_{m}$ \\
};
\draw[gray, thick] (m-1-1.north west) rectangle (m-1-4.south east); 
\end{tikzpicture}
\label{eq: Exprow}
\end{equation}
we have: 

\begin{equation}\footnotesize 
 \widetilde{\langle exp(\mathbf{X})\rangle_\mathcal{B}}
=
\begin{tikzpicture}[baseline=(m.center)]
\matrix (m) [matrix of nodes, left delimiter={\{}, right delimiter={\}}] {
$\left(\langle exp(\mathbf{X} )\rangle_\mathcal{B}\right)_{1}$ & $\left(\langle exp(\mathbf{X} )\rangle_\mathcal{B}\right)_{2}$& $\cdots$ & $\left(\langle exp(\mathbf{X} )\rangle_\mathcal{B}\right)_{m}$\\
};
\end{tikzpicture}
\label{eq: Expvector}
\end{equation}

Then $\mathcal{B}$ encrypts $\widetilde{\langle exp(\mathbf{X})\rangle_\mathcal{B}}$, the result can be simply expressed as $[\![\widetilde{\langle \mathbf{E}\rangle_\mathcal{B}}]\!]_\mathcal{B}$. Party $\mathcal{B}$ sends the SIMD-based ciphertexts $[\![\widetilde{\langle \mathbf{E}\rangle_\mathcal{B}}]\!]_\mathcal{B}$ to $\mathcal{A}$, which flattens $\langle exp(\mathbf{X} )\rangle_\mathcal{A}$ row-wise similar to $\mathcal{B}$ to get $\widetilde{\langle exp(\mathbf{X})\rangle_\mathcal{A}}$. Next step $\mathcal{A}$ computes $[\![\widetilde{\mathbf{E}}]\!]_\mathcal{B}$ $=[\![\widetilde{\langle \mathbf{E}\rangle_\mathcal{B}}]\!]_\mathcal{B} \boxplus  \widetilde{\langle exp(\mathbf{X})\rangle_\mathcal{A}}$, which serves as the numerator of the Softmax function.

To compute the denominator, $\mathcal{A}$ generates a uniformly random matrix $\mathbf{R} \in \mathbb{Z}_p^{d_{m}* d_{h}}$ and then flattens it row-wise using a similar pattern to obtain $\widetilde{\mathbf{R}}$. After that, $\mathcal{A}$ computes $[\![\widetilde{\mathbf{E}}]\!]_\mathcal{B} \boxplus \widetilde{\mathbf{R}}$ and the row sum of $\mathbf{R}$, and the resulting vector is denoted as $\mathbf{SR} = \sum_{i=0}^m (\mathbf{R})_{i}$. Subsequently, $\mathcal{A}$ encrypts $\mathbf{SR}$, sends $[\![\widetilde{\mathbf{E}} \oplus \widetilde{\mathbf{R}}]\!]_\mathcal{B}$ and $[\![ \mathbf{SR} ]\!]_\mathcal{A}$ to $\mathcal{B}$. To achieve this, $\mathcal{B}$ first decrypts the SIMD-based ciphertext $[\![\widetilde{\mathbf{E}} \oplus \widetilde{\mathbf{R}}]\!]_\mathcal{B}$ by $sk_\mathcal{B}$ and restoring it to the original dimensions to get $(\mathbf{E}\oplus\mathbf{R}) \in \mathbb{Z}_p^{d_{m} * d_{m}}$. Next, the row sums of $\mathbf{E}\oplus\mathbf{R}$ are calculated, yielding $\sum_{i=0}^m \left(\mathbf{E}_{i} \oplus \mathbf{R}_{i} \right)$, which is expressed as Eq. \eqref{eq: expsum}:

\begin{equation}\footnotesize 
 \sum \mathbf{E} \oplus \sum \mathbf{R}
=
\begin{tikzpicture}[baseline=(m.center)]
\matrix (m) [matrix of nodes, left delimiter={\{}, right delimiter={\}}] {
$e_{00}  + \cdots e_{0m}$ \\
$e_{10}  + \cdots e_{1m}$ \\
$\vdots$ \\
$e_{m0}  + \cdots e_{mm}$ \\ 
};
\end{tikzpicture}
\oplus
\begin{tikzpicture}[baseline=(m.center)]
\matrix (m) [matrix of nodes, left delimiter={\{}, right delimiter={\}}] {
$r_{00}  + \cdots r_{0m}$ \\
$r_{10}  + \cdots r_{1m}$ \\
$\vdots$ \\
$r_{m0} + \cdots r_{mm}$ \\ 
};
\end{tikzpicture}
\label{eq: expsum}
\end{equation}
The denominator $ [\![\sum(\mathbf{E})]\!]_\mathcal{A} $ is then obtained by subtracting the previously computed $[\![ \mathbf{SR} ]\!]_\mathcal{A}$ from this sum.

Up to this point, we have obtained the numerator and denominator components of the SoftMax function, but direct division cannot be performed yet. Therefore, we construct $ [\![\sum(\mathbf{E})]\!]_\mathcal{A} \boxdot \mathbf{v}$, where $\mathbf{v} \in \mathbb{Z}_p ^{d_{m}}$ is another uniformly random vector generated by $\mathcal{B}$, and through an interactive round, obtain the modular inverse of the denominator at party $\mathcal{B}$. Coincidentally, the numerator $[\![\widetilde{\mathbf{E}}]\!]_\mathcal{B}$ of the Softmax function is also SIMD-based ciphertext held by $\mathcal{A}$, allowing us to obtain the final result of Softmax easily.

Specifically, party $\mathcal{B}$ first generates a uniformly random vector $\mathbf{v}\in \mathbb{Z}_p^{d_{m}}$. Simultaneously, $\mathbf{v}$ is copied $m$ times and tiled to obtain $\widehat{\mathbf{V}} \in \mathbb{Z}_p^{d_{m}*d_{m}}$. Then the encrypted $\widehat{\mathbf{V}}$ is sent to $\mathcal{A}$ along with $ [\![\sum(\mathbf{E})]\!]_\mathcal{A} \boxdot \mathbf{V}$. $\mathcal{A}$ can decrypt $ [\![\sum(\mathbf{E})]\!]_\mathcal{A} \boxdot \mathbf{V}$ and also copy $\left( \sum(\mathbf{E}) \boxdot \mathbf{V}\right)$ $m$ times to obtain $ \widehat{\left(\sum(\mathbf{E}) \boxdot \mathbf{V}\right)} \in \mathbb{Z}_p^{d_{m}*d_{m}} $, and then compute the modular inverse  $1 / \widehat{\left(\sum(\mathbf{E}) \boxdot \mathbf{V}\right)}$. Next, by calculating $1 / \widehat{\left(\sum(\mathbf{E}) \boxdot \mathbf{V}\right)}$ multiplied by $[\![\widehat{\mathbf{V}}]\!]_\mathcal{B}$, we can obtain the modular inverse of the denominator of the Softmax function, $[\![1 / \sum(\widehat{\mathbf{E}})]\!]_\mathcal{B}$. Since the numerator $[\![\widetilde{\mathbf{E}}]\!]_\mathcal{B}$ of the Softmax function is also held by $\mathcal{A}$, we can directly compute $[\![\widetilde{\mathbf{E}}]\!]_\mathcal{B}$ multiplied by $[\![1 / \sum(\widehat{\mathbf{E}})]\!]_\mathcal{B}$ to obtain the final Softmax result $[\![\mathbf{Y}]\!]_\mathcal{B}$. Finally, we can obtain the additive secret sharing $\langle \mathbf{Y} \rangle_\mathcal{A}$ and $\langle \mathbf{Y} \rangle_\mathcal{B}$ at partys $\mathcal{A}$ and $\mathcal{B}$ respectively, by adopting a method similar to that used in $\Pi_\mathsf{matmul}$. The specific steps of the protocol are shown in Fig. \ref{pro: softmax}. To facilitate understanding, a $3 \times 3$ matrix example for Softmax calculation is provided in Fig. \ref{Fig: ieSoftmax} of Appendix \ref{appendix: softmax&ln}. 

%############## input the protocol file %###############
 
\begin{figure}[h]
    \centering\scalebox{0.82}{
    \begin{tikzpicture}
        \node[line width=1pt,draw, rectangle, fill=gray!20, rounded corners = 3pt, minimum height=0.5cm, minimum width=2.4cm] at (-0.93, 9.64) 
        {Protocol $\Pi_\mathsf{softmax}$};
        \node[line width=1pt, draw, rectangle, rounded corners = 5pt, minimum height=10cm, minimum width=7cm, text width=9.8cm, align=left] at (2.8, 1) 
        {
            \textbf{Initialization}: \\
            ~~$\mathcal{A}$ and $\mathcal{B}$ generate each private key $sk_\mathcal{A}$, $sk_\mathcal{B}$ and public key $pk_\mathcal{A}$, $pk_\mathcal{B}$.\\
            % \textbf{Preprocessing}:\\
            %     ~~$\mathcal{A}$ generates a uniformly random matrix $\mathbf{R}$, $\mathbf{M} \in \mathbb{Z}_p^{d_{m}* d_{m}}$; $\mathcal{B}$ generates a uniformly random vector $\mathbf{v} \in \mathbb{Z}_p^{d_{m}}$\\
            \textbf{Private inputs}:\footnote{Before calculating function $exp$, the share is transferred from the prime field $\mathbb{Z}_p$ to Ring $\mathbb{Z}_{2^k}$, and after calculating exp, the share is transferred from $\mathbb{Z}_{2^k}$ to $\mathbb{Z}_p$.} \\
            ~~$\bm{\mathcal{A}}$: $\langle \mathbf{X} \rangle_\mathcal{A}$ $\in \mathbb{Z}_{2^k}^{d_{m}* d_{m}}$ \\
            ~~$\bm{\mathcal{B}}$: $\langle \mathbf{X} \rangle_\mathcal{B}$ $\in \mathbb{Z}_{2^k}^{d_{m}* d_{m}}$\\
            \textbf{Protocol}: 
            \begin{itemize}
            \resetlinenumber
            \internallinenumbers

            \item[$\bm{\mathcal{P}}$]\footnote{Since $\mathcal{A}$ and $\mathcal{B}$ follow the same computation steps}: Participants $\mathcal{A}$ and $\mathcal{B}$ do the same as follow: \\
            $\langle exp(\mathbf{X})\rangle_\mathcal{P}$ $=$ $\Pi_\mathsf{rExp}(\langle \mathbf{X} \rangle_\mathcal{P})$

            \item[$\bm{\mathcal{B}}$]: Flattens $\langle exp(\mathbf{X})\rangle_\mathcal{B}$ row-wise to get $\widetilde{\langle exp(\mathbf{X})\rangle_\mathcal{B}}$; Encrypts $\widetilde{\langle exp(\mathbf{X})\rangle_\mathcal{B}}$ by $pk_\mathcal{A}$ yields $[\![\widetilde{\langle exp(\mathbf{X})\rangle_\mathcal{B}} ]\!]_\mathcal{B}$ (simply expressed as $[\![\widetilde{\langle\mathbf{E}\rangle_\mathcal{B}}]\!]_\mathcal{B}$); Sends $[\![\widetilde{\langle\mathbf{E}\rangle_\mathcal{B}}]\!]_\mathcal{B}$ to $\mathcal{A}$.

            \item[$\bm{\mathcal{A}}$]: Receives $[\![\widetilde{\langle\mathbf{E}\rangle_\mathcal{B}}]\!]_\mathcal{B}$; Flattens $\langle exp(\mathbf{X})\rangle_\mathcal{A} $ row-wise to get $\widetilde{\langle exp(\mathbf{X})\rangle_\mathcal{A}}$; Computes $[\![\widetilde{\mathbf{E}}]\!]_\mathcal{B}$ = $[\![\widetilde{\langle \mathbf{E}\rangle_\mathcal{B}}]\!]_\mathcal{B}$ $\boxplus$ $\widetilde{\langle exp(\mathbf{X})\rangle_\mathcal{A}}$; Generates a uniformly random matrix $\mathbf{R} \in \mathbb{Z}_p^{d_{m}* d_{m}}$; Computes $[\![\widetilde{\mathbf{E}}]\!]_\mathcal{B} \boxplus \mathbf{R}$ and $\mathbf{SR}=\sum_\mathsf{i=0}^m\left( \mathbf{R}\right)_\mathsf{i}$; Encrypts $\mathbf{SR}$; Sends set $S_{1}=\{$ $[\![\widetilde{\mathbf{E}} \oplus \mathbf{R}]\!]_\mathcal{B} $, $[\![ \mathbf{SR} ]\!]_\mathcal{A} \}$ to $\mathcal{B}$

            \item[$\bm{\mathcal{B}}$]: Receives $S_1$ from $\mathcal{A}$; Decrypts $[\![\widetilde{\mathbf{E}} \oplus \mathbf{R}]\!]_\mathcal{B} $ by $sk_\mathcal{B}$ yields $\widetilde{\mathbf{E}} \oplus \mathbf{R}$; Computes $[\![\sum(\mathbf{E})]\!]_\mathcal{A} =$ $\sum_{i=0}^m \left(\mathbf{E}_{i}+\mathbf{R}_{i}\right)$ $  \boxminus  \left([\![ \mathbf{SR} ]\!]_\mathcal{A}\right)$; Generates a uniformly random vector $\mathbf{v} \in \mathbb{Z}_p^{d_{m}}$; Computes $ [\![\sum(\mathbf{E})]\!]_\mathcal{A} \boxdot \mathbf{v}$; $\mathbf{v}$ is copied $m$ times to get $\widehat{\mathbf{V}}$; Encrypts $\widehat{\mathbf{V}}$; Sends set $S_{2}=$ $\{ [\![ \sum(\mathbf{E}) \odot \mathbf{v}]\!]_\mathcal{A}$,  $[\![ \widehat{\mathbf{V}} ]\!]_\mathcal{B}\}$ to $\mathcal{A}$.

            \item[$\bm{\mathcal{A}}$]: Receives $S_\mathsf{2}$ from $\mathcal{B}$; Decrypts $[\![\sum(\mathbf{E}) \odot \mathbf{V}]\!]_\mathcal{A} $; Copies $\left(\sum(\mathbf{E}) \odot \mathbf{V}\right)$ also $m$ times to get $ \widehat{\left(\sum(\mathbf{E}) \boxdot \mathbf{V}\right)}$; Computes the modular inverse $1 / \widehat{\left(\sum(\mathbf{E}) \boxdot \mathbf{V}\right)}$; Computes $[\![1 / \sum(\widehat{\mathbf{E}})]\!]_\mathcal{B}$ $= 1 / \widehat{\left(\sum(\mathbf{E}) \boxdot \mathbf{V}\right)} \boxdot $ $[\![ \widehat{\mathbf{V}} ]\!]_\mathcal{B}\}$; Computes $[\![\mathbf{Y}]\!]_\mathcal{B}$ $=$ $[\![\widetilde{\mathbf{E}}]\!]_\mathcal{B}$ $ \boxdot $ $[\![1 / \sum(\widehat{\mathbf{E}})]\!]_\mathcal{B}$; Generates a uniformly random matrix $\mathbf{M} \in \mathbb{Z}_p^{d_{m}* d_{m}}$; Flattens $\mathbf{M}$ row-wise to get $\widetilde{\mathbf{M}}$; Computes $[\![\langle \mathbf{Y} \rangle_\mathcal{B}]\!]_\mathcal{B}= $ $[\![\mathbf{Y}]\!]_\mathcal{B} \boxminus $$\widetilde{\mathbf{M}}$; Sends $[\![\langle \mathbf{Y} \rangle_\mathcal{B}]\!]_\mathcal{B} $ to $\mathcal{B}$.
                 
            \item[$\bm{\mathcal{B}}$]: Receives $[\![\langle \mathbf{Y} \rangle_\mathcal{B}]\!]_\mathcal{B} $.

            \end{itemize}  
            \textbf{Private outputs}: \\
            $\bm{\mathcal{A}}$: $\langle \mathbf{Y} \rangle_\mathcal{A}= \widetilde{\mathbf{M}} \in \mathbb{Z}_p$\\
            $\bm{\mathcal{B}}$: $\langle \mathbf{Y} \rangle_\mathcal{B}$$ =$ $\textbf{BFV.Decrypt}(sk_\mathcal{B}$, $[\![\langle \mathbf{Y} \rangle_\mathcal{B}]\!]_\mathcal{B} )$$ \in \mathbb{GF}(p)$ \\

        };
 
    \end{tikzpicture}
 }
    \caption{Secure softmax $\Pi_\mathsf{softmax}$}
    \label{pro: softmax}
\end{figure}

%############## input the protocol file%###############

\subsection{secure LayerNorm $\Pi_\mathsf{ln}$}

This section describes the secure LayerNorm $\Pi_\mathsf{ln}$. Eq. \eqref{eq: LayerNorm} presents the LayerNorm function.

Given a matrix $\mathbf{X} \in \mathbb{Z}_p^{d_{m}*d_{n} }$, $x_{ij}$ is the element at the ($i$, $j$) position. Equivalently, LayerNorm can be expressed as \eqref{eq:layerNormTrans}:
\begin{equation}
\label{eq:layerNormTrans}
LayerNorm(x_{ij}) = \mathbf{\gamma}\sqrt{n} \odot \frac{a_{ij}}{\sqrt{\sum_{j=0}^n a_{ij}^2}} \oplus \mathbf{\beta}
\end{equation}
where $a_{ij} = nx_{ij} - n \mu_{i} = nx_{ij} - \sum_{j=0}^n x_{ij}$. Then it can be inferred that $\mathbf{A}$, and shown as:

\begin{equation}\footnotesize
\mathbf{A} = \begin{bmatrix}
\mathbf{A}_{1} \\
\mathbf{A}_{2} \\
\vdots \\
\mathbf{A}_{m}
\end{bmatrix}, 
\quad \mathbf{A}_{i}^T 
=\begin{bmatrix}
a_{i1}\\
a_{i2}\\
\vdots \\
a_{in} 
\end{bmatrix} 
=\begin{bmatrix}
(n-1)x_{i1} - x_{i2} - \cdots - x_{in}\\
- x_{i1} + (n-1)x_{i2} - \cdots - x_{in} \\
\vdots \\
- x_{i1} - x_{i2} - \cdots + (n-1)x_{in} 
\end{bmatrix} 
\label{eq: MHflatten}
\end{equation}

Consequently, the computation of $a_{ij}$ in Eq. \eqref{eq:layerNormTrans} can be readily computed locally and represented as a secret-sharing $\langle \mathbf{A} \rangle_\mathcal{P}\in \mathbb{Z}_{2^k}^{d_{m}*d_{n}}, \mathcal{P} \in \{\mathcal{A}, \mathcal{B}\}$. The primary bottleneck lies in the evaluation of the denominator $\sqrt{\sum_{j=0}^n a_{ij}^2}$. Specifically, after parties $\mathcal{A}$ and $\mathcal{B}$ obtain $\langle \mathbf{A}\rangle_\mathcal{A} $ and $\langle \mathbf{A}\rangle_\mathcal{B} $ respectively, they can flatten the secret sharing row-wise to form $\langle \widetilde{\mathbf{A}} \rangle_\mathcal{A}$ and $\langle \widetilde{\mathbf{A}}\rangle_\mathcal{B}$. This is convenient for encrypting the ciphertexts to the SIMD form. Subsequently, $\mathcal{B}$ encrypts $\langle \widetilde{\mathbf{A}} \rangle_\mathcal{B}$ and send $[\![\langle \widetilde{\mathbf{A}} \rangle_\mathcal{B}]\!]_\mathcal{B}$ to $\mathcal{A}$. Party $\mathcal{A}$ then sum $\langle \widetilde{\mathbf{A}} \rangle_\mathcal{A}$ and $[\![\langle \widetilde{\mathbf{A}} \rangle_\mathcal{B}]\!]_\mathcal{B}$ to obtain $[\![ \widetilde{\mathbf{A}}]\!]_\mathcal{B}$. Given the \textbf{BFV.square} operation, which is more than twice as efficient as the multiplication operation for two ciphertexts, so we employ \textbf{BFV.square} to efficiently compute $[\![ \widetilde{\mathbf{A}}^2]\!]_\mathcal{B}$. At this point, a direct summation is not feasible. $\mathcal{A}$ generates a uniformly random vector $\mathbf{R}\in \mathbb{Z}_p^{d_{m}*d_{n}} $ and computes $[\![ \widetilde{\mathbf{A}}^2]\!]_\mathcal{B} \boxplus \mathbf{R}$, as outlined in Eq. \eqref{eq: lnAandR}:

\begin{equation}\footnotesize
[\![ \widetilde{\mathbf{A}}^2 \oplus \mathbf{R}]\!]_\mathcal{B} =\begin{tikzpicture}[baseline=(m.center)]
 \matrix (m) [matrix of nodes, left delimiter={\{}, right delimiter={\}}] {
$[\![ \widetilde{\mathbf{A}}^2_{1}]\!]_\mathcal{B} $ & $[\![ \widetilde{\mathbf{A}}^2_{2}]\!]_\mathcal{B}$ & $\cdots$ & $[\![ \widetilde{\mathbf{A}}^2_{m}]\!]_\mathcal{B} $ \\
$\boxplus$& $\boxplus$ & $\cdots$& $\boxplus$\\
$  \mathbf{R}_{1} $ & $  \mathbf{R}_{2} $ & $\cdots$ & $ \mathbf{R}_{m} $ \\
};
% \draw[red, thick] (m-1-1.north west) rectangle (m-1-4.south east); 
\end{tikzpicture}
\label{eq: lnAandR}
\end{equation}
where $ \mathbf{R}_{i}$, $i\in \{1, 2, \ldots, m\}$ denotes the i-th row of matrix $\mathbf{R}$. While $[\![ \widetilde{\mathbf{A}}^2 \oplus \mathbf{R}]\!]_\mathcal{B}$ can be directly sent to $\mathbf{B}$, who can compute the row sum after decryption. However, $\mathcal{B}$ can not eliminate $\mathbf{R}$ directly. There, we let $\mathcal{A}$ compute the row sum of $\mathbf{R}$, obtaining Eq. \eqref{eq: treesumlnR}:

\begin{equation}\footnotesize
\mathbf{SR} = \begin{bmatrix}
\sum_{j=0}^{n} r_{1j} & \sum_{j=0}^{n} r_{2j} & \cdots &\sum_{j=0}^{n} r_{mj}
\end{bmatrix}
\label{eq: treesumlnR}
\end{equation}
Then, $\mathcal{A}$ encrypts $\mathbf{SR} \in \mathbb{Z}_p^{d_{m}}$ and sends $[\![\mathbf{SR}]\!]_\mathcal{A}$ to $\mathcal{B}$. Next step, $\mathcal{B}$ can compute the row sum like \eqref{eq: expsum} after decrypts $[\![ \widetilde{\mathbf{A}}^2 \oplus \mathbf{R}]\!]_\mathcal{B}$, and obtain $\sum_{j=0}^{n}\widetilde{a}_{ij}^2 \oplus \mathbf{SR}$ as Eq. \eqref{eq: treesumsquare}:

\begin{equation}\footnotesize
\sum_{j=0}^{n}\widetilde{a}_{ij}^2 \oplus \sum_{j=0}^{n} r_{ij} =\begin{tikzpicture}[baseline=(m.center)]
 \matrix (m) [matrix of nodes, left delimiter={\{}, right delimiter={\}}] {
$\sum_{j=0}^{n} \widetilde{a}_{1j}^2 $ & $\sum_{j=0}^{n}\widetilde{a}_{2j}^2$ & $\cdots$ & $\sum_{j=0}^{n} \widetilde{a}_{mj}^2$ \\
$\boxplus$& $\boxplus$ & $\cdots$& $\boxplus$\\
$\sum_{j=0}^{n} r_{1j}$ & $\sum_{j=0}^{n} r_{2j} $ & $\cdots$ &$\sum_{j=0}^{n} r_{mj}$\\
};
% \draw[red, thick] (m-1-1.north west) rectangle (m-1-4.south east); 
\end{tikzpicture}
\label{eq: treesumsquare}
\end{equation}

By subtracting $[\![\mathbf{SR}]\!]_\mathcal{A}$ from $\sum_{j=0}^{n}\widetilde{a}_{ij}^2 \oplus \mathbf{SR}$, $\mathcal{B}$ can subsequently obtain the sum of squares $\sum_{j=0}^{n}\widetilde{a}_{ij}^2$ of $[\![ \widetilde{\mathbf{A}}^2 ]\!]_\mathcal{B}$, abbreviated as $[\![\mathbf{SA}^2 ]\!]_\mathcal{A}$.

To proceed, the square root is calculated. $\mathcal{B}$ begins by generating a uniformly random vector $\mathbf{v} \in \mathbb{Z}_p^{d_{m}}$, which can be viewed as a additive sharing held by $\mathcal{B}$, denoted as $\langle \mathbf{K}\rangle_\mathcal{B} = \mathbf{v}$. $\mathcal{B}$ also can forming $[\![\mathbf{SA}^2 ]\!]_\mathcal{A} \boxminus \mathbf{v}$, which is sent to $\mathcal{A}$. $\mathcal{A}$ decrypts this to obtain $\langle \mathbf{K}\rangle_\mathcal{A} = \left(\mathbf{SA}^2 \ominus \mathbf{v}\right) \in \mathbb{Z}_p^{d_{m}}$. With $\langle \mathbf{K}\rangle_\mathcal{A}$ and $\langle \mathbf{K}\rangle_\mathcal{B}$ constituting additive secret shares over $\mathbb{Z}_p$, a conversion to the ring $\mathbb{Z}_{2^k}$ is undertaken before computing the square root. At this point, $\mathcal{A}$ and $\mathcal{B}$ can invoke $\Pi_{invsqrt}$ \footnote{The secure inverse square root protocol $\Pi_{invsqrt}$, sourced from the $\rm SCI_{lib}$ \cite{SIRNN2021}} to get the secret sharing $\langle 1 / \sqrt{\mathbf{K}}\rangle_\mathcal{A}$ and $\langle 1 / \sqrt{\mathbf{K}}\rangle_\mathcal{B}$, respectively.

Given the preceding computation, the party $\mathcal{A}$, holds the ciphertext $[\![ \widetilde{\mathbf{A}}]\!]_\mathcal{B}$. Consequently, we let party $\mathcal{B}$ copy $\langle 1 / \sqrt{\mathbf{K}}\rangle_\mathcal{B}$ $n$ times and  encrypt the additive sharing $\langle \widehat{1 / \sqrt{\mathbf{K}}}\rangle_\mathcal{B} \in \mathbb{Z}_p^{d_{m} * d_{n}}$ and send it to $\mathcal{A}$. Then party $\mathcal{A}$ also copy $\langle 1 / \sqrt{\mathbf{K}}\rangle_\mathcal{A}$ $n$ times and calculates $[\![ \widehat{1 / \sqrt{\mathbf{K}}}]\!]_\mathcal{B} = \langle \widehat{1 / \sqrt{\mathbf{K}}}\rangle_\mathcal{A} \boxplus [\![\langle \widehat{1 / \sqrt{\mathbf{K}}\rangle}_\mathcal{B}]\!]_\mathcal{B}$. Now, $\mathcal{A}$ can directly compute:

\begin{equation}
    [\![\frac{a_{i}}{\sqrt{\sum^n a_{i}^2}}]\!]_\mathcal{B} = [\![ \widetilde{\mathbf{A}}]\!]_\mathcal{B} \boxdot \left([\![ \widehat{1 / \sqrt{\mathbf{K}}}]\!]_\mathcal{B}\right)
\label{eq: lnadivsuma}
\end{equation}

%############## input the protocol file%###############

\begin{figure}[h!]
    % \centering
    \centering\scalebox{0.82}{
    \begin{tikzpicture}
        \node[line width=1pt,draw, rectangle, fill=gray!20, rounded corners = 3pt, minimum height=0.5cm, minimum width=2.4cm] at (-1.05, 11.25) 
        {Protocol $\Pi_\mathsf{ln}$};
        \node[line width=1pt, draw, rectangle, rounded corners = 5pt, minimum height=10cm, minimum width=7cm, text width=9.8cm, align=left] at (2.8, 1) 
        {
            \textbf{Initialization}: \\
                ~~$\mathcal{A}$ and $\mathcal{B}$ generate each private key $sk_\mathcal{A}$, $sk_\mathcal{B}$ and public key $pk_\mathcal{A}$, $pk_\mathcal{B}$.\\
            % \textbf{Preprocessing}:\\
            %     ~~$\mathcal{A}$ generates a uniformly random matrix $\mathbf{R}$, $\mathbf{S} \in \mathbb{Z}_p^{d_{m} * d_{n}}$; $\mathcal{B}$ generates uniformly random vector $\mathbf{v}\in \mathbb{Z}_p^{d_{m}}$, random matrix $\mathbf{M} \in \mathbb{Z}_p^{d_{m} * d_{n}}$.\\
            \textbf{Private inputs}: \\
                ~~$\bm{\mathcal{A}}$: $\langle \mathbf{X} \rangle_\mathcal{A}\in \mathbb{Z}_{2^k}^{d_{m}*d_{n}}$ \\
                ~~$\bm{\mathcal{B}}$: $\langle \mathbf{X} \rangle_\mathcal{B}\in \mathbb{Z}_{2^k}^{d_{m}*d_{n}}$, parameter $\mathbf{\gamma} \in \mathbb{Z}_p^{d_{m}}$, $\mathbf{\beta}\in \mathbb{Z}_p^{d_{m}}$ \\

            \textbf{Protocol}: 
            \begin{itemize}
            \resetlinenumber
            \internallinenumbers
                \item[$\bm{\mathcal{B}}$]: Computes $\langle \mathbf{A} \rangle_\mathcal{B}$; Flatten $\langle \mathbf{A} \rangle_\mathcal{B}$ row-wise to get $\langle \widetilde{\mathbf{A}} \rangle_\mathcal{B}$; Encrypts $\langle \widetilde{\mathbf{A}} \rangle_\mathcal{B}$ through $pk_\mathcal{B}$ to obtain $[\![\langle \widetilde{\mathbf{A}} \rangle_\mathcal{B}]\!]_\mathcal{B}$; Sends $[\![\langle \widetilde{\mathbf{A}} \rangle_\mathcal{B}]\!]_\mathcal{B}$ to $\mathcal{A}$.

                \item[$\bm{\mathcal{A}}$]: Receives $[\![\langle \widetilde{\mathbf{A}} \rangle_\mathcal{B}]\!]_\mathcal{B}$; Computes $\langle \mathbf{A} \rangle_\mathcal{A}$; Flattens it row-wise to get $\langle \widetilde{\mathbf{A}} \rangle_\mathcal{A}$; Computes $[\![ \widetilde{\mathbf{A}} ]\!]_\mathcal{B}$=$\langle \widetilde{\mathbf{A}} \rangle_\mathcal{A} \boxplus [\![\langle \widetilde{\mathbf{A}} \rangle_\mathcal{B}]\!]_\mathcal{B}$; Calls BFV.square \footnote{this function from SEAL \cite{SEAL2023}} to get $[\![ (\widetilde{\mathbf{A}})^2]\!]_\mathcal{B}$; Generates a uniformly random matrix $\mathbf{R} \in \mathbb{Z}_p^{d_{m} * d_{n}}$; Computes the sum of each row to get $\mathbf{SR} \in \mathbb{Z}_p^{d_{m}}$, $[\![ \widetilde{\mathbf{A}} ^2]\!]_\mathcal{B} \boxplus \mathbf{R}$; Encrypts $\mathbf{SR}$; Sends set $S_1 =$ $\{ [\![ \widetilde{\mathbf{A}} ^2 \oplus \mathbf{R} ]\!]_\mathcal{B} $, $[\![ \mathbf{SR}]\!]_\mathcal{A}\}$ to $\mathcal{B}$.

                \item[$\bm{\mathcal{B}}$]: Receives $S_1$; Decrypts $\{ [\![ \widetilde{\mathbf{A}} ^2 \oplus \mathbf{R} ]\!]_\mathcal{B} $; Computes row sum of $(\widetilde{\mathbf{A}} ^2 \oplus \mathbf{R} )$ to get $\sum_{j=0}^{n}\widetilde{a}_{ij}^2 \oplus \mathbf{SR}$; Computes $\sum_{j=0}^{n}\widetilde{a}_{ij}^2 $ $\oplus \mathbf{SR} $ $\boxminus [\![ \mathbf{SR} ]\!]_\mathcal{A}$ to get $[\![\sum_{j=0}^{n}\widetilde{a}_{ij}^2]\!]_\mathcal{A}$; Generates uniformly random vector $\mathbf{v}\in \mathbb{Z}_p^{d_{m}}$; Computes $[\![\sum_{j=0}^{n}\widetilde{a}_{ij}^2]\!]_\mathcal{A} \boxminus \mathbf{v}$; $\mathcal{B}$ has $\langle \mathbf{K}\rangle_\mathcal{B}=\mathbf{v}$; Calls $\Pi_\mathsf{invsqrt}$ to get $\langle 1/ \sqrt{\mathbf{K}}\rangle_\mathcal{B}$, copies $\langle 1/ \sqrt{\mathbf{K}}\rangle_\mathcal{B}$ $n$ times and encrypts $\langle \widehat{1 / \sqrt{\mathbf{K}}}\rangle_\mathcal{B} \in \mathbb{Z}_p^{d_{m} * d_{n}}$; Sends set $S_2 = \{$ $[\![ \langle \widehat{1 / \sqrt{\mathbf{K}}}\rangle_\mathcal{B}]\!]_\mathcal{B}$ , $\{ [\![\sum_{j=0}^{n}\widetilde{a}_{ij}^2]\!]_\mathcal{A}\ominus \mathbf{v}]\!]_\mathcal{A}\}$, to $\mathcal{A}$.

                \item[$\bm{\mathcal{A}}$]: Receives $S_2$; Decrytps $[\![\sum_{j=0}^{n}\widetilde{a}_{ij}^2 \ominus \mathbf{v}]\!]_\mathcal{A}$ to get $\langle \mathbf{K}\rangle_\mathcal{A} = \sum_{j=0}^{n}\widetilde{a}_{ij}^2 $ $\ominus \mathbf{v}$; Calls $\Pi_\mathsf{invsqrt}$ to get $\langle 1/ \mathbf{K}\rangle_\mathcal{A}$, also copies it $n$ times to get $\langle \widehat{1 / \sqrt{\mathbf{K}}}\rangle_\mathcal{A}$; Computes $[\![ \widehat{1 / \sqrt{\mathbf{K}}}]\!]_\mathcal{B}  = \langle \widehat{1 / \sqrt{\mathbf{K}}}\rangle_\mathcal{A} \boxplus [\![ \langle \widehat{1 / \sqrt{\mathbf{K}}}\rangle_\mathcal{B}]\!]_\mathcal{B}$ and $ [\![\frac{a_{i}}{\sqrt{\sum^n a_{i}^2}}]\!]_\mathcal{B} = $ $[\![ \widetilde{\mathbf{A}} ]\!]_\mathcal{B} \boxdot [\![ \widehat{1 / \sqrt{\mathbf{K}}}]\!]_\mathcal{B}$; Generates a uniformly random matrix $\mathbf{S} \in \mathbb{Z}_p^{d_{m} * d_{n}}$; Flattens $\mathbf{S}$ row-wise to get $\widetilde{\mathbf{S}}$; Computes $[\![\frac{a_{i}}{\sqrt{\sum^n a_{i}^2}}]\!]_\mathcal{B} \boxminus \widetilde{\mathbf{S}}$; Sends set $S_3 =$ $\{[\![\frac{a_{i}}{\sqrt{\sum^n a_{i}^2}} \boxminus \widetilde{\mathbf{S}}]\!]_\mathcal{B} $, $[\![ \widetilde{\mathbf{S}}]\!]_\mathcal{A}\}$ to $\mathcal{B}$.

                \item[$\bm{\mathcal{B}}$]: Receives $S_3$; Decrypts $[\![\frac{a_{i}}{\sqrt{\sum^n a_{i}^2}} \boxminus \widetilde{\mathbf{S}}]\!]_\mathcal{B}$ to get $\frac{a_{i}}{\sqrt{\sum^n a_{i}^2}} \ominus \widetilde{\mathbf{S}}$; Computes $[\![\frac{a_{i}}{\sqrt{\sum^n a_{i}^2}} ]\!]_\mathcal{A}= \frac{a_{i}}{\sqrt{\sum^n a_{i}^2}} \ominus \widetilde{\mathbf{S}}$ $\boxplus [\![ \widetilde{\mathbf{S}}]\!]_\mathcal{A}$; Computes $[\![\mathbf{Y}]\!]_\mathcal{A} =$ $\mathbf{\gamma} \odot \sqrt{n} \boxdot [\![\frac{a_{i}}{\sqrt{\sum^n a_{i}^2}} ]\!]_\mathcal{A} \boxplus \mathbf{\beta}$. 
                Generates a uniformly random matrix $\mathbf{M} \in \mathbb{Z}_p^{d_{m} * d_{n}}$. Computes $[\![\mathbf{Y}]\!]_\mathcal{A} \boxminus \mathbf{M}$ and Sends to $\mathcal{A}$.

                \item[$\bm{\mathcal{A}}$]: Receives $[\![\mathbf{Y}]\!]_\mathcal{A} \boxminus \mathbf{M}$.

            \end{itemize}  
            \textbf{Private outputs}: \\
            $\bm{\mathcal{A}}$: 
            $\langle \mathbf{Y} \rangle_\mathcal{A}$ = $\textbf{BFV.Decrypt}(sk_\mathcal{A}$, $[\![\mathbf{Y}]\!]_\mathcal{A} \boxminus \mathbf{M})$ \\
            $\bm{\mathcal{B}}$: $\langle \mathbf{Y} \rangle_\mathcal{B}$ = $\mathbf{M}$ \\
            };
    \end{tikzpicture}}
    \caption{Protocol secure layernorm $\Pi_\mathsf{ln}$}
    \label{pro:ln}
\end{figure}

%############## input the protocol file%###############

The parameters $\gamma$ and $\beta$ of LayerNorm are held by party $\mathcal{B}$. Therefore, $\mathcal{A}$ generates a uniformly random matrix $\mathbf{S} \in \mathbb{Z}_p^{d_{m}*d_{n}}$, flattens it row-wise to get $\widetilde{\mathbf{S}}$ and calculates $[\![\frac{a_{i}}{\sqrt{\sum^n a_{i}^2}}]\!]_\mathcal{B} \boxminus \widetilde{\mathbf{S}}$ and encrypts $\widetilde{\mathbf{S}}$. These SIMD-based ciphertexts are then sent to $\mathcal{B}$, which decrypts $[\![\frac{a_{i}}{\sqrt{\sum^n a_{i}^2}} \ominus \widetilde{\mathbf{S}}]\!]_\mathcal{B}$ and add $[\![\widetilde{\mathbf{S}}]\!]_\mathcal{A}$ to obtain $[\![\frac{a_{i}}{\sqrt{\sum^n a_{i}^2}}]\!]_\mathcal{A}$. Finally, $\mathcal{B}$ can compute the LayerNorm output $[\![Y]\!]_\mathcal{A} = \mathbf{\gamma} \sqrt{n} \boxdot [\![\frac{a_{i}}{\sqrt{\sum^n a_{i}^2}}]\!]_\mathcal{A} \boxplus \mathbf{\beta} $, , where $\mathbf{\gamma} \in \mathbb{Z}_p^{d_{m}}$ $\mathbf{\beta}\in \mathbb{Z}_p^{d_{m}}$.

Subsequently, through a single round of communication, $\mathcal{B}$ can distribute $\langle \mathbf{Y} \rangle_\mathcal{A}$ and $\langle \mathbf{Y} \rangle_\mathcal{B}$ to $\mathcal{A}$ and $\mathcal{B}$, respectively. The specific process of $\Pi_{ln}$ is shown in Fig. \ref{pro:ln}. A $3 \times 3$ matrix is presented as an illustrative example to demonstrate the calculation of LayerNorm. Specific details can be found in Fig. \ref{Fig: ieln} of Appendix \ref{appendix: softmax&ln}.

\subsection{Fitting of differentiable non-linear functions}
This section introduces piecewise approximation of non-linear functions, with a focus on GeLU as a specific example due to its application in subsequent protocols. Due to the incompatibility of cryptographic techniques with non-linear functions, existing methods usually approximate the GeLU function in segments to achieve secure computation, but this approximation often leads to computational errors \cite{Bumblebee2023, BOLT2024}. By analyzing different approaches, we found that these errors usually stem from the vicinity of the boundary points of the intervals. This indicates that choosing more reasonable boundary points is necessary and can greatly reduce the error. Thus, starting from the nature of the function itself, we provide a high-precision approximate function for the GeLU. 

As is well known, the second derivative of a function can determine the inflection points (or concavity) of the function within a given interval, indicating the curvature of the function in that interval. Inflection points, where the second derivative is zero, are special points that distinguish between concave and convex regions. Typically, the function's curve changes relatively smoothly at these points, meaning the function fluctuates relatively little. Therefore, we select inflection points as the endpoints of our piecewise function intervals to minimize the impact of piecewise approximation on function accuracy. Based on this, we approximate and replace the function. If no inflection points exist, we can calculate the third derivative. The third derivative represents the curvature change rate, and a smaller value indicates a smoother function. It can also be used to select segmentation points. This method applies to a wide range of differentiable non-linear functions, such as Sigmoid, Tanh, and Mish (see Appendix \ref{appen: non-linear appro} for details). 

Here, we will focus on the piecewise approximation of GeLU. We compute its second derivative, denoted as:

\begin{equation}
    F''(x) = \frac{e^{-\frac{x^2}{2}}}{\sqrt{2\pi} } \cdot ( 2-x^2)
\label{eq: lnadivsuma}
\end{equation}
By setting $F''(x)= 0$, we obtain the inflection point $x_1 = \pm \sqrt{2}$. Through the computation of the third derivative $F'''(x)$, we observe that the $F''(x)$ and $F'''(x)$ tend towards zero when $x_2 = \pm 5.075$. Consequently, we choose this point as the second endpoint. With the segmentation points determined $x_1$ and $x_2$, we can leverage the $\rm{numpy.polyfit}$ API to find the approximates $F_1$, $F_2$ and $F_3$. And to ensure the robustness of the approximate GeLU function, we add $\epsilon = 10^{-5}$. The approximate substitution of the Gelu function is as follows:

\begin{equation}
seg5GeLU(x) = 
\begin{cases}
    \epsilon, & \text{if } x  <  -5.075 \\
    F_1(x), & \text{if } -5.075 \leq x  <  -\sqrt{2} \\
    F_2(x), & \text{if } -\sqrt{2} \leq 0 < \sqrt{2} \\
    F_3(x), & \text{if } \sqrt{2} \leq 0 < 5.075 \\
    x + \epsilon, & \text{if } x \geq 5.075
\end{cases}
\label{eq:seg5GeLU}
\end{equation}
For the interval functions $F_1$, $F_2$ and $F_3$, we refer the reader to Appendix~\ref{appen: geluinterval}

\subsection{Secure GeLU $\Pi_\mathsf{gelu}$}

This subsection introduces the secure GeLU $\Pi_\mathsf{gelu}$. Before calculating $\Pi_\mathsf{gelu}$, we first attempt to handle the GeLU function, described in Eq. \eqref{eq: GeLU}.

To compute seg5GeLU, we initially calculate $F_1$, $F_2$, and $F_3$. Given the SIMD ciphertext $[\![\mathbf{X}]\!]_\mathcal{A}$, $\epsilon$, and $[\![\mathbf{X} \boxplus \epsilon]\!]_\mathcal{A}$ are held by $\mathcal{B}$, we calculate $[\![\mathbf{X}^2]\!]_\mathcal{A} = \rm BFV.Square([\![\mathbf{X}]\!]_\mathcal{A})$, $[\![\mathbf{X}^3]\!]_\mathcal{A}= \rm BFV.HMult([\![\mathbf{X}^2]\!]_\mathcal{A}, [\![\mathbf{X}]\!]_\mathcal{A})$ , $[\![\mathbf{X}^4]\!]_\mathcal{A} = \rm BFV.Square([\![\mathbf{X}^2]\!]_\mathcal{A})$, subsequently derive $[\![F_1]\!]_\mathcal{A}$, $[\![F_2]\!]_\mathcal{A}$, and $[\![F_3]\!]_\mathcal{A}$. at party $\mathcal{B}$. Then, we need to select the segment. Hence, we construct $[\![\mathbf{X}]\!]_\mathcal{A}\boxminus \mathbf{R}$, where $\mathbf{R}$ is a uniformly random sampled from $\mathbb{Z}_p$. Through a round of communication, $\mathcal{A}$ and $\mathcal{B}$ get the secret sharing $\langle \mathbf{X} \rangle_\mathcal{A} = \mathbf{X} \ominus \mathbf{R}$, and $\langle \mathbf{X} \rangle_\mathcal{B}=\mathbf{R}$, respectively.

These shares must be transformed into the $\mathbb{Z}_{2^K}$ from $\mathbb{Z}_p$. We can directly employ the protocol $\Pi_\mathsf{LT}$ in CrypTFlow2 \cite{Cryptflow22020} for segment selection. $\mathcal{A}$ and $\mathcal{B}$ then locally compute:
\begin{equation}
\begin{split}
\langle \mathbf{S}_0 \rangle^B_\mathcal{P}&= \Pi_\mathsf{LT}( \langle \mathbf{X} \rangle_\mathcal{P}, -5.075) \quad ~~  \textcolor{gray}{~~if ~\langle \mathbf{S}_0 \rangle^B_\mathcal{P} = 1, ~x < -5.075}\\ 
\langle \mathbf{S}_1 \rangle^B_\mathcal{P} &= \Pi_\mathsf{LT}( \langle \mathbf{X} \rangle_\mathcal{P}, -\sqrt{2}) \quad  \quad ~ \textcolor{gray}{~~if ~\langle \mathbf{S}_1 \rangle^B_\mathcal{P} = 1, ~x < -\sqrt{2}}\\ 
\langle \mathbf{S}_2 \rangle^B_\mathcal{P} &= \Pi_\mathsf{LT}( \sqrt{2}, \langle \mathbf{X} \rangle_\mathcal{P}) \quad\quad \quad \textcolor{gray}{~~if ~\langle \mathbf{S}_2 \rangle^B_\mathcal{P} = 1, ~ \sqrt{2} < x }\\ 
\langle \mathbf{S}_3 \rangle^B_\mathcal{P} &= \Pi_\mathsf{LT}( 5.075, \langle \mathbf{X} \rangle_\mathcal{P}) \quad\quad \textcolor{gray}{~~if ~\langle \mathbf{S}_3 \rangle^B_\mathcal{P} = 1, ~ 5.075 < x}\\ 
\end{split}
\label{eq: signfunction}
\end{equation}

%############## input the protocol file%###############
\begin{figure}[h!]
    % \centering
    \centering\scalebox{0.80}{
    \begin{tikzpicture}
        \node[line width=1pt,draw, rectangle, fill=gray!20, rounded corners = 3pt, minimum height=0.5cm, minimum width=2.4cm] at (-0.00, 10.4) 
        {Protocol $\Pi_\mathsf{gelu}$};
        \node[line width=1pt, draw, rectangle, rounded corners = 5pt, minimum height=5cm, minimum width=7cm, text width=9cm, align=left] at (3.42, 1) 
        {  \textbf{Initialization}: \\
                ~~$\mathcal{A}$ and $\mathcal{B}$ generate each private key $sk_\mathcal{A}$, $sk_\mathcal{B}$ and public key $pk_\mathcal{A}$, $pk_\mathcal{B}$.\\
            % \textbf{Preprocessing}:\\
            %     ~~$\mathcal{B}$ generates a uniformly random $\mathbf{R} \in \mathbb{Z}_p^{d_m*d_c}$.\\
            \textbf{inputs}: \\
                ~~$\bm{\mathcal{B}}$: SIMD-based ciphertext $[\![\mathbf{X}]\!]_\mathcal{A}$\\
            \textbf{Process}: 
            \begin{itemize}
            \resetlinenumber
            \internallinenumbers
                \item[$\bm{\mathcal{B}}$]: Computes $[\![\mathbf{X}]\!]_\mathcal{A}$, $[\![\mathbf{X}^2]\!]_\mathcal{A}$, $[\![\mathbf{X}^3]\!]_\mathcal{A}$, $[\mathbf{X}^4]\!]_\mathcal{A}$, $[\![\mathbf{F}_1]\!]_\mathcal{A}$, $[\![\mathbf{F}_2]\!]_\mathcal{A}$, $[\![\mathbf{F}_3]\!]_\mathcal{A}$; Generates a uniformly random $\mathbf{R} \in \mathbb{Z}_p^{d_m*d_c}$; Flattens $\mathbf{R}$ row-wise to get $\widetilde{\mathbf{R}}$; Computes $[\![\mathbf{X}]\!]_\mathcal{A}\boxminus \widetilde{\mathbf{R}}$; Denotes $\langle \mathbf{X}\rangle_\mathcal{A} =\widetilde{\mathbf{R}}$;

                 \item[$\bm{\mathcal{A}}$]: Receives $[\![\mathbf{X}]\!]_\mathcal{A}\boxminus \widetilde{\mathbf{R}}$; Decrypts $[\![\mathbf{X}]\!]_\mathcal{A}\boxminus \mathbf{R}$ to obtain $\langle \mathbf{X}\rangle_\mathcal{B} = \mathbf{X}\ominus \widetilde{\mathbf{R}}$;

                \item[$\bm{\mathcal{P}}$]\footnote{Since $\mathcal{A}$ and $\mathcal{B}$ follow the same computation steps, we denote them as $\mathcal{P}\in \{\mathcal{A}, \mathcal{B}\}$. Before execution, $\langle \mathbf{X}\rangle_\mathcal{A}$ and $\langle \mathbf{X}\rangle_\mathcal{B}$ are transformed from the $\mathbb{Z}_p$ to the $\mathbb{Z}_{2^k}$.}: Participants $\mathcal{A}$ and $\mathcal{B}$ do the same as follow: \\
                $\langle \mathbf{S}_0 \rangle^B_\mathcal{P}= \Pi_\mathsf{LT}( \langle \mathbf{X} \rangle_\mathcal{P}, -5.075) ~  \textcolor{gray}{if ~\langle \mathbf{S}_0 \rangle^B_\mathcal{P} = 1, ~x < -5.075}$\\ 
                $\langle \mathbf{S}_1 \rangle^B_\mathcal{P} = \Pi_\mathsf{LT}( \langle \mathbf{X} \rangle_\mathcal{P}, -\sqrt{2})  ~ \textcolor{gray}{if ~\langle \mathbf{S}_1 \rangle^B_\mathcal{P} = 1, ~x < -\sqrt{2}}$\\
                $\langle \mathbf{S}_2 \rangle^B_\mathcal{P} = \Pi_\mathsf{LT}( \sqrt{2}, \langle \mathbf{X} \rangle_\mathcal{P}) \textcolor{gray}{if ~\langle \mathbf{S}_2 \rangle^B_\mathcal{P} = 1, ~ \sqrt{2} < x }$\\ 
                $\langle \mathbf{S}_3 \rangle^B_\mathcal{P} = \Pi_\mathsf{LT}( 5.075, \langle \mathbf{X} \rangle_\mathcal{P})  \textcolor{gray}{if ~\langle \mathbf{S}_3 \rangle^B_\mathcal{P} = 1, ~ 5.075 < x}$\\
                
                Then $\mathcal{P}$ locally sets:\\
                $\langle \mathbf{b}_0 \rangle^B_\mathcal{P}= \langle \mathbf{S}_0 \rangle^B_\mathcal{P} ~~  \textcolor{gray}{if ~  \langle \mathbf{b}_0 \rangle_\mathcal{P} = 1,\quad x \leq -5.075}$\\ 
                $\langle \mathbf{b}_1\rangle^B_\mathcal{P}= \langle \mathbf{S}_0 \rangle^B_\mathcal{P} \rm \lozenge \langle \mathbf{S}_1 \rangle^B_\mathcal{P}  \textcolor{gray}{~if ~ \langle \mathbf{b}_1 \rangle^B_\mathcal{P} = 1,-5.075 < x \leq -\sqrt{2}}$\\ 
                $\langle \mathbf{b}_2\rangle^B_\mathcal{P}= \langle \mathbf{S}_1 \rangle^B_\mathcal{P}  \rm \lozenge \langle \mathbf{S}_2 \rangle^B_\mathcal{P}  ~ \textcolor{gray}{if ~ \langle \mathbf{b}_2 \rangle^B_\mathcal{P} = 1,~ -\sqrt{2} < x \leq \sqrt{2}}$\\  
                $\langle \mathbf{b}_3\rangle^B_\mathcal{P}= \langle \mathbf{S}_2 \rangle^B_\mathcal{P}  \rm \lozenge \langle \mathbf{S}_3 \rangle^B_\mathcal{P}  ~ \textcolor{gray}{if ~~ \langle \mathbf{b}_3 \rangle^B_\mathcal{P} = 1,~~ \sqrt{2} < x \leq 5.075}$\\ 
                $\langle \mathbf{b}_4 \rangle^B_\mathcal{P}= \langle \mathbf{S}_3 \rangle^B_\mathcal{P} ~~  \textcolor{gray}{if  \langle \mathbf{b}_4 \rangle^B_\mathcal{P} = 1,\quad  5.075 \leq x}$\\ 
                $\mathcal{P}$ converts the bool secret sharing to additive sharing, gets $\mathbf{b}_i + 2^f$  $\equiv$ $\langle \mathbf{b}_i \rangle_\mathcal{A} + \langle \mathbf{b}_i \rangle_\mathcal{B}$ mod $2^k$.

                \item[$\bm{\mathcal{A}}$]: Computes $\langle \mathbf{b}_i \rangle_\mathcal{A} \ominus 2^f$; Encrypts $\langle \mathbf{b}_i \rangle_\mathcal{A} \ominus 2^f$ to get $ [\![\langle \mathbf{b}_i \rangle_\mathcal{A} \ominus 2^f ]\!]_\mathcal{A}$; Sends $ [\![\langle \mathbf{b}_i \rangle_\mathcal{A} \ominus 2^f ]\!]_\mathcal{A}$ to $\mathcal{B}$.
                
                \item[$\bm{\mathcal{B}}$]: Receives $ [\![\langle \mathbf{b}_i \rangle_\mathcal{A} \ominus 2^f ]\!]_\mathcal{A}$; Computes $ [\![ \mathbf{b}_i ]\!]_\mathcal{A}= \langle \mathbf{b}_i \rangle_\mathcal{B}$ $ \boxplus [\![\langle \mathbf{b}_i \rangle_\mathcal{A} \ominus 2^f ]\!]_\mathcal{A}$ to get $[\![\mathbf{b}]\!]_\mathcal{A} =[\![$ $\mathbf{b}_0$, $\mathbf{b}_1$, $\mathbf{b}_2$, $\mathbf{b}_3$, $\mathbf{b}_4]\!]_\mathcal{A}$; Computes $[\![\mathbf{Y}]\!]_\mathcal{A}$ = $[\![\mathbf{b}_0]\!]_\mathcal{A} \boxdot \epsilon \boxplus $ $[\![\mathbf{b}_1]\!]_\mathcal{A} \boxdot$ $[\![\mathbf{F}_1]\!]_\mathcal{A}$ $\boxplus$ $[\![\mathbf{b}_2]\!]_\mathcal{A} \boxdot$ $[\![\mathbf{F}_2]\!]_\mathcal{A}$ 
                $\boxplus$ $[\![\mathbf{b}_3]\!]_\mathcal{A} \boxdot$ $[\![\mathbf{F}_3]\!]_\mathcal{A}$ 
                $\boxplus$ $[\![\mathbf{b}_4]\!]_\mathcal{A} \boxdot$ $[\![\mathbf{F}_4]\!]_\mathcal{A}$; Generates uniformly random $\mathbf{N} \in \mathbb{Z}_p^{d_m*d_c}$; Computes $[\![\langle \mathbf{Y}\rangle_\mathcal{A}]\!]_\mathcal{A}= $ $[\![\mathbf{Y}]\!]_\mathcal{A} \boxminus \mathbf{N}$; Sends $[\![\langle \mathbf{Y}\rangle_\mathcal{A}]\!]_\mathcal{A}$ to $\mathcal{A}$.

                \item[$\bm{\mathcal{A}}$]: Receives $[\![\mathbf{Y}]\!]_\mathcal{A} \boxminus \mathbf{N}$.

            \end{itemize}  
            \textbf{Private outputs}: \\
            $\bm{\mathcal{A}}$: 
            $\langle \mathbf{Y} \rangle_\mathcal{A}$ = $\textbf{BFV.Decrypt}(sk_\mathcal{A}$, $[\![\mathbf{Y}]\!]_\mathcal{A} \boxminus \mathbf{N})$ \\
            $\bm{\mathcal{B}}$: $\langle \mathbf{Y} \rangle_\mathcal{B}$ = $\mathbf{N}$ \\
            };
    \end{tikzpicture}}
    \caption{Secure GeLU $\Pi_\mathsf{gelu}$}
    \label{pro: gelu}
\end{figure}
%############## input the protocol file%###############

Now, we need to compute the sign encoding of the input. We use the XOR ($\lozenge$) operation to obtain sign encoding: $b_0$, $b_1$, $b_2$, $b_3$, $b_4$, s.t., 

\[
\begin{aligned}
& b_i = 1
& \text{iff x belongs to the i-th segment.}
\end{aligned}
\]
$\mathcal{A}$ and $\mathcal{B}$  locally sets : 
\begin{equation}
\begin{split}
\langle \mathbf{b}_0 \rangle^B_\mathcal{P}&= \langle \mathbf{S}_0 \rangle^B_\mathcal{P} \quad ~~  \textcolor{gray}{if  \langle \mathbf{b}_0 \rangle_\mathcal{P} = 1,\quad x \leq -5.075}\\ 
\langle \mathbf{b}_1\rangle^B_\mathcal{P}&= \langle \mathbf{S}_0 \rangle^B_\mathcal{P} ~\rm \lozenge ~\langle \mathbf{S}_1 \rangle^B_\mathcal{P} ~~ \textcolor{gray}{if ~ \langle \mathbf{b}_1 \rangle^B_\mathcal{P} = 1,-5.075 < x \leq -\sqrt{2}}\\ 
\langle \mathbf{b}_2\rangle^B_\mathcal{P}&= \langle \mathbf{S}_1 \rangle^B_\mathcal{P} ~ \rm \lozenge ~\langle \mathbf{S}_2 \rangle^B_\mathcal{P}  \quad \textcolor{gray}{if ~~ \langle \mathbf{b}_2 \rangle^B_\mathcal{P} = 1,~~ -\sqrt{2} < x \leq \sqrt{2}}\\  
\langle \mathbf{b}_3\rangle^B_\mathcal{P}&= \langle \mathbf{S}_2 \rangle^B_\mathcal{P} ~ \rm \lozenge ~\langle \mathbf{S}_3 \rangle^B_\mathcal{P}  \quad \textcolor{gray}{if ~~ \langle \mathbf{b}_3 \rangle^B_\mathcal{P} = 1,~~ \sqrt{2} < x \leq 5.075}\\ 
\langle \mathbf{b}_4 \rangle^B_\mathcal{P}&= \langle \mathbf{S}_3 \rangle^B_\mathcal{P} \quad ~~  \textcolor{gray}{if  \langle \mathbf{b}_4 \rangle^B_\mathcal{P} = 1,\quad  5.075 \leq x}\\ 
\end{split}
\label{eq: segsel}
\end{equation}

Subsequently, the bool secret sharing is converted to additive secret sharing by using $\Pi_\mathsf{B2A}$ \footnote{The protocol $\Pi_\mathsf{B2A}$ can convert the bool secret sharing to additive secret sharing, which is also from Cryptflow2 \cite{Cryptflow22020}}. Since we employ a fixed-point representation where an input number $x$ is defined as $\lfloor \overline{x} \cdot 2^s \rceil$ mod ${2^k}$, the additive secret sharing yields a result of $\mathbf{b}_i + 2^s$  $\equiv$ $\langle \mathbf{b}_i \rangle_\mathcal{A} + \langle \mathbf{b}_i \rangle_\mathcal{B}$ mod $ p$, $\mathbf{b}_i \in \{0,1\}$. encrypting $\langle \mathbf{b}_i \rangle_\mathcal{A}$, party $\mathcal{A}$ sends $[\![\langle \mathbf{b}_i \rangle_\mathcal{A}]\!]_\mathcal{A}$ to $\mathcal{B}$. $\mathcal{B}$ computes: 
\begin{equation}
    [\![\mathbf{b}_i]\!]_\mathcal{A} = \langle \mathbf{b}_i \rangle_\mathcal{B} \boxplus [\![\langle \mathbf{b}_i \rangle_\mathcal{A}]\!]_\mathcal{A} \boxminus 2^s
\end{equation}
Finally, the GeLU function can be represented as:
\begin{equation}
\begin{split}
GeLU(x) = b_0 \epsilon + b_1F_1(x) + b_2F_2(x)  + b_3F_3(x)
&+ b_4  (x+\epsilon)  
\label{eq:approxGeLU}
\end{split}
\end{equation}
In conclusion, $\mathcal{B}$ obtains the ciphertext result of GeLU, and the specific procedure is shown in Fig \ref{pro: gelu}.

\section{EVALUATION}
\subsection{Experimental Setup}

\textbf{Implementation}: FASTLMPI is implemented in C++ and we use EMP toolkit\footnote{\url{https://github.com/emp-toolkit/emp-tool}} \cite{emptool2016} to implement communication between the parties. Leveraging the SCI\footnote{ \url{https://github.com/mpc-msri/EzPC/tree/master/SCI}} library in EzPC \cite{EzPC2019} for fixed-point numerical computations and SEAL\footnote{\url{https://github.com/microsoft/SEAL}} \cite{SEAL2023} library for BFV homomorphic encryption. FASTLMPI and extern libraries are compiled by GCC (version 11.4.0) on Ubuntu 22.04.\\
\textbf{Testbed Environment}: All the following experiments were performed on a rack server with the Intel(R) Xeon(R) Gold 6238R CPU @ 2.20GHz and 377 gigabytes of RAM. We control the communication bandwidth using the Linux Traffic Control (tc) command. We simulate two different WAN environments, which are $\rm WAN_1 = $\{400Mbps, 10 ms\} and $\rm WAN_2 = $\{200Mbps, 40 ms\}.\\
\textbf{Concrete SEAL’s and SS's Parameters}: This paper uses bit-length $k=37$ and precision $s=12$ for the expressed of fixed-point numbers, and using the SEAL parameters $\rm HE_{SCI}$$ = \{N=8192, q \approx 180, p \approx 29 \}$ (Similar to \cite{BOLT2024}, \cite{Cheetah2022}) for HE.\\
\textbf{Model Architectures}: We measure the performance of FASTLMPI on $\rm BERT_{BASE}$ \cite{Bertbase2018} and denote the number of layers (i.e., Transformer blocks) as $L$, the hidden size as $H$, and the number of the self-attention head as $A$. The size of feed-forward is $4H$, i.e., 3072 for the $H=768$ and $L=12$. \\
\textbf{Metrics}: We test on four datasets from the GLUE \cite{GLUE2018} benchmark, which is widely used to evaluate the model performance. We use four datasets in GLUE, consisting of three classification tasks: MRPC, RTE, and SST-2.

\subsection{Microbenchmarks}

\subsubsection{Secure matrix multiplication}
We first compare the performance of secure matrix multiplication with $\rm SCI_\mathsf{OT}$ \cite{Cryptflow22020} and $\rm FIT_\mathsf{FC}$ \cite{FIT22024} under three network environments: LAN, $\rm WAN_1$, and $\rm WAN_2$. The results are shown in Tab. \ref{tab: protocol_matmul}.
\begin{table}[h]
    \caption{Performance comparison of secure matrix multiplication. Timing results are
averaged from 5 runs}
    \centering
    \setlength{\tabcolsep}{20pt}
    \scalebox{0.83}{
    \begin{tabular}{@{}c|ccc@{}}
    % \begin{tabular}{@{}c|ccc@{}} 
    \toprule
    \multirow{2}{*}{\textbf{Method }} & \multicolumn{3}{c}{\textbf{Time (ms) under LAN} }  \\
    %\multicolumn{3}{c}{\multirow{2}{*}{\textbf{Optimization}}} \\
    \cline{2-4}
     & \textbf{FIT}$_\mathsf{ FC}$ \cite{FIT2024}  & \textbf{SCI}$_\mathsf{ OT}$ \cite{Cryptflow22020} & 
    \textbf{$\Pi_\mathsf{matmul}$} \\
    \midrule
    % data start----------------------
   { \footnotesize $32\times 32 \times 64$} &
    1296  & 84 & \textbf{37} \\
    { \footnotesize $ 128 \times 64 \times 64$}& 
    1370  & 288 & \textbf{41}  \\
   { \footnotesize $128 \times 256 \times 64$} & 2462 & 962 & \textbf{161}  \\
      {\footnotesize $256 \times 256 \times 64$}  & 2492  & 1729 & \textbf{210}  \\
   % data end----------------------
\midrule
   & 
   \multicolumn{3}{c}{\textbf{Time (ms) under  WAN$_1$} } \\
    \midrule
    % data start----------------------
   { \footnotesize $32\times 32 \times 64$} &
    2470  & 533 & \textbf{335} \\
     { \footnotesize $ 128 \times 64 \times 64$}& 
    2964  & 1151 & \textbf{489}  \\
   { \footnotesize $128 \times 256 \times 64$} & 4670 & 4576 & \textbf{2014}  \\
      {\footnotesize $256 \times 256 \times 64$}  & 4421  & 10931 & \textbf{4123}  \\
   % data end----------------------

\midrule
   & 
   \multicolumn{3}{c}{\textbf{Time (ms)} under  WAN$_2$} \\
    \midrule
    % data start----------------------
   { \footnotesize $32\times 32 \times 64$} &
    4796  & 2032 & \textbf{1075} \\
     { \footnotesize $ 128 \times 64 \times 64$}& 
    5236  & 4045 & \textbf{1301}  \\
   { \footnotesize $128 \times 256 \times 64$} & 7772 & 12733 & \textbf{4647}  \\
      {\footnotesize $256 \times 256 \times 64$}  & \textbf{7919}  & 17203 & 9913  \\
   % data end----------------------

\midrule
   & 
   \multicolumn{3}{c}{\textbf{Comm. (MB)} } \\
    \midrule
    % data start----------------------
     { \footnotesize $32\times 32 \times 64$}  &
        62.13  & \textbf{4.12} & 11.65   \\
      { \footnotesize $ 128 \times 64 \times 64$}& 67.78 & 29.12 & \textbf{22.95}  \\
         { \footnotesize $128 \times 256 \times 64$} & 101.67  & 123.93 & \textbf{90.72}  \\
    {\footnotesize $256 \times 256 \times 64$}  & \textbf{101.67}  & 241.95 & 181.44 \\
    % data end----------------------

    \bottomrule
    \end{tabular}
    }
    \label{tab: protocol_matmul}
\end{table}
Our method outperforms the baseline methods in terms of communication and running time in most scenarios. Particularly, in the LAN network environment with a matrix dimension of $128\times256\times64$, our method achieves a $15.3\times$ and $6.0 \times$ speedup compared to $\rm FIT_\mathsf{FC}$ and $\rm SCI_\mathsf{OT}$, respectively.

We also benchmarked the performance of $\rm linear_1$ \footnote{$\rm linear_1$ is the matrix multiplication operation in the first layer of $\rm BERT_\mathsf{base}$: $\mathbf{X}^{128*768} \otimes \mathbf{W}^{768*64}$} against SOTA methods IRON \cite{IRON2022} and BOLT \cite{BOLT2024}. The results are presented in Tab. \ref{tab: protocollinear1}
\begin{table}[h]
    \caption{Comparison of linear$_1$ with SOTA in terms of running time and communication costs. Timing results are averaged from 5 runs}
    \centering
    \setlength{\tabcolsep}{6pt}
    \scalebox{0.86}{
    \begin{tabular}{@{}cc|ccc|c@{}}
    \toprule
    
    \multicolumn{2}{c}{\multirow{2}{*}{\textbf{Method}}} & \multicolumn{3}{|c|}{\textbf{Time (s)} } & \multirow{2}{*}{\textbf{Comm.(MB)}} \\
    \cline{3-5}
   & & $\mathrm{LAN}$ & $\mathrm{WAN_1}$ & $\mathrm{WAN_2}$ & \\
    \midrule

%linear1
  \multirow{3}{*}{\textbf{Linear}$_1$}  
    & \textbf{FIT}$_\mathsf{FC}$ \cite{FIT2024}
   & 5.09  & 8.51 & 13.92 &  192.04\\
  & \textbf{SCI}$_\mathsf{ OT}$ \cite{Cryptflow22020}
   & 2.56  & 9.92 & 25.68 &  406.16\\
  { \footnotesize $ 128 \times 768 \times 64$} & \textbf{IRON \cite{IRON2022}}
   & 6.86  & 8.35 & 33.44 &  4844.14\\
   &  \textbf{BOLT \cite{BOLT2024}} & 12.70 & 13.02 & 13.78  & \textbf{7.06} \\
   & \textbf{$\Pi_\mathsf{matmul}$}  & \textbf{0.32} & \textbf{6.09}  & \textbf{12.62} & 271.47 \\

    \bottomrule
    \end{tabular}
    }
    \label{tab: protocollinear1}
\end{table}

The key takeaway is that our run time is $0.32$ seconds, and the communication cost is $271.47$ MB. In contrast to FIT$_\mathsf{FC}$ and BOLT that leverage HE, our approach offers a $15.9$-fold and $39.7$-fold improvement in runtime under LAN setting, respectively. Furthermore, in comparison with SS-based \textbf{SCI}$_\mathsf{ OT}$, our communication costs are reduced by $1.5\times$. Our approach offers substantial improvements in terms of runtime and communication cost, particularly in LAN settings. Given the other three linear layers in FeedForward, we leverage the BSGS algorithm from BOLT to execute matrix multiplication directly on encrypted data.

\subsubsection{Secure Non-linear}

We conducted a comprehensive evaluation of the non-linear functions (SoftMax, LayerNorm, and GeLU\footnote{The GeLU in BOLT is the version without preprocessing}) embedded in $\rm Bert_\mathsf{base}$ under three network environments (LAN, WAN$_1$, WAN$_2$), addressing the long-standing challenge of nonlinearity in TBM private inference. Results are presented in Tab. \ref{tab: prononlinear}.

%%%%%%%%%%%%%%%%%%%%%%%%%%%%intput file%%%%%%%%%%%%%%%%
\begin{table}[h]
    \caption{Comparison of non-linear computation with SOTA in terms of running time and communication costs. Timing results are averaged from 5 runs}
    \centering
    \setlength{\tabcolsep}{6pt}
    \scalebox{0.86}{
    \begin{tabular}{@{}cc|ccc|c@{}}
    \toprule
    
    \multicolumn{2}{c}{\multirow{2}{*}{\textbf{Method}}} & \multicolumn{3}{|c|}{\textbf{Time (s)} } & \multirow{2}{*}{\textbf{Comm.(MB)}} \\
    \cline{3-5}
   & & $\mathrm{LAN}$ & $\mathrm{WAN_1}$ & $\mathrm{WAN_2}$ & \\
    \midrule

%softmax
 \multirow{3}{*}{\textbf{SoftMax}}  & \textbf{IRON \cite{IRON2022}}
   & 5.36  & 23.25 & 129.82 &  3356.92 \\
 &  \textbf{BOLT \cite{BOLT2024}} & 1.96 & 15.92 & 76.67  & 1285.13  \\
 { \footnotesize $ 128 \times 128$}  & \textbf{FASTLMPI}  & \textbf{0.14} & \textbf{0.31}  & \textbf{0.99} & \textbf{4.94} \\
 \midrule
%layerNorm
  \multirow{3}{*}{\textbf{LayerNorm}}  & \textbf{IRON \cite{IRON2022}}
   & 10.02  & 23.25 & 57.11 &  786.38 \\
   &  \textbf{BOLT \cite{BOLT2024}} & 1.82 & 15.92 & 51.31  & 844.6  \\
    { \footnotesize $ 128 \times 768$}  & \textbf{FASTLMPI}  & \textbf{0.92} & \textbf{4.25}  & \textbf{23.23} & \textbf{154.69} \\
 \midrule
%GeLU
  \multirow{3}{*}{\textbf{GeLU}}  & \textbf{IRON \cite{IRON2022}}
   & 10.30  & 106.31 & 246.39 &  7524 \\
   &  \textbf{BOLT \cite{BOLT2024}} & 4.33 & 36.18 & 101.84  & 2769.89  \\
 { \footnotesize $ 128 \times 3072$}   & \textbf{FASTLMPI}  & \textbf{1.99} &  \textbf{14.50}  & \textbf{45.39} & \textbf{928.77} \\

    \bottomrule
    \end{tabular}
    }
    \label{tab: prononlinear}
\end{table}

%%%%%%%%%%%%%%%%%%%%%%%%%%%%intput file%%%%%%%%%%%%%%%%

Our findings reveal that the non-linear protocol within FASTLMPI offers substantial performance gains across various network environments. Specifically, $\Pi_\mathsf{softmax}$ surpasses IRON by achieving a $38.3 \times$, $75 \times$, and $131.13\times$ speedup in runtime under LAN, WAN$_1$, WAN$_2$, respectively, while reducing communication costs by $679.5 \times$. Compared to the state-of-the-art BOLT, $\Pi_\mathsf{softmax}$ demonstrates even more impressive results, with a $14\times$, $ 51.4\times$, and $ 77.4\times$ speedup and a $260.2$-fold reduction in communication costs.

\vspace{-0.2em}
\subsubsection{Accuracy evaluation of seg5GeLU}

Fig. \ref{fig_seg5GeLUalalysis} presents the visualization results and accuracy evaluation of seg5Gelu within a specific range. Specifically, we evaluated the Mean Absolute Error (MAE) of various segmentation methods using a dataset of 10,000 points spanning the range of -6 to 6. The conclusion is that our proposed method achieves a $4.5$-fold and $1.3$-fold MAE reduction compared to Bumblebee and BOLT, respectively. More notably, by carefully selecting specific points for piecewise polynomial fitting of GeLU, we achieve accuracy comparable to BubbleBee \cite{Bumblebee2023}, PUMA \cite{Puma2023} (with 6-degree polynomials), and SecFormer \cite{Secformer2024} (with 7-degree polynomials) using only 3-degree polynomials. Consequently, our approach naturally surpasses BOLT when using 4-degree polynomials.

%%%%%%%%%%%%%%%%%%%%%%%%%%%%%%%%%%%%%%%input file %%%%%%%%%%%%%%%%%%%%%%%
\begin{figure}[!t]
\centering
\subfloat[\scriptsize]{
		\includegraphics[width=1.56in]{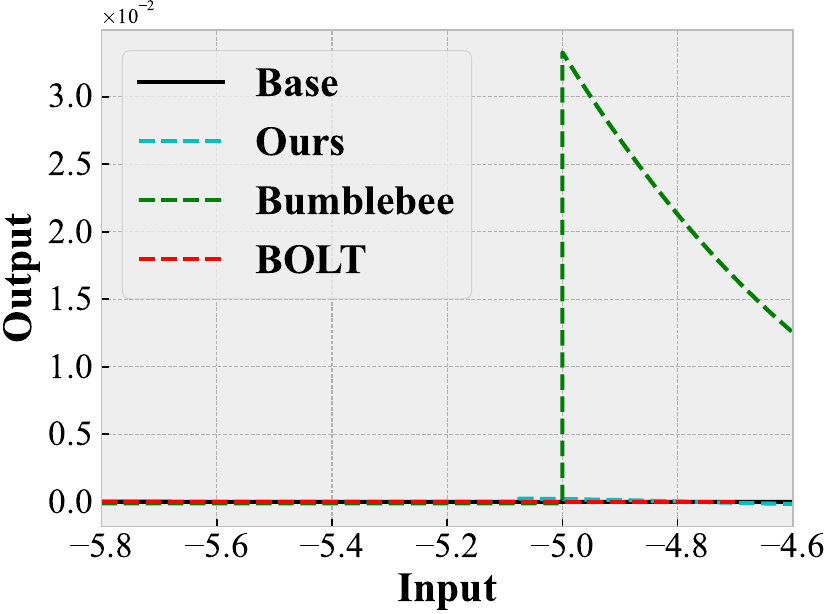}
		\label{}}
\hfil
\subfloat[\scriptsize]{
        \includegraphics[width=1.54in]{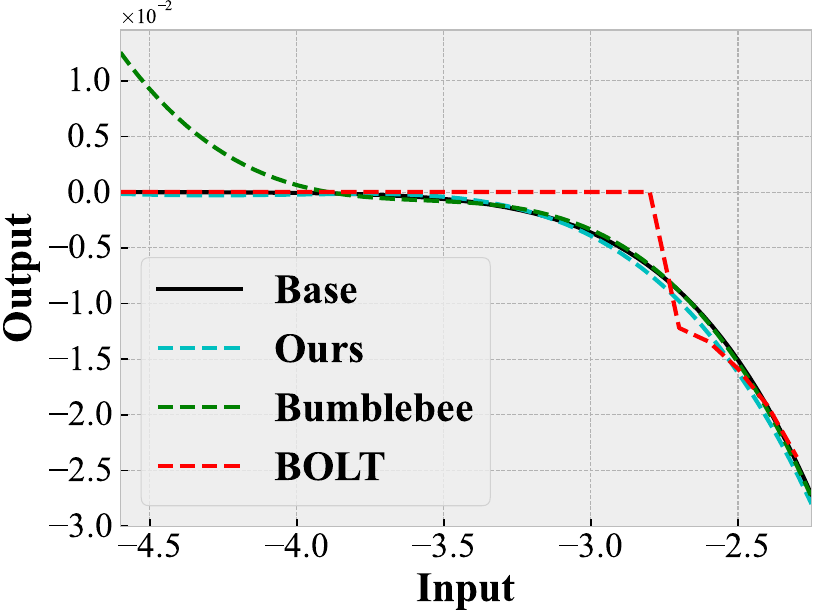}
		\label{}}
\hfil

\subfloat[\scriptsize]{
		\includegraphics[width=1.58in]{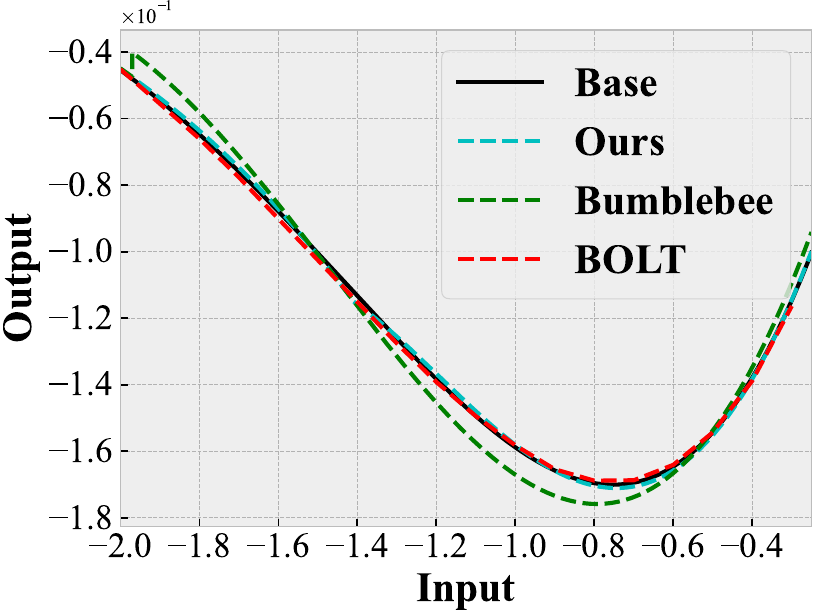}
		\label{}}
\hfil
\subfloat[\scriptsize]{
        \includegraphics[width=1.45in]{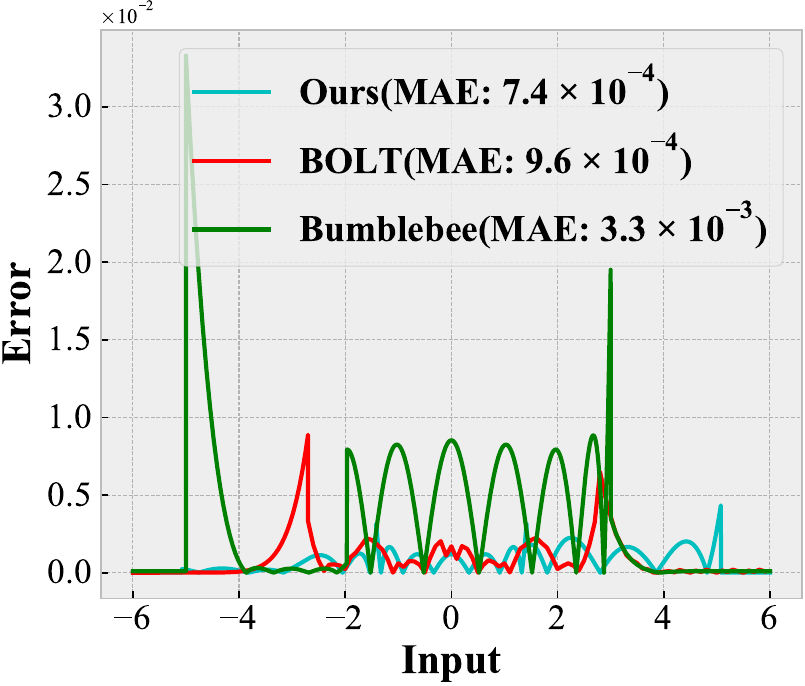}
		\label{}}
\caption{Comparative analysis of GeLU function}
\label{fig_seg5GeLUalalysis}
\end{figure}

%%%%%%%%%%%%%%%%%%%%%%%%%%%%%%%%%%%%%%%input file %%%%%%%%%%%%%%%%%%%%%%%

%%%%%%%%%%%%%%%%%%%%%%input file %%%%%%%%%%%%%%%

\begin{figure}[h!]
\centering
\subfloat[\scriptsize Plot: w/ dif. method]{
		\includegraphics[width=1.45in]{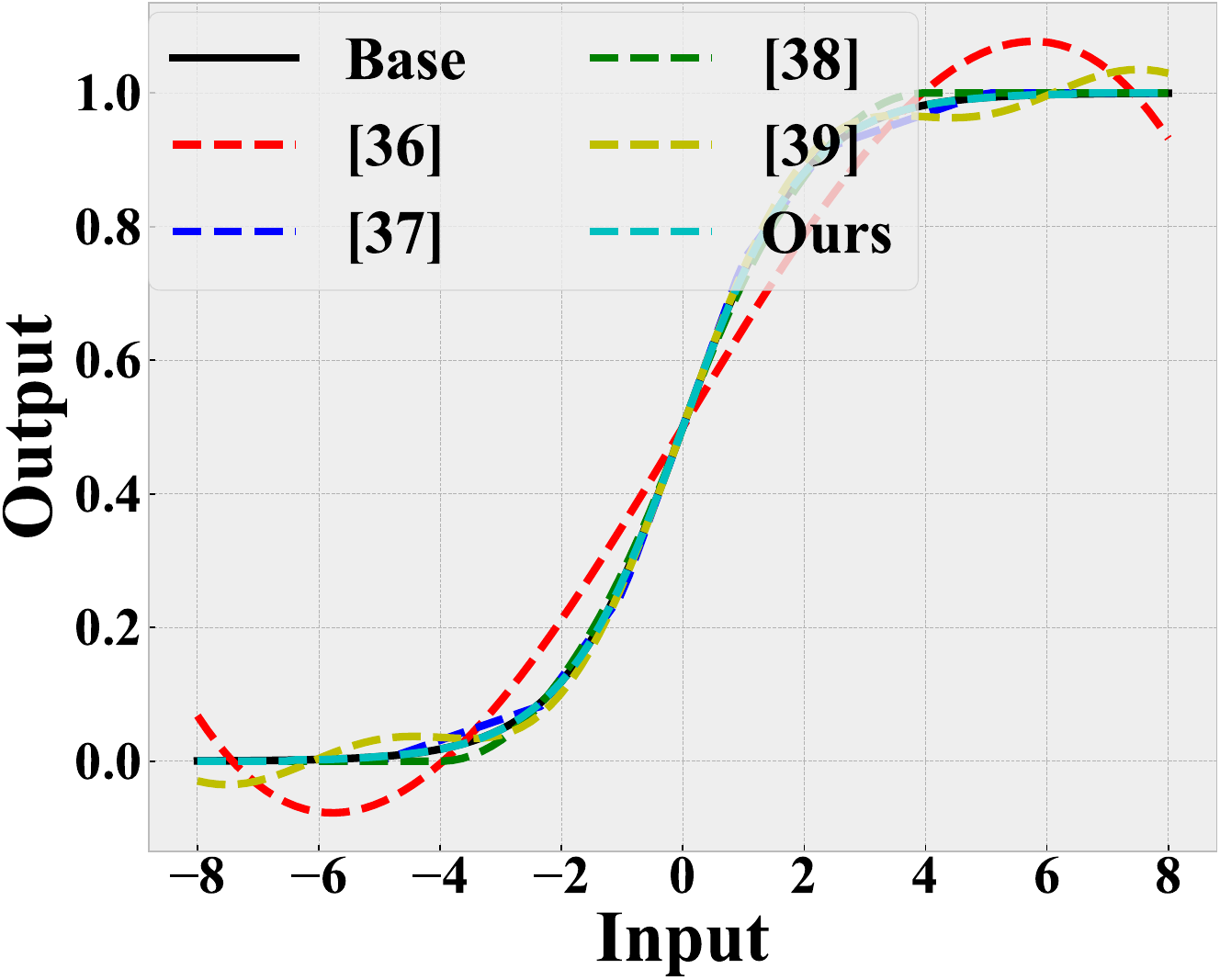}
		\label{fig: sigmoidalalysisa}}
\hfil
\subfloat[\scriptsize MAE: w/ dif. method]{
        \includegraphics[width=1.35in]{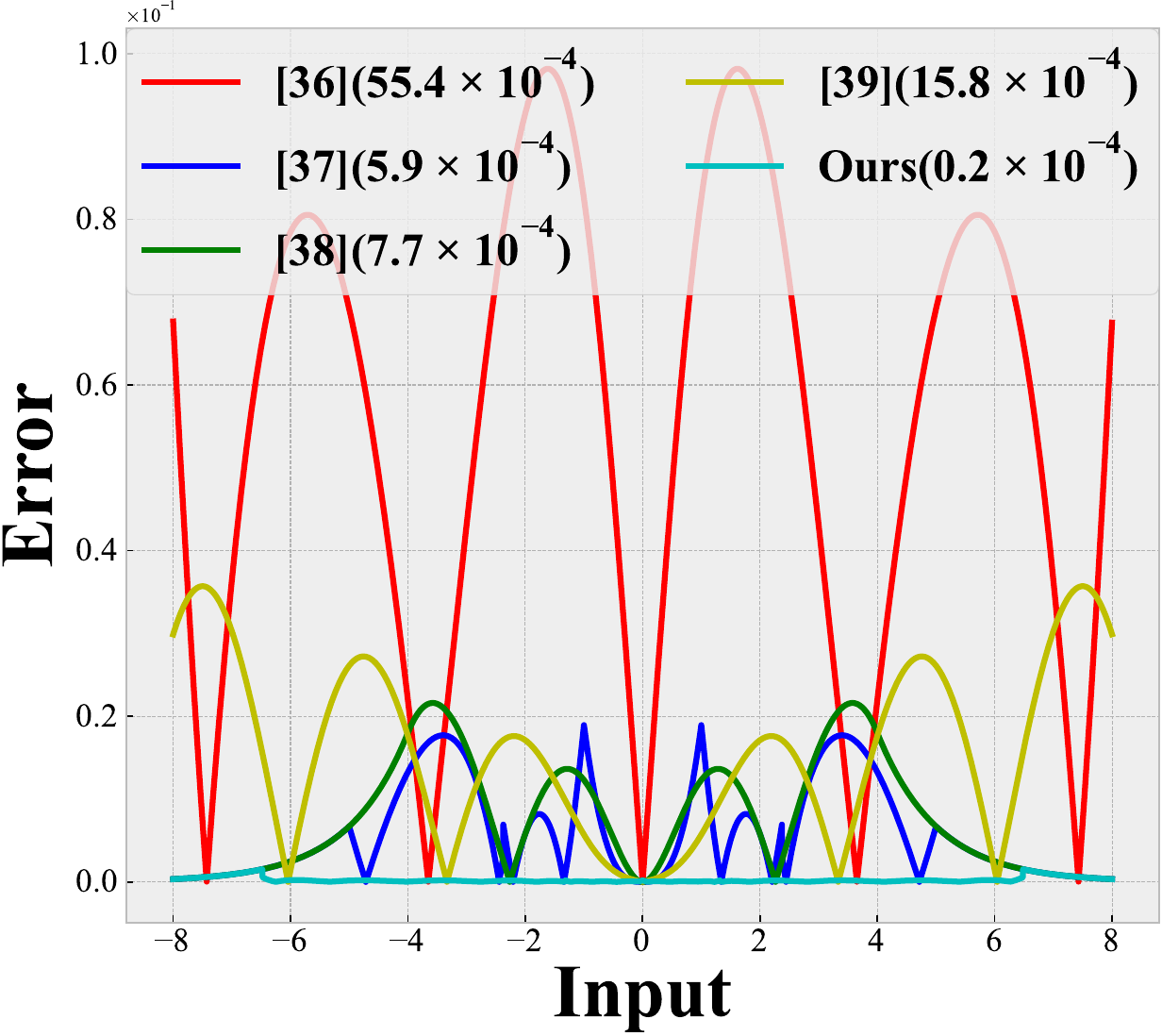}
		\label{fig: sigmoidalalysisb}}
\hfil

\subfloat[\scriptsize Plot: w/ dif. seg-point]{
		\includegraphics[width=1.45in]{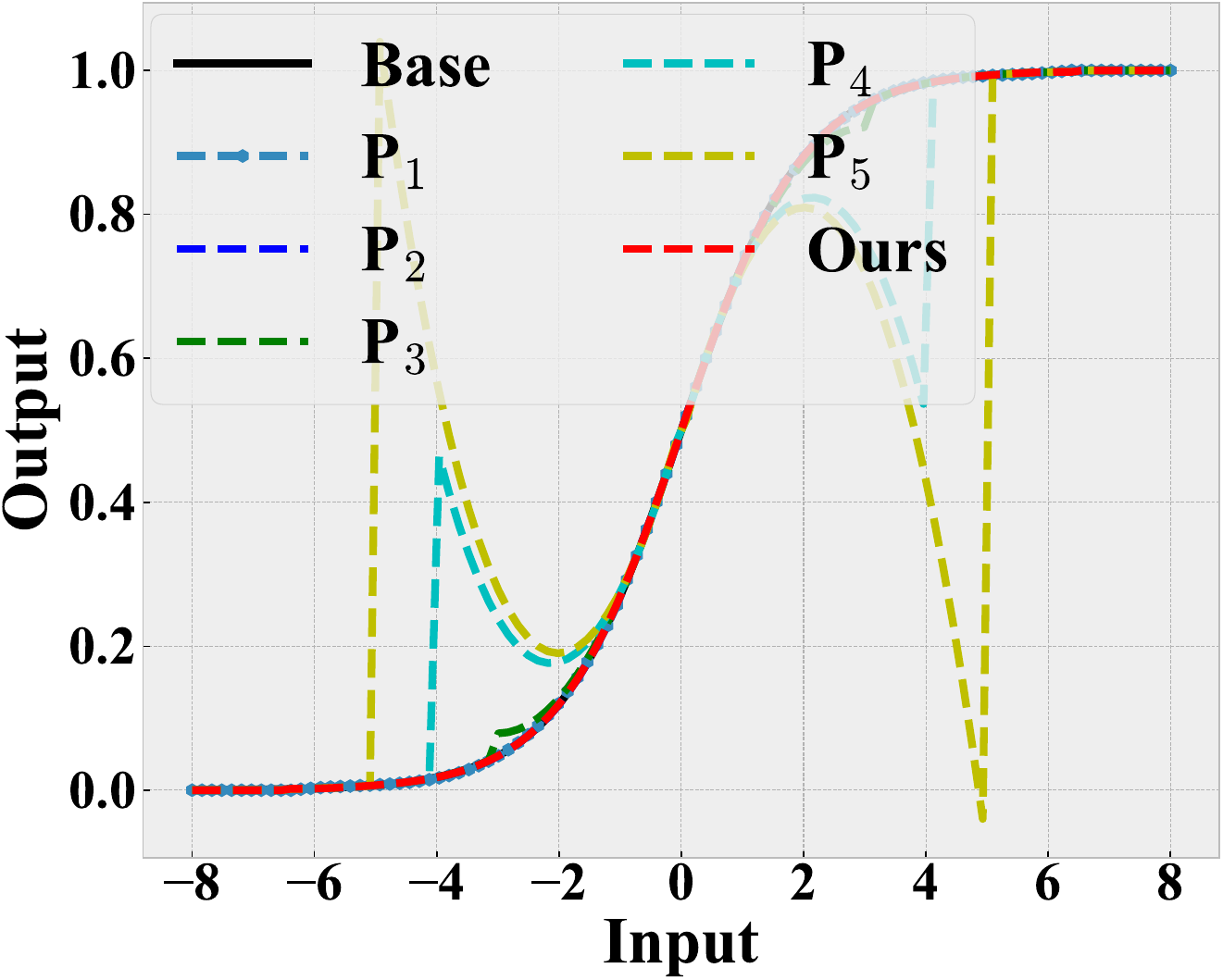}
		\label{fig: sigmoidalalysisc}}
\hfil
\subfloat[\scriptsize MAE: w/ dif. seg-point]{
        \includegraphics[width=1.35in]{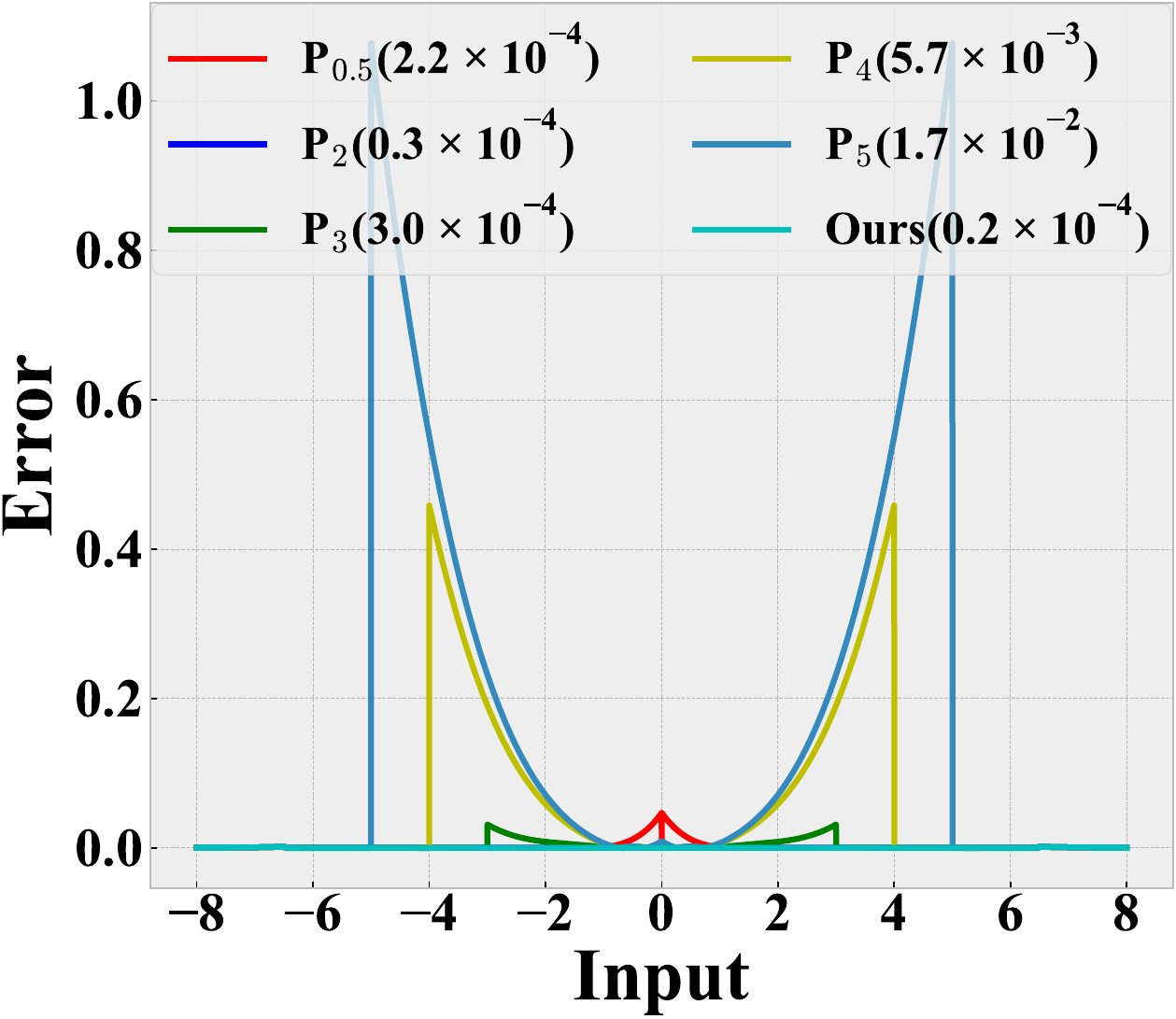}
		\label{fig: sigmoidalalysisd}}

	\caption{Comparative analysis of Sigmoid}
	\label{fig: sigmoidalalysis}
\end{figure}

%%%%%%%%%%%%%%%%%%%%%%input file %%%%%%%%%%%%%%%

To further underscore the extensibility of our method, We also fitted Sigmoid, Tanh, and Mish functions, and the results show that our method achieves a lower MAE and polynomial degree  As depicted in Fig. \ref{fig: sigmoidalalysisa} and Fig. \ref{fig: sigmoidalalysisb}, our fitting results exhibit significantly higher accuracy compared to \cite{sigmoidpoly2024, sigmoidAMI1997, sigmoidSONF2017, sigmoidFloris2023}. Furthermore, as shown in Fig. \ref{fig: sigmoidalalysisc} and Fig. \ref{fig: sigmoidalalysisd}, the superiority of our method is consistent across different partition points. These findings strongly support the effectiveness of selecting specific points for function fitting. We also evaluated the applicability of our proposed method to the Tanh and Mish activation functions. 

%%%%%%%%%%%%%%%%%%%%%%%%%%%%%%%%%inputfile$%%%%%%%%%%%

\begin{figure}[!h]
\centering
\subfloat[\scriptsize Plot: w/ dif. seg\&method]{
		\includegraphics[width=1.59in]{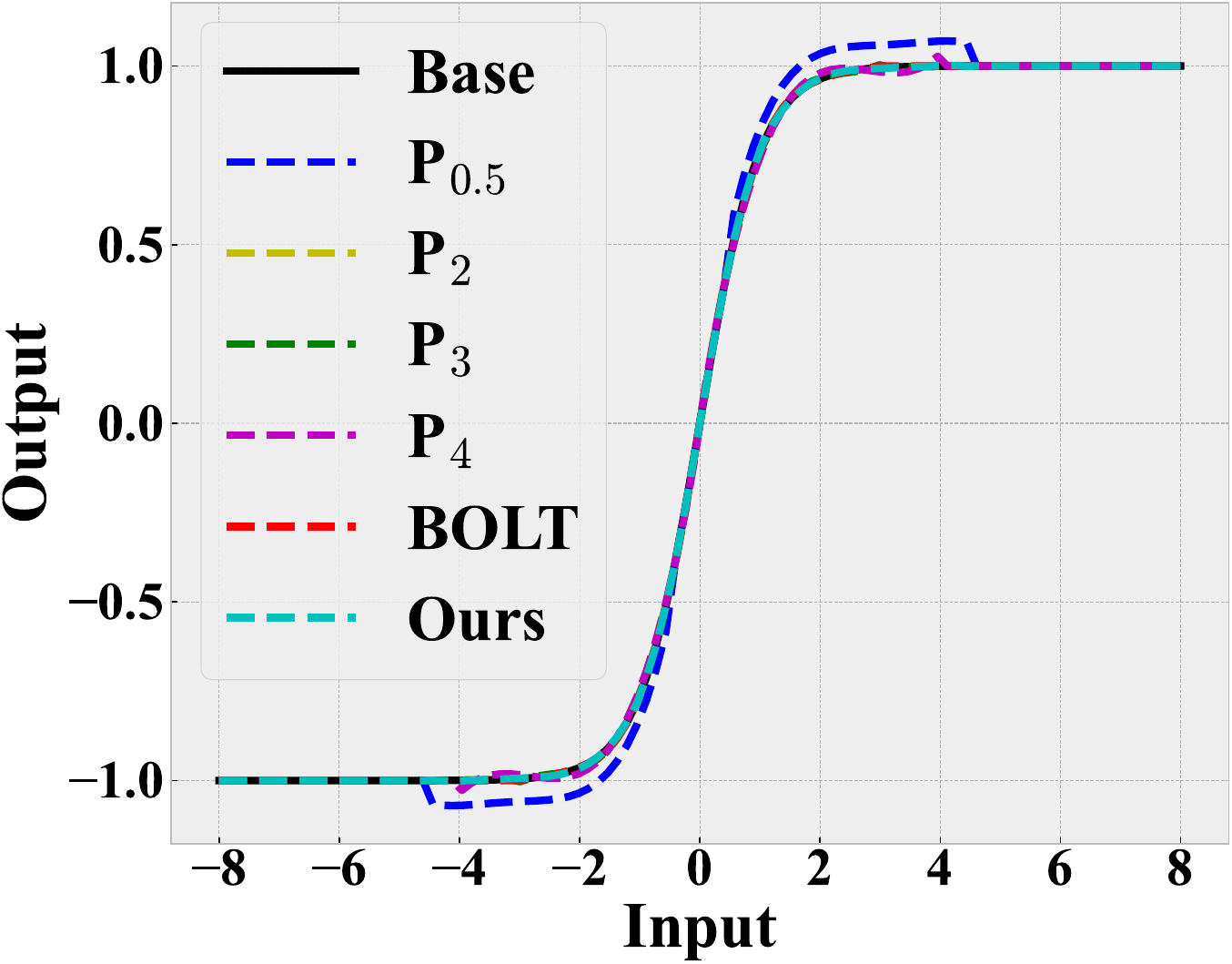}
		\label{}}
\hfil
\subfloat[\scriptsize MAE: w/ dif. seg\&method]{
        \includegraphics[width=1.40in]{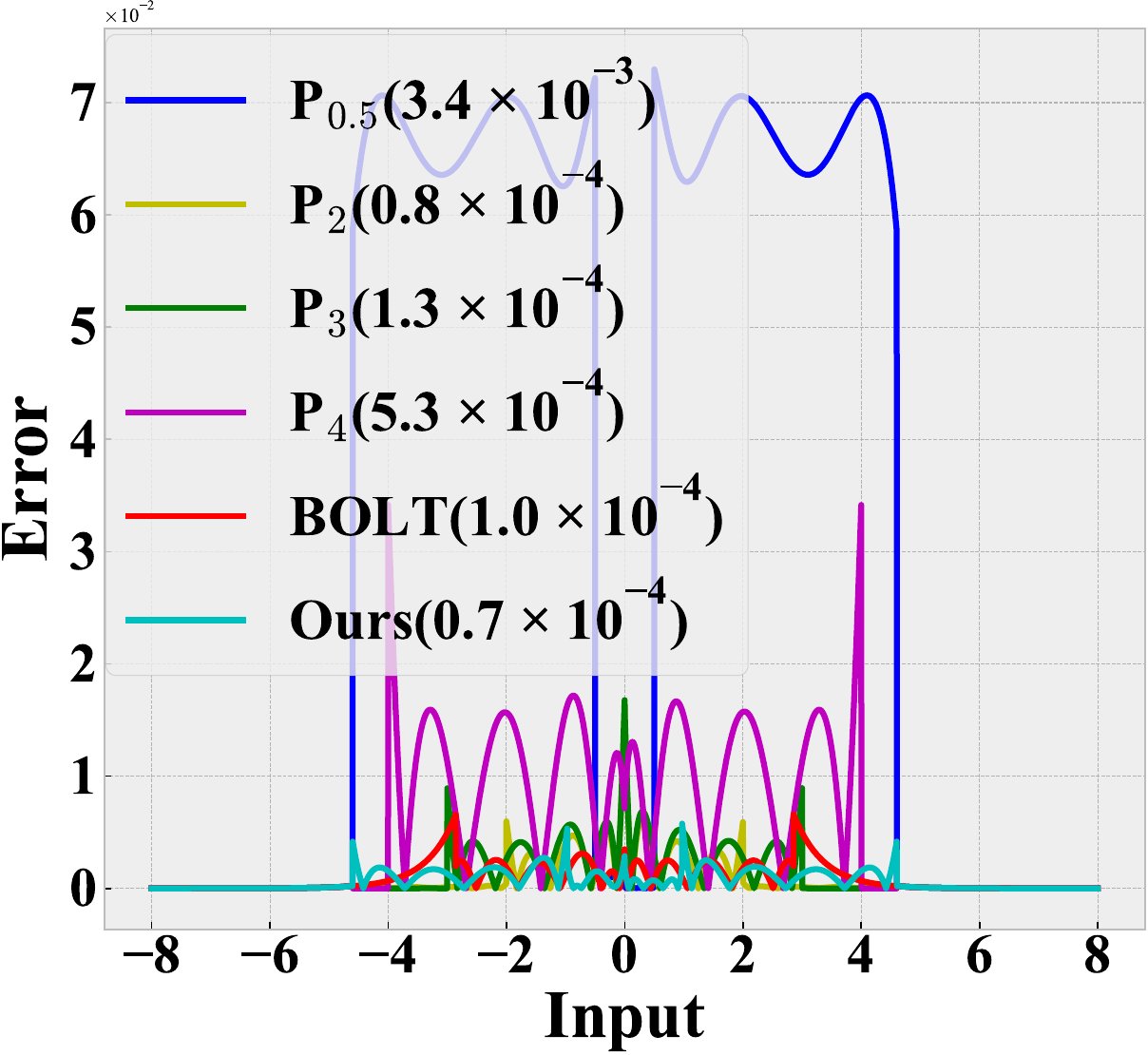}
		\label{}}
\hfil

	\caption{Comparative analysis of Tanh}
	\label{fig: tanhalalysis}
\end{figure}

%%%%%%%%%%%%%%%%%%%%%%%%%%%%%%%%%inputfile$%%%%%%%%%%%

Similar to Sigmoid's testing method, we evaluate the MAE of Tanh fitted at different points in the set $\{0.5, 
2, 3, 4\}$ and at the partition points selected by our method. As shown in Fig. \ref{fig: tanhalalysis}, the results demonstrate the effectiveness of our approach.
%%%%%%%%%%%%%%%%%%%%%%%%%%%%%%%%%inputfile$%%%%%%%%%%%

\begin{figure}[h!]
\centering
\subfloat[\scriptsize Plot: w/ dif. seg\&method]{
		\includegraphics[width=1.53in]{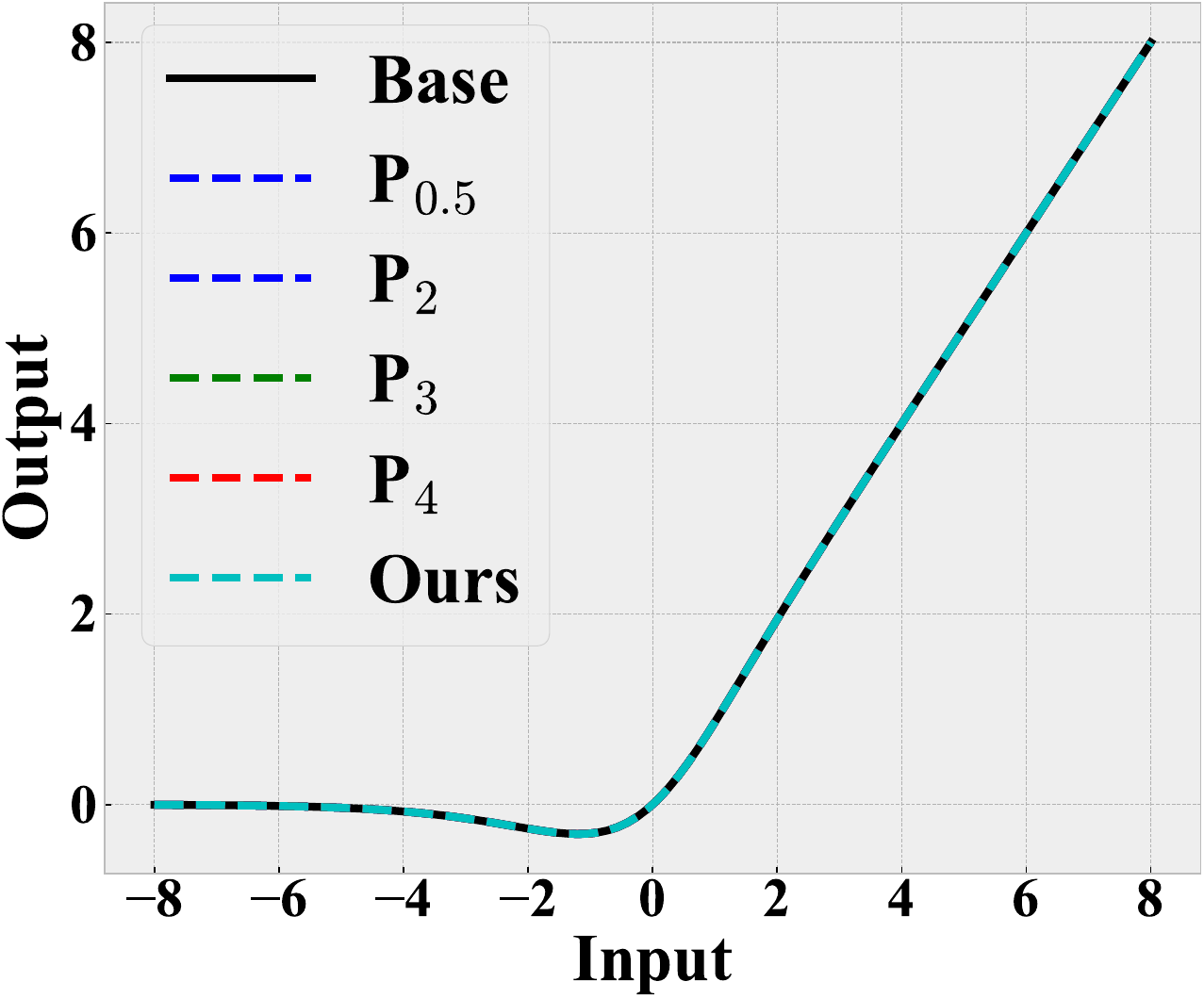}
		\label{}}
\hfil
\subfloat[\scriptsize MAE: w/ dif. seg\&method]{
        \includegraphics[width=1.47in]{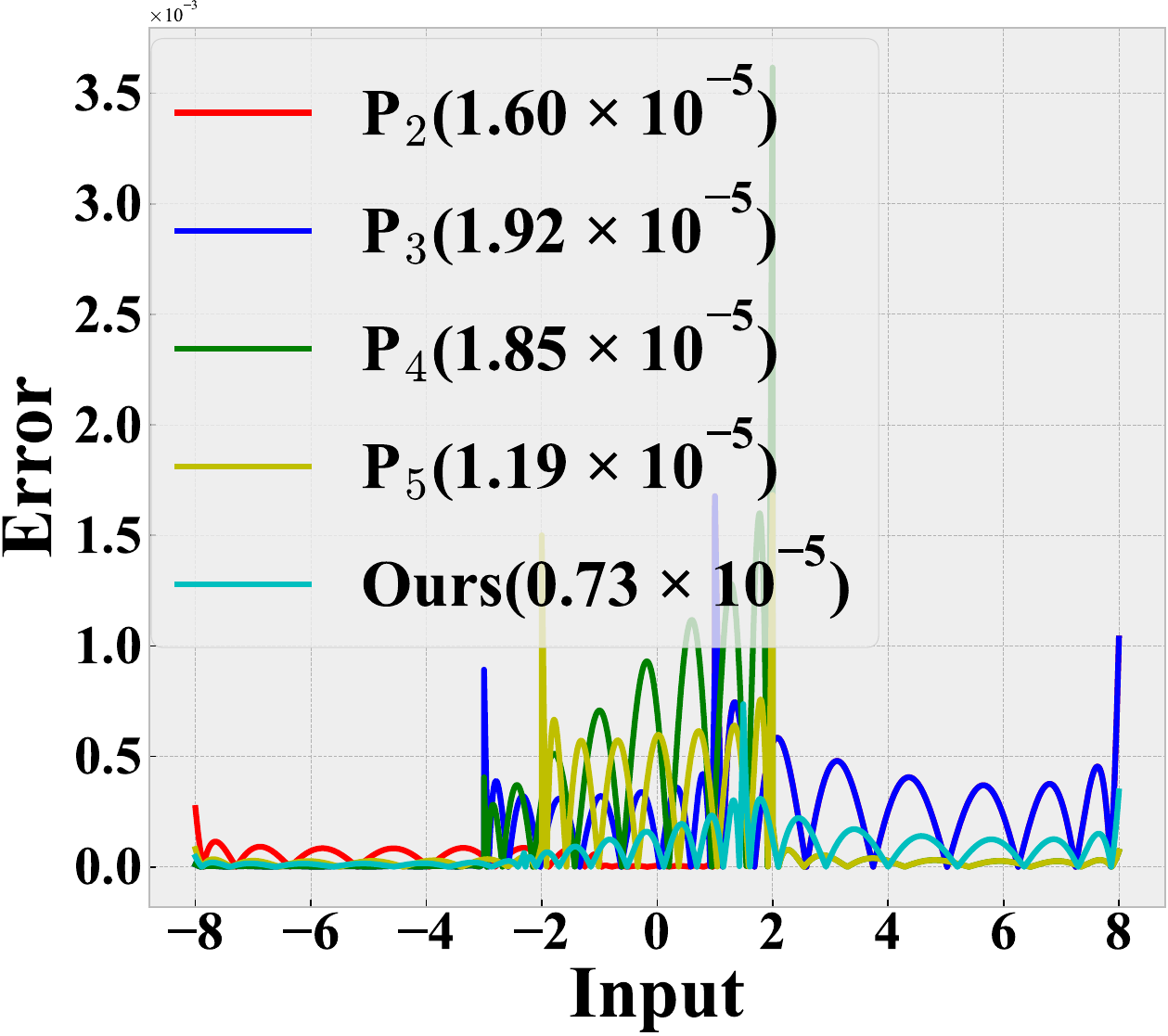}
		\label{}}
\hfil

	\caption{Comparative analysis of Mish}
	\label{fig: mishalalysis}
\end{figure}

%%%%%%%%%%%%%%%%%%%%%%%%%%%%%%%%%inputfile$%%%%%%%%%%%
As Tanh, BOLT uses a 5-degree polynomial to approximate it, we achieve higher accuracy with only a 4-degree polynomial. The evaluation of Mish in Fig. \ref{fig: mishalalysis} provides additional validation that our approach of selecting specific points for piecewise fitting can improve accuracy.
\IEEEpubidadjcol

\subsection{Evaluation on $\rm BERT_\mathsf{base}$ }

\subsubsection{Accuracy}
In Tab \ref{tab:end-to-end-acc}, we evaluated the F1 score on MRPC, RTE, and SST-2 accuracy. We followed the data setting in BOLT \footnote{Model wights are available here: \url{https://drive.google.com/drive/u/0/folders/13bBok39UevQ-6hDWHtBVtLJrYVo5VnsR},}. We only modified the GeLU activation function for a fair comparison while keeping other components unchanged. Our method does not require re-training the model as in MPCformer, nor fine-tuning as in BOLT. Baseline uses GeLU as described in Eq. \eqref{eq: GeLU}, and is otherwise identical to BOLT.
\begin{table}[h]
    \caption{Accuracy of floating-point plaintext, Baseline, BOLT, and FASTLMPI}
    \centering
    \setlength{\tabcolsep}{4pt}
    \scalebox{0.86}{
    \begin{tabular}{@{}cccccc@{}}
    \toprule

    \multirow{2}{*}{\textbf{Dataset}} & \multirow{2}{*}{\textbf{Metric}} & \multirow{2}{*}{\textbf{\cite{BertQuantized2019} (STD)}} & \multirow{2}{*}{\textbf{\shortstack{BERT \\ baseline}}} & \multirow{2}{*}{\textbf{\shortstack{BOLT \cite{BOLT2024} \\ w/o W.E.}}} & \multirow{2}{*}{\textbf{FASTLMPI}} \\
    & & & & & \\
    \midrule
    \textbf{MRPC}    & F1            & 90 (0.23) & 89.8  & 86.7 & 89.1 \\
    \textbf{RTE}     & Accuracy      & 69.7 (1.5) & 54.5 & 65.0 & 54.5 \\
  \textbf{SST-2}    & Accuracy      & 92.4 (0.59)  & 89.5  & 90.5 & 90.7\\
    %\textbf{STS-B}     & Pearson corr. & 89.6 (0.31) & 94.01 & 93.1 & 93.98 \\
    \bottomrule
    \end{tabular}
    }
    \label{tab:end-to-end-acc}
\end{table}

RET-BOLT-Acc.:=$65.0$, which may be due to the model parameters re-training by BOLT.

\subsubsection{End-to-End inference}

\begin{figure*}[!]
\centering
\subfloat[\scriptsize LAN (Sec.)]{
		\includegraphics[width=1.58in]{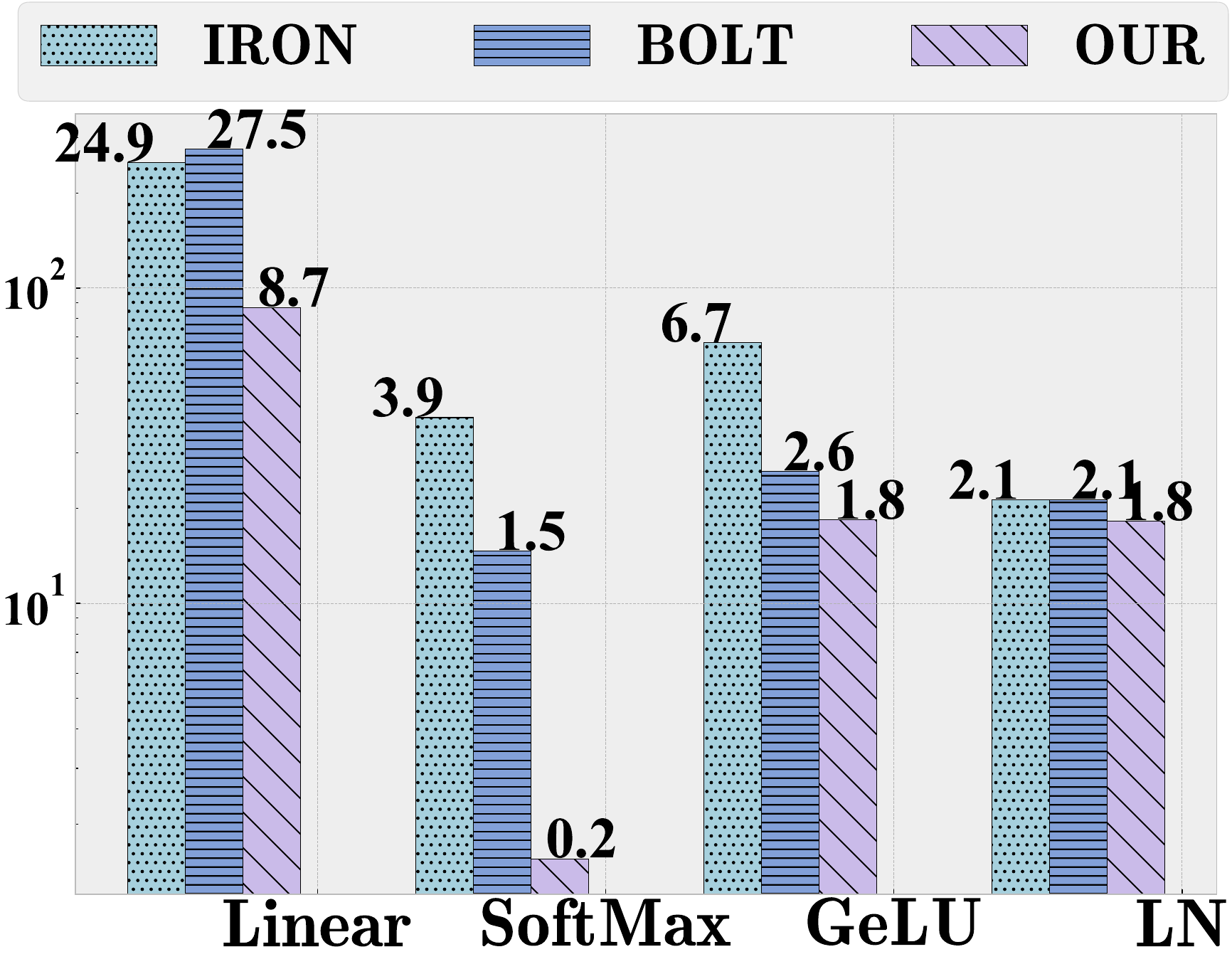}
		\label{seggelu:seg1}}
\hfil
\subfloat[\scriptsize WAN$_1$ (Sec.)]{
        \includegraphics[width=1.58in]{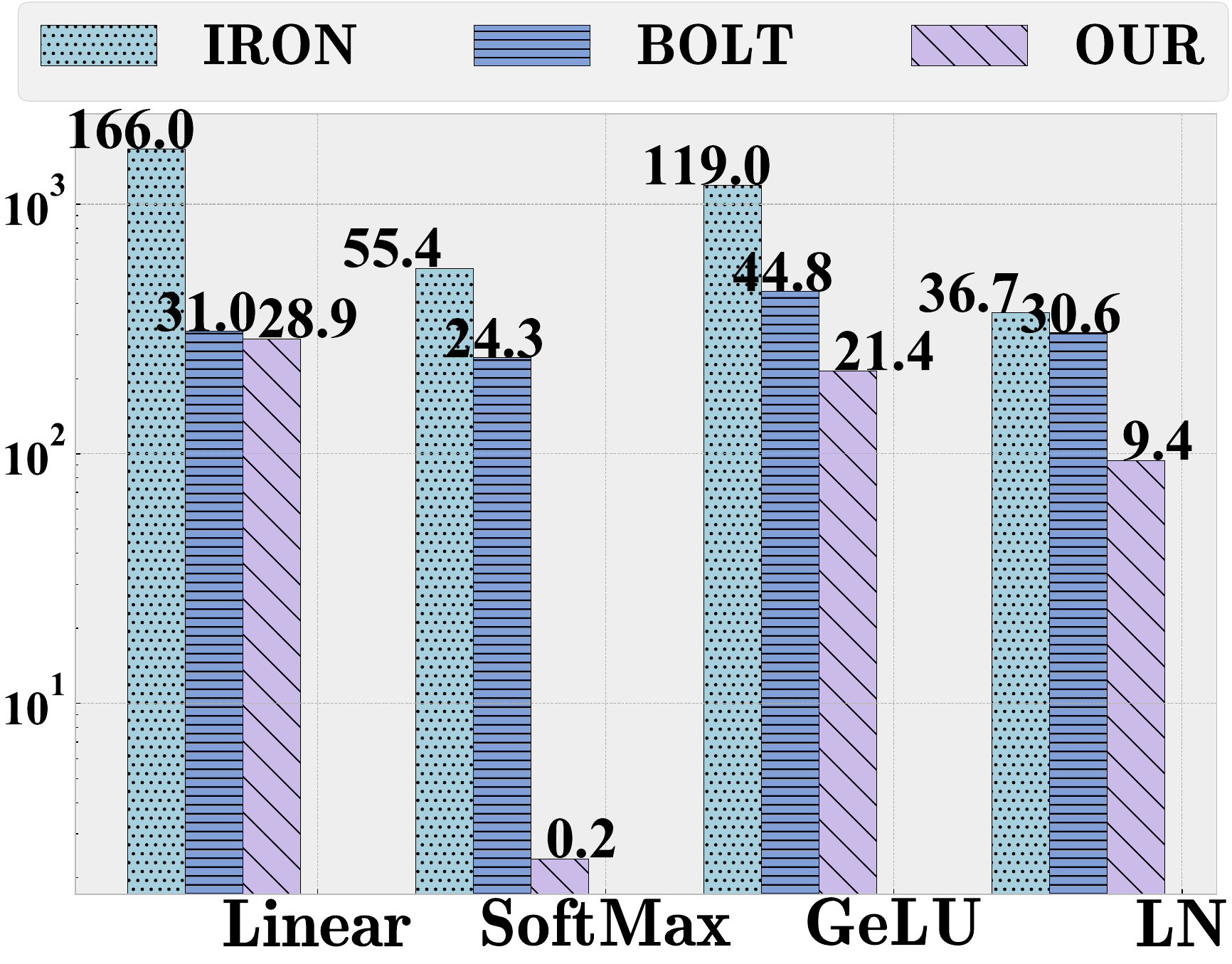}
		\label{seggelu:seg2}}
\hfil
\subfloat[\scriptsize WAN$_2$ (Sec.)]{
		\includegraphics[width=1.58in]{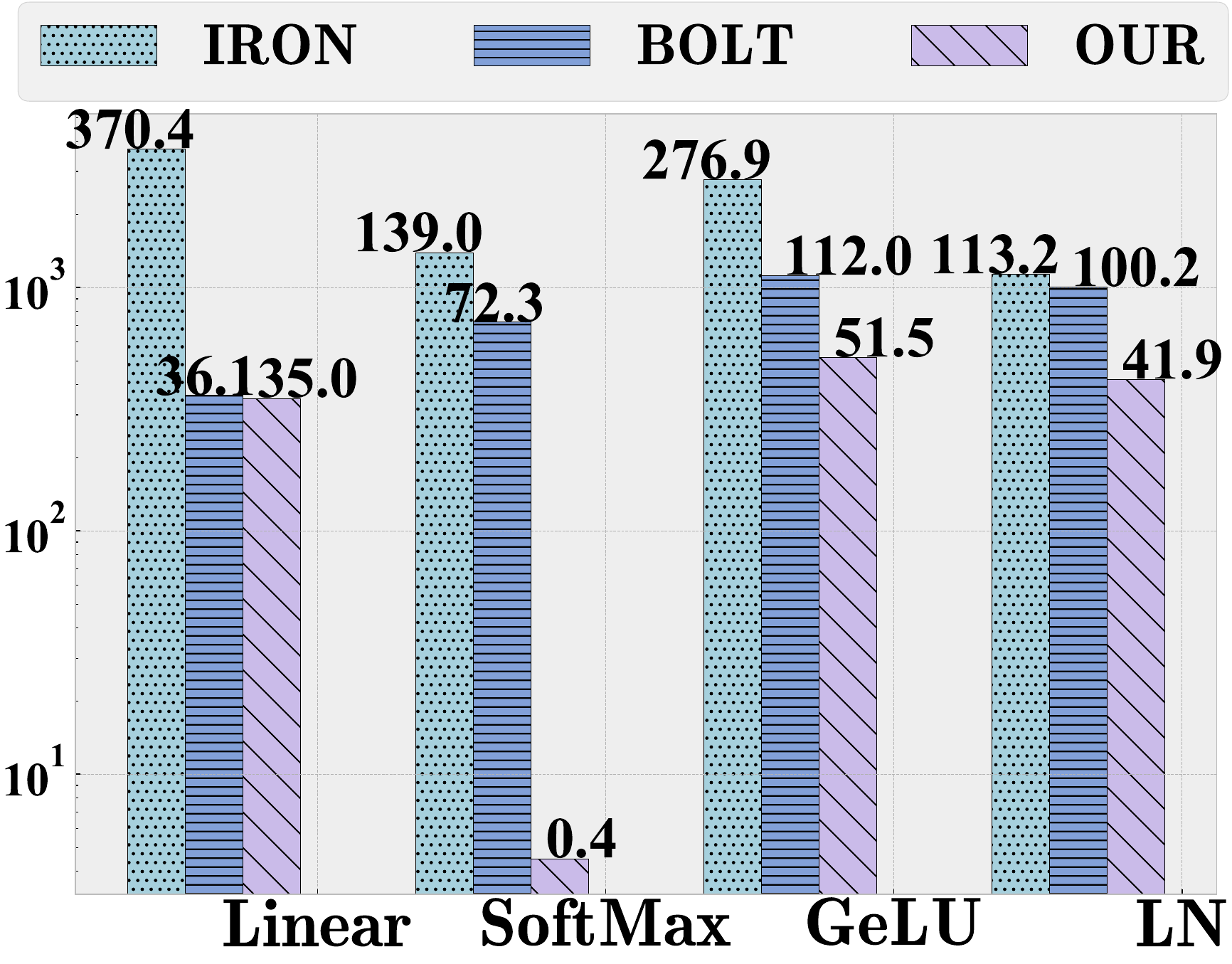}
		\label{seggelu:seg3}}
\hfil
\subfloat[\scriptsize End-to-End]{
        \includegraphics[width=1.76in]{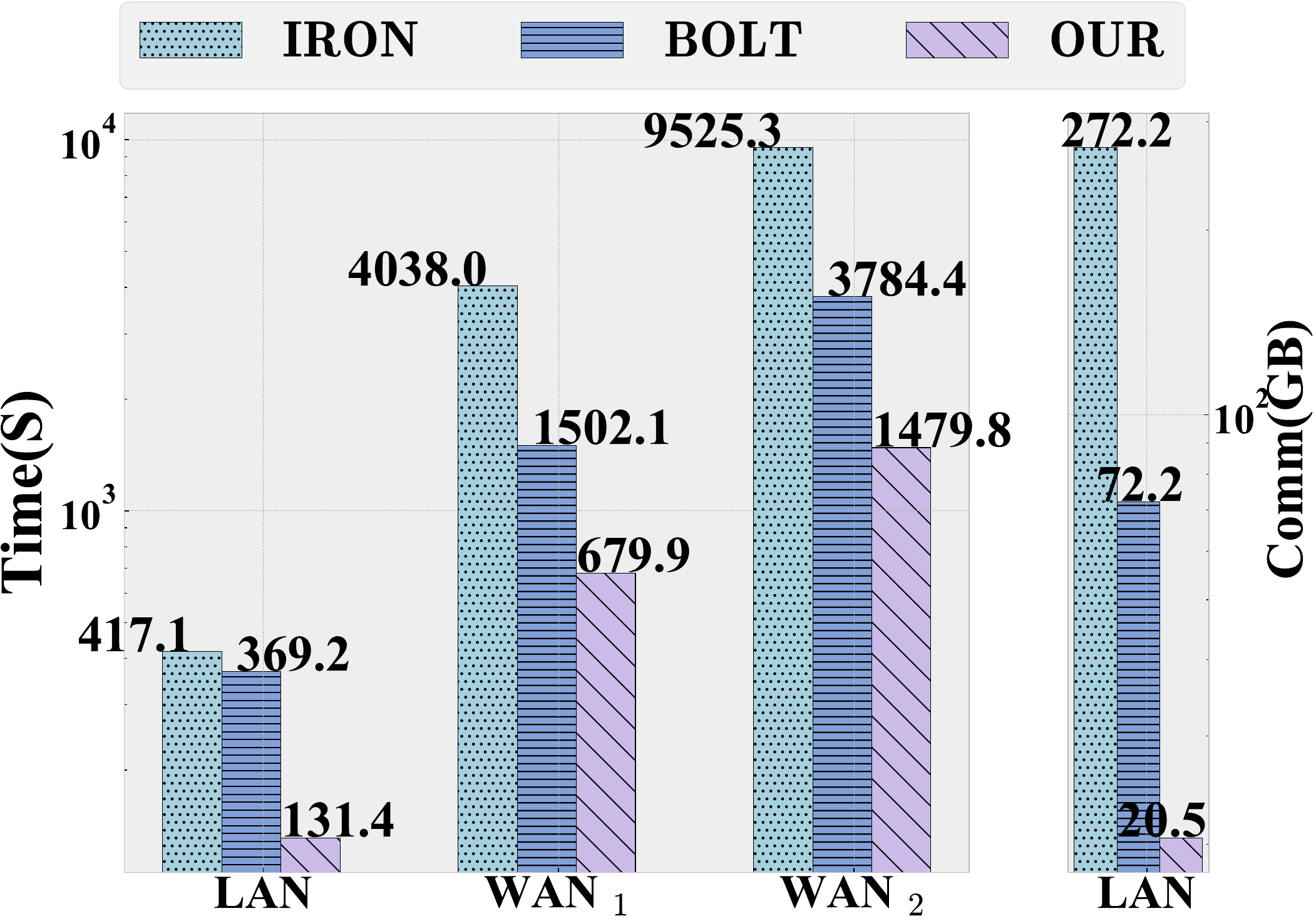}
	  \label{seggelu:seg4}}
	\caption{End-to-End linear \& non-linear operators' performance.}
	\label{fig: linear-nonlinear-costs}
\end{figure*}

In Tab. \ref{tab:end-end-runtime}, we compare the performance of FASTLMPI with other private TBM inference methods. Similar to NEXUS \cite{Nexus2024}, the end-to-end performance is the aggregation of the microbenchmarks. In summary, we have achieved a $2.8$-fold, $2.2$-fold, and $2.6$-fold improvement in inference time under three network settings and a reduction of communication costs by $72.2$\% compared to the BOLT. 
\begin{table}[h]
    \caption{End-to-end comparisons with existing private inference frameworks. The numbers of SIGMA, NEXUS, and BumbleBee are taken from their papers. Frameworks marked with "*" \textbf{are not} 2PC framework.}
    \centering
    \setlength{\tabcolsep}{10pt}
    \scalebox{0.76}{
    \begin{tabular}{@{}c|ccc|c@{}}
    \toprule
    
    \multirow{2}{*}{\textbf{Method}} & \multicolumn{3}{c|}{\textbf{Time (s)} } & \multirow{2}{*}{\textbf{Comm.(GB)}} \\
    \cline{2-4}
    & $\mathrm{LAN}$ & $\mathrm{WAN_1}$ & $\mathrm{WAN_2}$ & \\
    \midrule
    \textbf{NEXUS \cite{Nexus2024}} & $\approx$ 857  & $\approx$ 869 & \textbf{$\approx$ 881}  &  \textbf{0.16} \\
    \textbf{IRON \cite{IRON2022}}&
    417.08  & 4037.97 & 9525.28 & 272.22 \\
   \textbf{\shortstack{BOLT \cite{BOLT2024} w/o W.E.}} & 369.19 & 1502.12   & 3784.39 & 72.21  \\
    \textbf{\shortstack{SIGMA* \cite{gupta2023sigma}}} & $\approx$ 240.0  & - & -  & 34.37 \\
    \textbf{\shortstack{BumbleBee \cite{Bumblebee2023}}} & $\approx$ 204.0  & $\approx$ 971.0 &  $\approx$ 1578.0  & 8.58 \\
  \textbf{FASTLMPI} & 
   \textbf{131.36} &\textbf{679.90} & 1479.82  & 20.49  \\
    \bottomrule
    \end{tabular}
    }
    \label{tab:end-end-runtime}
\end{table}

% \textbf{\shortstack{BOLT \cite{BOLT2024} w/o W.E.}}
\textbf{It is noteworthy that} NEXUS exhibits a more significant advantage in scenarios with higher network latency. This is because NEXUS is a non-interactive approach that employs a unified HE technique. However, when network latency is low, FASTLMPI still holds a substantial advantage. We also evaluated the performance of IRON, BOLT, and FASTLMPI under LAN, WAN$_1$, and WAN$_2$ network conditions in the end-to-end inference task. Fig. \ref{fig: linear-nonlinear-costs} shows linear and non-linear operators' inference time and communication costs. $\Pi_\mathsf{softmax}$ benefits the most from the fine-grained collaborative technology of HE and SS, experiencing a $9-72\times$ speedup in runtime and a $260-769\times$ reduction in communication cost.

 \section{RELATED WORK}

With the proliferation of Deep Learning, private inference has become a key area of research. 

\textbf{Private inference with efficient protocols}: Many methods for privacy protection in large models are implemented through the design of efficient protocols, such as \cite{Bumblebee2023}, \cite{Nexus2024}, \cite{ScaleMPCfp2024}, \cite{Secformer2024}, \cite{kundu2023learning}, \cite{THEX2022}, \cite{hou2023ciphergpt}. Where \cite{Bumblebee2023} has optimized the matrix multiplication protocol via a specialized RLWE batching technique, provided methods for designing efficient and accurate protocols for the activation functions used in transformers, and has successfully run BumbleBee on five pre-trained transformer models. And NEXUS \cite{Nexus2024} designs a non-interactive private transformer inference using HE, achieving efficient inference under the WAN environment. They provide SIMD-based ciphertext slot folding and decompression techniques to reduce communication costs. Tan et al. \cite{tan2021cryptgpu} introduce CrytpGPU, which identifies a sequence of “GPU-friendly” cryptographic protocols to enable privacy-preserving evaluation of both linear and non-linear operations on the GPU.

\textbf{Private inference with hardware acceleration}: Due to the high computational resource consumption of some cryptographic primitives, some works adopt hardware acceleration methods to improve the efficiency of private computations, as in \cite{gupta2023sigma}, \cite{peng2023rrnet}, \cite{watson2022piranha}, \cite{jawalkar2023orca}, \cite{yang2023xnet}. Among them, SIGMA \cite{gupta2023sigma} enhances inference efficiency through GPU while maintaining model accuracy based on Function secret sharing (FSS) \cite{boyle2016function}. It is worth noting that SIGMA is the first work to design a 2-PC private TBM inference by using the FSS technique and accelerating it with GPU. RRNet \cite{peng2023rrnet} replaces computationally expensive ReLU operations with trainable polynomial activation functions, develops a cryptographic hardware scheduler and performance model for FPGA platforms, constructs a latency lookup table to optimize the scheduling of encrypted operations.

\textbf{Private inference with quantization and distillation}: Some studies also employ quantization and distillation methods to reduce the model size, thereby enhancing the efficiency of privacy-preserving computations. SecFormer\cite{Secformer2024} and MPCFormer\cite{MPCformer2024} utilize knowledge distillation to train a surrogate Transformer model with fewer parameters and simpler architecture, achieving performance comparable to the original Transformer. The surrogate model is both high-performing and compatible with SMPC. MPCViT\cite{zeng2023mpcvit}, based on the vision transformer, replaces the Softmax with an accurate and MPC-friendly ReLU Softmax. An MPC-aware differentiable NAS \cite{chitty2022neural} algorithm that can learn architectural parameters and model parameters simultaneously and finally uses knowledge distillation to improve the accuracy.refe\section{DISCUSSION}

\textbf{Extensibility of \SystemName}: A specific benchmark is provided on the BERT$_\mathsf{base}$ model. \SystemName, enabled by the four fundamental protocols we provide, can be applied to transformer-based language models including GPT \cite{GPT2024}, BERT$_\mathsf{large}$, and so on. Moreover, the offered activation functions (Sigmoid, Tanh, and Mish) are well-suited for traditional neural networks like Logistic regression \cite{2011logistic}, LSTM \cite{yu2019lstmreview}, and RNN \cite{rnn2020}.

\textbf{The gains from fine-grained co-design of HE and SS}: It is well known that, classical arts use HE in linear computations (Matmul), and SS in non-linear computations (Softmax, LayerNorm, GeLU). However, in some special cases, preprocessing ciphertexts in SIMD form before homomorphic operations can bring greater benefits than SS. This is our original intention: to selectively use HE in Softmax, LayerNorm, and GeLU while selectively using SS in Matmul, breaking the boundaries of operators of using HE solely in matrix multiplication and SS solely in non-linear operations, aiming to reduce computational and communication costs simultaneously. Our experiments have verified that this is feasible. 
rrin\section{CONSLUSION \& FUTURE WORK}

We introduce \SystemName, an optimized 2PC private TBM inference framework, which significantly reduces costs compared to previous methods. This advancement enhances the practicality of secure inference for the TBM. Moreover, we draw two key conclusions. Firstly, through the strategic collaboration of Homomorphic Encryption (HE) and secret sharing (SS), all homomorphic operations are transformed into ciphertext-plaintext multiplications. This allows exploration of Partial Homomorphic Encryption algorithms like Paillier \cite{paillier1999public}, \cite{gong2024efficient} and ECC-Elgamal \cite{elgamal1985public}, \cite{raju2003elliptic}, \cite{reegan2021highly} for more efficient HE\&SS-based protocols. Secondly, breaking the linear and nonlinear boundaries significantly improves performance, with a higher emphasis on HE making private TBM inference more suitable for WAN environments with poor network conditions, while a heavier reliance on SS provides an advantage in LAN environments.

In future work, creating a network environment-adaptive private TBM inference framework using feedback tuning mechanisms is a key objective, which can dynamically generate the optimal private TBM inference scheme tailored to the prevailing network conditions. The secure computing protocol used in the generation scheme can directly call the existing mature private computing library. But it is also possible to redesign the base protocol to be more efficient. So we believe that the feedback mechanism of the cost model and designing efficient protocols do not conflict and are equally important for driving the development of Privacy-preserving Machine Learning.

% Reference
\bibliographystyle{IEEEtran}
\bibliography{tifs_reference}

% Generated by IEEEtran.bst, version: 1.14 (2015/08/26)
\begin{thebibliography}{10}
\providecommand{\url}[1]{#1}
\csname url@samestyle\endcsname
\providecommand{\newblock}{\relax}
\providecommand{\bibinfo}[2]{#2}
\providecommand{\BIBentrySTDinterwordspacing}{\spaceskip=0pt\relax}
\providecommand{\BIBentryALTinterwordstretchfactor}{4}
\providecommand{\BIBentryALTinterwordspacing}{\spaceskip=\fontdimen2\font plus
\BIBentryALTinterwordstretchfactor\fontdimen3\font minus \fontdimen4\font\relax}
\providecommand{\BIBforeignlanguage}[2]{{%
\expandafter\ifx\csname l@#1\endcsname\relax
\typeout{** WARNING: IEEEtran.bst: No hyphenation pattern has been}%
\typeout{** loaded for the language `#1'. Using the pattern for}%
\typeout{** the default language instead.}%
\else
\language=\csname l@#1\endcsname
\fi
#2}}
\providecommand{\BIBdecl}{\relax}
\BIBdecl

\bibitem{DLaaS2023}
C.~Dong, J.~Weng, J.~Liu, Y.~Zhang, Y.~Tong, A.~Yang, Y.~Cheng, and S.~Hu, ``Fusion: Efficient and secure inference resilient to malicious servers,'' 2023.

\bibitem{singh2018data}
N.~Singh and A.~K. Singh, ``Data privacy protection mechanisms in cloud,'' \emph{Data Science and Engineering}, vol.~3, no.~1, pp. 24--39, 2018.

\bibitem{HE2024}
G.~Su, J.~Wang, X.~Xu, Y.~Wang, and C.~Wang, ``The utilization of homomorphic encryption technology grounded on artificial intelligence for privacy preservation,'' \emph{International Journal of Computer Science and Information Technology}, vol.~2, no.~1, pp. 52--58, 2024.

\bibitem{MPC2024}
S.~E. V.~S. Pillai and K.~Polimetla, ``Enhancing network privacy through secure multi-party computation in cloud environments,'' pp. 1--6, 2024.

\bibitem{Attention2017}
A.~Vaswani, N.~Shazeer, N.~Parmar, J.~Uszkoreit, L.~Jones, A.~N. Gomez, {\L}.~Kaiser, and I.~Polosukhin, ``Attention is all you need,'' \emph{Advances in neural information processing systems}, vol.~30, 2017.

\bibitem{BOLT2024}
Q.~Pang, J.~Zhu, H.~Mollering, Z.~Wenting, and T.~Schneider, ``Bolt: Privacy-preserving, accurate and efficient inference for transformers,'' 2024.

\bibitem{Bumblebee2023}
W.-j. Lu, Z.~Huang, Z.~Gu, J.~Li, J.~Liu, K.~Ren, C.~Hong, T.~Wei, and W.~Chen, ``Bumblebee: Secure two-party inference framework for large transformers,'' \emph{Cryptology ePrint Archive}, 2023.

\bibitem{HybridSS2012}
H.~Ghodosi, J.~Pieprzyk, and R.~Steinfeld, ``Multi-party computation with conversion of secret sharing,'' \emph{Designs, Codes and Cryptography}, vol.~62, pp. 259--272, 2012.

\bibitem{Nexus2024}
J.~Zhang, J.~Liu, X.~Yang, Y.~Wang, K.~Chen, X.~Hou, K.~Ren, and X.~Yang, ``Secure transformer inference made non-interactive,'' \emph{Cryptology ePrint Archive}, 2024.

\bibitem{Puma2023}
Y.~Dong, W.-j. Lu, Y.~Zheng, H.~Wu, D.~Zhao, J.~Tan, Z.~Huang, C.~Hong, T.~Wei, and W.~Cheng, ``Puma: Secure inference of llama-7b in five minutes,'' \emph{arXiv preprint arXiv:2307.12533}, 2023.

\bibitem{cong2022CCismoreefficient}
K.~Cong, D.~Das, J.~Park, and H.~V. Pereira, ``Sortinghat: Efficient private decision tree evaluation via homomorphic encryption and transciphering,'' in \emph{Proceedings of the 2022 ACM SIGSAC Conference on Computer and Communications Security}, 2022, pp. 563--577.

\bibitem{Cryptflow22020}
D.~Rathee, M.~Rathee, N.~Kumar, N.~Chandran, D.~Gupta, A.~Rastogi, and R.~Sharma, ``Cryptflow2: Practical 2-party secure inference,'' in \emph{Proceedings of the 2020 ACM SIGSAC Conference on Computer and Communications Security}, 2020, pp. 325--342.

\bibitem{FIT22024}
Q.~Zhang, T.~Xiang, C.~Xin, and H.~Wu, ``United we stand: Accelerating privacy-preserving neural inference by conjunctive optimization with interleaved nexus,'' vol.~38, no.~15, pp. 16\,794--16\,802, 2024.

\bibitem{lindell2017simulate}
Y.~Lindell, ``How to simulate it--a tutorial on the simulation proof technique,'' \emph{Tutorials on the Foundations of Cryptography: Dedicated to Oded Goldreich}, pp. 277--346, 2017.

\bibitem{FHEgentry1978}
R.~L. Rivest, L.~Adleman, M.~L. Dertouzos \emph{et~al.}, ``On data banks and privacy homomorphisms,'' \emph{Foundations of secure computation}, vol.~4, no.~11, pp. 169--180, 1978.

\bibitem{BFV2012}
Z.~Brakerski, ``Fully homomorphic encryption without modulus switching from classical gapsvp,'' pp. 868--886, 2012.

\bibitem{BFV2016}
I.~Chillotti, N.~Gama, M.~Georgieva, and M.~Izabachene, ``Faster fully homomorphic encryption: Bootstrapping in less than 0.1 seconds,'' pp. 3--33, 2016.

\bibitem{RLWE02010}
V.~Lyubashevsky, C.~Peikert, and O.~Regev, ``On ideal lattices and learning with errors over rings,'' pp. 1--23, 2010.

\bibitem{SOMEWHAT2012}
J.~Fan and F.~Vercauteren, ``Somewhat practical fully homomorphic encryption,'' \emph{Cryptology ePrint Archive}, 2012.

\bibitem{SIMD2014}
N.~P. Smart and F.~Vercauteren, ``Fully homomorphic simd operations,'' \emph{Designs, codes and cryptography}, vol.~71, pp. 57--81, 2014.

\bibitem{ABY32018}
P.~Mohassel and P.~Rindal, ``Aby3: A mixed protocol framework for machine learning,'' pp. 35--52, 2018.

\bibitem{Privformer2023}
Y.~Akimoto, K.~Fukuchi, Y.~Akimoto, and J.~Sakuma, ``Privformer: Privacy-preserving transformer with mpc,'' in \emph{2023 IEEE 8th European Symposium on Security and Privacy (EuroS\&P)}.\hskip 1em plus 0.5em minus 0.4em\relax IEEE, 2023, pp. 392--410.

\bibitem{Pencilfp2024}
X.~Liu, Z.~Liu, Q.~Li, K.~Xu, and M.~Xu, ``Pencil: Private and extensible collaborative learning without the non-colluding assumption,'' \emph{arXiv preprint arXiv:2403.11166}, 2024.

\bibitem{ScaleMPCfp2024}
F.~Liu, X.~Xie, and Y.~Yu, ``Scalable multi-party computation protocols for machine learning in the honest-majority setting,'' in \emph{33st USENIX Security Symposium (USENIX Security 23)}, 2024.

\bibitem{SIRNN2021}
D.~Rathee, M.~Rathee, R.~K.~K. Goli, D.~Gupta, R.~Sharma, N.~Chandran, and A.~Rastogi, ``Sirnn: A math library for secure rnn inference,'' in \emph{2021 IEEE Symposium on Security and Privacy (SP)}.\hskip 1em plus 0.5em minus 0.4em\relax IEEE, 2021, pp. 1003--1020.

\bibitem{SEAL2023}
``{M}icrosoft {SEAL} (release 4.1),'' \url{https://github.com/Microsoft/SEAL}, Jan. 2023, microsoft Research, Redmond, WA.

\bibitem{emptool2016}
X.~Wang, A.~J. Malozemoff, and J.~Katz, ``{EMP-toolkit: Efficient MultiParty computation toolkit},'' \url{https://github.com/emp-toolkit}, 2016.

\bibitem{EzPC2019}
N.~Chandran, D.~Gupta, A.~Rastogi, R.~Sharma, and S.~Tripathi, ``Ezpc: Programmable and efficient secure two-party computation for machine learning,'' in \emph{2019 IEEE European Symposium on Security and Privacy (EuroS\&P)}.\hskip 1em plus 0.5em minus 0.4em\relax IEEE, 2019, pp. 496--511.

\bibitem{Cheetah2022}
Z.~Huang, W.~j. Lu, C.~Hong, and J.~Ding, ``Cheetah: Lean and fast secure $\{$Two-Party$\}$ deep neural network inference,'' in \emph{31st USENIX Security Symposium (USENIX Security 22)}, 2022, pp. 809--826.

\bibitem{Bertbase2018}
J.~Devlin, M.-W. Chang, K.~Lee, and K.~Toutanova, ``Bert: Pre-training of deep bidirectional transformers for language understanding,'' \emph{arXiv preprint arXiv:1810.04805}, 2018.

\bibitem{GLUE2018}
A.~Wang, A.~Singh, J.~Michael, F.~Hill, O.~Levy, and S.~R. Bowman, ``Glue: A multi-task benchmark and analysis platform for natural language understanding,'' \emph{arXiv preprint arXiv:1804.07461}, 2018.

\bibitem{FIT2024}
Q.~Zhang, T.~Xiang, C.~Xin, and H.~Wu, ``From individual computation to allied optimization: Remodeling privacy-preserving neural inference with function input tuning,'' 2024.

\bibitem{IRON2022}
M.~Hao, H.~Li, H.~Chen, P.~Xing, G.~Xu, and T.~Zhang, ``Iron: private inference on transformers,'' \emph{Proceedings of the 36th International Conference on Neural Information Processing Systems (NIPS '22)}, p. 15718–15731, 2022.

\bibitem{Secformer2024}
J.~Luo, Y.~Zhang, J.~Zhang, X.~Mu, H.~Wang, Y.~Yu, and Z.~Xu, ``Secformer: Towards fast and accurate privacy-preserving inference for large language models,'' \emph{arXiv preprint arXiv:2401.00793}, 2024.

\bibitem{sigmoidpoly2024}
J.~Chiang, ``Privacy-preserving logistic regression training on large datasets,'' \emph{arXiv preprint arXiv:2406.13221}, 2024.

\bibitem{sigmoidAMI1997}
H.~Amin, K.~M. Curtis, and B.~R. Hayes-Gill, ``Piecewise linear approximation applied to nonlinear function of a neural network,'' \emph{IEE Proceedings-Circuits, Devices and Systems}, vol. 144, no.~6, pp. 313--317, 1997.

\bibitem{sigmoidSONF2017}
S.~Ngah and R.~A. Bakar, ``Sigmoid function implementation using the unequal segmentation of differential lookup table and second order nonlinear function,'' \emph{Journal of Telecommunication, Electronic and Computer Engineering (JTEC)}, vol.~9, no. 2-8, pp. 103--108, 2017.

\bibitem{sigmoidFloris2023}
Y.~Zheng, Q.~Zhang, S.~S. Chow, Y.~Peng, S.~Tan, L.~Li, and S.~Yin, ``Secure softmax/sigmoid for machine-learning computation,'' in \emph{Proceedings of the 39th Annual Computer Security Applications Conference}, 2023, pp. 463--476.

\bibitem{BertQuantized2019}
O.~Zafrir, G.~Boudoukh, P.~Izsak, and M.~Wasserblat, ``Q8bert: Quantized 8bit bert,'' pp. 36--39, 2019.

\bibitem{gupta2023sigma}
K.~Gupta, N.~Jawalkar, A.~Mukherjee, N.~Chandran, D.~Gupta, A.~Panwar, and R.~Sharma, ``Sigma: Secure gpt inference with function secret sharing,'' \emph{Cryptology ePrint Archive}, 2023.

\bibitem{kundu2023learning}
S.~Kundu, S.~Lu, Y.~Zhang, J.~Liu, and P.~A. Beerel, ``Learning to linearize deep neural networks for secure and efficient private inference,'' \emph{arXiv preprint arXiv:2301.09254}, 2023.

\bibitem{THEX2022}
T.~Chen, H.~Bao, S.~Huang, L.~Dong, B.~Jiao, D.~Jiang, H.~Zhou, J.~Li, and F.~Wei, ``{THE}-{X}: Privacy-preserving transformer inference with homomorphic encryption,'' in \emph{Findings of the Association for Computational Linguistics: ACL 2022}, S.~Muresan, P.~Nakov, and A.~Villavicencio, Eds.\hskip 1em plus 0.5em minus 0.4em\relax Dublin, Ireland: Association for Computational Linguistics, May 2022, pp. 3510--3520.

\bibitem{hou2023ciphergpt}
X.~Hou, J.~Liu, J.~Li, Y.~Li, W.-j. Lu, C.~Hong, and K.~Ren, ``Ciphergpt: Secure two-party gpt inference,'' \emph{Cryptology ePrint Archive}, 2023.

\bibitem{tan2021cryptgpu}
S.~Tan, B.~Knott, Y.~Tian, and D.~J. Wu, ``Cryptgpu: Fast privacy-preserving machine learning on the gpu,'' in \emph{2021 IEEE Symposium on Security and Privacy (SP)}.\hskip 1em plus 0.5em minus 0.4em\relax IEEE, 2021, pp. 1021--1038.

\bibitem{peng2023rrnet}
H.~Peng, S.~Zhou, Y.~Luo, N.~Xu, S.~Duan, R.~Ran, J.~Zhao, S.~Huang, X.~Xie, C.~Wang \emph{et~al.}, ``Rrnet: Towards relu-reduced neural network for two-party computation based private inference,'' \emph{arXiv preprint arXiv:2302.02292}, 2023.

\bibitem{watson2022piranha}
J.-L. Watson, S.~Wagh, and R.~A. Popa, ``Piranha: A $\{$GPU$\}$ platform for secure computation,'' in \emph{31st USENIX Security Symposium (USENIX Security 22)}, 2022, pp. 827--844.

\bibitem{jawalkar2023orca}
N.~Jawalkar, K.~Gupta, A.~Basu, N.~Chandran, D.~Gupta, and R.~Sharma, ``Orca: Fss-based secure training and inference with gpus,'' \emph{Cryptology ePrint Archive}, 2023.

\bibitem{yang2023xnet}
H.~Yang, S.~Shen, S.~Jiang, L.~Zhou, W.~Dai, and Y.~Zhao, ``Xnet: A real-time unified secure inference framework using homomorphic encryption,'' \emph{Cryptology ePrint Archive}, 2023.

\bibitem{boyle2016function}
E.~Boyle, N.~Gilboa, and Y.~Ishai, ``Function secret sharing: Improvements and extensions,'' in \emph{Proceedings of the 2016 ACM SIGSAC Conference on Computer and Communications Security}, 2016, pp. 1292--1303.

\bibitem{MPCformer2024}
D.~Li, R.~Shao, H.~Wang, H.~Guo, X.~Eric, and H.~Zhang, ``Mpcformer: fast, performant and private transformer inference with mpc,'' \emph{International Conference on Learning Representations (ICLR)}, 2023.

\bibitem{zeng2023mpcvit}
W.~Zeng, M.~Li, W.~Xiong, T.~Tong, W.-j. Lu, J.~Tan, R.~Wang, and R.~Huang, ``Mpcvit: Searching for accurate and efficient mpc-friendly vision transformer with heterogeneous attention,'' in \emph{Proceedings of the IEEE/CVF International Conference on Computer Vision}, 2023, pp. 5052--5063.

\bibitem{chitty2022neural}
K.~T. Chitty-Venkata and A.~K. Somani, ``Neural architecture search survey: A hardware perspective,'' \emph{ACM Computing Surveys}, vol.~55, no.~4, pp. 1--36, 2022.

\bibitem{GPT2024}
Y.~Chang, X.~Wang, J.~Wang, Y.~Wu, L.~Yang, K.~Zhu, H.~Chen, X.~Yi, C.~Wang, Y.~Wang \emph{et~al.}, ``A survey on evaluation of large language models,'' \emph{ACM Transactions on Intelligent Systems and Technology}, vol.~15, no.~3, pp. 1--45, 2024.

\bibitem{2011logistic}
S.~Dom{\'\i}nguez-Almendros, N.~Ben{\'\i}tez-Parejo, and A.~R. Gonzalez-Ramirez, ``Logistic regression models,'' \emph{Allergologia et immunopathologia}, vol.~39, no.~5, pp. 295--305, 2011.

\bibitem{yu2019lstmreview}
Y.~Yu, X.~Si, C.~Hu, and J.~Zhang, ``A review of recurrent neural networks: Lstm cells and network architectures,'' \emph{Neural computation}, vol.~31, no.~7, pp. 1235--1270, 2019.

\bibitem{rnn2020}
A.~Sherstinsky, ``Fundamentals of recurrent neural network (rnn) and long short-term memory (lstm) network,'' \emph{Physica D: Nonlinear Phenomena}, vol. 404, p. 132306, 2020.

\bibitem{paillier1999public}
P.~Paillier, ``Public-key cryptosystems based on composite degree residuosity classes,'' in \emph{International conference on the theory and applications of cryptographic techniques}.\hskip 1em plus 0.5em minus 0.4em\relax Springer, 1999, pp. 223--238.

\bibitem{gong2024efficient}
B.~Gong, W.~F. Lau, M.~H. Au, R.~Yang, H.~Xue, and L.~Li, ``Efficient zero-knowledge arguments for paillier cryptosystem,'' in \emph{2024 IEEE Symposium on Security and Privacy (SP)}.\hskip 1em plus 0.5em minus 0.4em\relax IEEE Computer Society, 2024, pp. 93--93.

\bibitem{elgamal1985public}
T.~ElGamal, ``A public key cryptosystem and a signature scheme based on discrete logarithms,'' \emph{IEEE transactions on information theory}, vol.~31, no.~4, pp. 469--472, 1985.

\bibitem{raju2003elliptic}
G.~Raju and R.~Akbani, ``Elliptic curve cryptosystem and its applications,'' in \emph{SMC'03 Conference Proceedings. 2003 IEEE International Conference on Systems, Man and Cybernetics. Conference Theme-System Security and Assurance (Cat. No. 03CH37483)}, vol.~2.\hskip 1em plus 0.5em minus 0.4em\relax IEEE, 2003, pp. 1540--1543.

\bibitem{reegan2021highly}
A.~S. Reegan and V.~Kabila, ``Highly secured cluster based wsn using novel fcm and enhanced ecc-elgamal encryption in iot,'' \emph{Wireless Personal Communications}, vol. 118, no.~2, pp. 1313--1329, 2021.

\end{thebibliography}

\newpage

%\centering
%\begin{ta\appendix
\section{Security against the semi-honest adversary}\label{appendix: proof}

\subsection{2PC functionality}
\begin{figure}[h]
    \centering\scalebox{0.80}{
    \begin{tikzpicture}
        \node[line width=1pt,draw, rectangle, fill=gray!20, rounded corners = 3pt, minimum height=0.4cm, minimum width=2.4cm] at (-0.53, 4.0) 
        {Functionality $\mathcal{F}^\mathsf{2pc}[f]$};
        \node[line width=1pt, draw, rectangle, rounded corners = 5pt, minimum height=3cm, minimum width=4cm, text width=9.8cm, align=left] at (2.8, 1) 
        {
            \textbf{Initialization}: \\
            ~~$\mathcal{F}^\mathsf{2pc}$ interacts with $\mathcal{A}$, $\mathcal{B}$ and the adversary $\mathsf{Adv}$. Let $f$ denote the functionality to be computed.\\
    
            \textbf{Inputs}:\\
            \begin{itemize}
            \resetlinenumber
            \internallinenumbers
             \item ~~Upon receiving $(\mathsf{Input}, \mathsf{sid}, \mathbf{A})$ from $\mathcal{A}$, send $(\mathsf{Input}, \mathsf{sid}, \mathcal{A})$ to $\mathsf{Sim}$ 
             
             \item ~~Upon receiving $(\mathsf{Input}, \mathsf{sid}, \mathbf{B})$ from $\mathcal{B}$, send $(\mathsf{Input}, \mathsf{sid}, \mathcal{B})$ to $\mathsf{Sim}$ 
            \end{itemize}
            \textbf{Execution}: 
            \begin{itemize}

            \internallinenumbers
            
            \item  Computes $\mathbf{Z} = f(\mathbf{A}, \mathbf{B})$, picks random value $\langle \mathbf{Z} \rangle_\mathcal{B}$, calculates $\langle \mathbf{Z} \rangle_\mathcal{A} = \mathbf{Z} - \langle \mathbf{Z} \rangle_\mathcal{B}$.

            \item  Sends $(\mathsf{Output},\mathsf{sid},\langle \mathbf{Z} \rangle_\mathcal{A})$ to $\mathcal{A}$, $(\mathsf{Output},\mathsf{sid},\langle \mathbf{Z} \rangle_\mathcal{B})$ to $\mathcal{B}$
            \end{itemize}  
        };

    \end{tikzpicture}
 }
    \caption{Functionality for 2PC.}
    \label{pro: functionality}
\end{figure}  
   
\subsection{Security of Matmul}
We define the functionality of Matmul (denoted as $\mathcal{F}^\mathsf{2pc}_\mathsf{matmul}$) as an instance of $\mathcal{F}^\mathsf{2pc}$, where $\mathcal{F}^\mathsf{2pc}_\mathsf{matmul}$ calculates $\mathbf{Z} = f(\mathbf{A}, \mathbf{B}) := \mathbf{A} \otimes \mathbf{B}$. If $\mathcal{B}$ is the corrupted party, instead of picking random $\langle \mathbf{Z} \rangle_{\mathcal{B}}$, $\mathcal{F}^\mathsf{2pc}_\mathsf{matmul}$ receives $(\mathsf{Modify}, \mathsf{sid}, \langle \mathbf{Z} \rangle_{\mathcal{B}})$ from $\mathsf{Sim}$, calculates $\langle \mathbf{Z} \rangle_{\mathcal{A}} = \mathbf{Z} - \langle \mathbf{Z} \rangle_{\mathcal{B}}$.

\begin{theorem}
    The protocol $\Pi_\mathsf{matmul}$ is securely computes functionality $\mathcal{F}^\mathsf{2pc}_\mathsf{matmul}$ in the presence of static semi-honest adversaries $\mathsf{Adv}$ controlling each of parity $\mathcal{A}$ or $\mathcal{B}$.
\end{theorem}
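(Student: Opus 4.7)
The plan is to follow the simulation paradigm stated in Section~\ref{sec:thread_model}: for each corrupted party $\mathcal{P}\in\{\mathcal{A},\mathcal{B}\}$ I would exhibit a PPT simulator $\mathsf{Sim}_\mathcal{P}$ which, given only $\mathcal{P}$'s input and whatever output $\mathcal{F}^\mathsf{2pc}_\mathsf{matmul}$ hands back to $\mathcal{P}$, produces a view computationally indistinguishable from $\mathcal{P}$'s real view in $\Pi_\mathsf{matmul}$. Since the protocol contains exactly two messages on the wire, namely $[\![\widetilde{\mathbf{A}}_i]\!]_\mathcal{A}$ sent by $\mathcal{A}$ and $[\![\langle\mathbf{C}\rangle_\mathcal{A}]\!]_\mathcal{A}$ returned by $\mathcal{B}$, each case reduces to simulating a single incoming ciphertext; everything else in $\mathcal{P}$'s view (its own input, its own random coins, and its local computations) is reproduced verbatim by $\mathsf{Sim}_\mathcal{P}$.

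When $\mathcal{B}$ is corrupted, $\mathsf{Sim}_\mathcal{B}$ is given $\mathbf{B}$, internally samples the random mask $\widehat{\mathbf{R}}$, forwards the share $\langle\mathbf{C}\rangle_\mathcal{B}=\widehat{\mathbf{R}}$ to $\mathcal{F}^\mathsf{2pc}_\mathsf{matmul}$ via the $\mathsf{Modify}$ interface of Figure~\ref{pro: functionality}, and fakes the incoming ciphertext from $\mathcal{A}$ by producing a fresh BFV encryption of an arbitrary constant (e.g.\ $\mathbf{0}$) under $pk_\mathcal{A}$. Since $\mathcal{B}$ does not hold $sk_\mathcal{A}$, IND-CPA security of BFV under the RLWE assumption immediately gives that the simulated ciphertext is computationally indistinguishable from the real $[\![\widetilde{\mathbf{A}}_i]\!]_\mathcal{A}$, and a single-hybrid argument lifts this to indistinguishability of the entire view.

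When $\mathcal{A}$ is corrupted, $\mathsf{Sim}_\mathcal{A}$ is given $\mathbf{A}$, calls $\mathcal{F}^\mathsf{2pc}_\mathsf{matmul}$ to obtain $\langle\mathbf{C}\rangle_\mathcal{A}$, and must simulate the ciphertext $[\![\mathbf{C}]\!]_\mathcal{A}\boxminus\widehat{\mathbf{R}}$ returned by $\mathcal{B}$. My plan is to output a fresh BFV encryption of $\langle\mathbf{C}\rangle_\mathcal{A}$ under $pk_\mathcal{A}$ and argue indistinguishability in two layers. First, the plaintext $\mathbf{C}\ominus\widehat{\mathbf{R}}$ that $\mathcal{A}$ recovers after decryption is perfectly uniform over $\mathbb{Z}_p$ in every SIMD slot, since $\widehat{\mathbf{R}}$ acts as a one-time pad; hence simulated and real plaintexts are identically distributed. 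Second, conditioned on the plaintext, I would show that the evaluated ciphertext is statistically close to a fresh encryption of the same plaintext.

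The hard part will be this second layer, namely handling the decryption noise. Because $\mathcal{A}$ holds $sk_\mathcal{A}$, it can read the noise term inside the returned ciphertext, and that noise depends on $\mathcal{B}$'s private $\widehat{\mathbf{B}}$ through the ciphertext-plaintext multiplication $[\![\widetilde{\mathbf{A}}_i]\!]_\mathcal{A}\boxdot\widehat{\mathbf{B}}_i$ and the subsequent additive accumulation, whereas a fresh encryption carries only a small sampled noise; naively these two distributions are distinguishable and would leak information about $\mathbf{B}$. To close the gap I would invoke circuit privacy of BFV, realized by a standard noise-flooding step in which $\mathcal{B}$ adds an encryption of $\mathbf{0}$ whose Gaussian noise statistically drowns the evaluated noise by a factor exponential in $\lambda$. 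Once this flooding is argued, the full view of $\mathcal{A}$ becomes statistically indistinguishable from the real view, and the theorem follows by combining the two cases.
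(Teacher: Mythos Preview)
Your proposal follows essentially the same simulation-based strategy as the paper: two cases, IND-CPA of BFV when $\mathcal{B}$ is corrupted, and the one-time pad supplied by $\widehat{\mathbf{R}}$ when $\mathcal{A}$ is corrupted. On that level the two proofs coincide.

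The one substantive difference is your ``second layer''. The paper's proof for corrupted $\mathcal{A}$ stops at the plaintext level: it has the simulator freshly encrypt $\langle\mathbf{Z}\rangle_\mathcal{A}$ and argues indistinguishability solely via the one-time pad on $\mathbf{C}\ominus\widehat{\mathbf{R}}$, without ever mentioning the decryption-noise channel. You are right that, since $\mathcal{A}$ holds $sk_\mathcal{A}$, the evaluated noise carries information about $\widehat{\mathbf{B}}$ and a fresh encryption does not reproduce that distribution; some form of circuit privacy (noise flooding or an equivalent smudging step) is required for the simulation to go through. So what you flag as the ``hard part'' is in fact a point the paper's own proof simply elides, and the protocol description in Fig.~\ref{pro: matmul} does not include an explicit flooding step either. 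Your treatment is therefore more rigorous than the paper's on this issue; just be aware that you are implicitly amending the protocol when you assume $\mathcal{B}$ floods before sending $[\![\langle\mathbf{C}\rangle_\mathcal{A}]\!]_\mathcal{A}$.
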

\begin{proof}\renewcommand{\qedsymbol}{}
Consider the first case that $\mathcal{A}$ is corrupted. In the protocol, $\mathcal{A}$ receives a single message $[\![ \langle C\rangle_\mathcal{A} ]\!]_\mathcal{A}$. We construct the simulator $\mathsf{Sim}$ as follows:
\begin{itemize}
\item $\mathsf{Sim}$  sends $(\mathsf{Input}, \mathsf{sid}, \mathbf{A})$ to $\mathcal{F}^\mathsf{2pc}_\mathsf{matmul}$. 
\item $\mathsf{Sim}$ receive $[\![ \langle \Tilde{\mathbf{A}_\mathsf{i}}\rangle_\mathcal{A} ]\!]_\mathcal{A}$ from the corrupted party $\mathcal{A}$.
\item Upon receiving $\langle \mathbf{Z} \rangle_\mathcal{A}$ from $\mathcal{F}^\mathsf{2pc}_\mathsf{matmul}$, $\mathsf{Sim}$ encrypt it as $\mathbf{c} = [\![ \langle \mathbf{Z} \rangle_\mathcal{A} ]\!]_\mathcal{A}$ and forward $\mathbf{c}$ to corrupted $\mathcal{A}$ 
\end{itemize}

$\mathsf{Indistinguishable.}$ Obviously, when the corrupted $\mathcal{A}$ receives $\mathbf{c} = [\![ \langle \mathbf{Z} \rangle_\mathcal{A} ]\!]_\mathcal{A}$, it will decrypt it and set $\langle \mathbf{Z} \rangle_\mathcal{A}$ as result, leading a correct output. For the incoming message $[\![ \langle \mathbf{C} \rangle_\mathcal{A} ]\!]_\mathcal{A}$ of the corrupted $\mathcal{A}$, due to that $\langle \mathbf{C} \rangle_\mathcal{A}$ is encrypted with one-time pad, with the random key $\mathbf{R}$,  $\mathcal{A}$  cannot distinguish it from the random value $\langle \mathbf{Z} \rangle_\mathcal{A}$ of the ideal world.

Consider the second case that $\mathcal{B}$ is corrupted. In the protocol, $\mathcal{B}$ receives a single message $[\![ \Tilde{A_i} ]\!]_\mathcal{A}$. We construct the simulator $\mathsf{Sim}$ as follows:
\begin{itemize}
\item $\mathsf{Sim}$ sends $(\mathsf{Input}, \mathsf{sid}, \mathbf{B})$ to functionality $\mathcal{F}^\mathsf{2pc}_\mathsf{matmul}$.

\item $\mathsf{Sim}$ picks random value $\mathbf{A}'$ and sends $[\![ \Tilde{\mathbf{A}'} ]\!]_\mathcal{A}$ to corrupted $\mathcal{B}$;

\item Upon receiving $[\![\langle \mathbf{C}\rangle_\mathcal{A}]\!]_\mathcal{A}$ from corrupted $\mathcal{B}$, $\mathsf{Sim}$ extracts $\mathbf{R}$ as $\mathbf{R} = \langle \mathbf{C}\rangle_\mathcal{A} - \mathbf{A}'\mathbf{B}$;
\item $\mathsf{Sim}$ sends $(\mathsf{Modify}, \mathsf{sid}, \mathbf{R})$ to functionality $\mathcal{F}^\mathsf{2pc}_\mathsf{matmul}$.
\end{itemize}

$\mathsf{Indistinguishable.}$ Obviously, when the corrupted $\mathcal{B}$ sets its output as $R$, $\mathcal{F}^\mathsf{2pc}_\mathsf{matmul}$ will output $\mathbf{AB} - \mathbf{R}$ to $\mathcal{A}$, leading a correct output. For the incoming message $[\![ \Tilde{\mathbf{A}_\mathsf{i}}]\!]_\mathcal{A}$ of the corrupted $\mathcal{B}$, if the adversary $\mathsf{Adv}$ can distinguish $[\![ \Tilde{\mathbf{A}'_\mathsf{i}} ]\!]_\mathcal{A}$ and $[\![ \Tilde{\mathbf{A}_\mathsf{i}}]\!]_\mathcal{A}$, we can construct a game to break the BFV scheme.

\end{proof}

\subsection{Security of SoftMax}
We define the functionality of SoftMax (denoted as $\mathcal{F}_\mathsf{softmax}^\mathsf{2pc}$) as an instance of $\mathcal{F}^\mathsf{2pc}$, where $\mathcal{F}^\mathsf{2pc}_\mathsf{softmax}$ calculates $\mathbf{Z} = f(\langle \mathbf{X} \rangle_\mathcal{A}, \langle \mathbf{X} \rangle_\mathcal{B}) := SoftMax(\frac{e^{x_i}}{\sum_\mathsf{j=0}^\mathsf{m-1}e^{x_\mathsf{ij}}})$. if $\mathcal{A}$ is the corrupted party, instead of picking randoms $\langle \mathbf{Z} \rangle_\mathcal{A}$, $\mathcal{F}^\mathsf{2pc}_\mathsf{softmax}$ receives $(\mathsf{Modify}, \mathsf{sid}, \langle \mathbf{Z} \rangle_{\mathcal{A}})$ from $\mathsf{Sim}$, calculates $\langle \mathbf{Z} \rangle_\mathcal{B} = \mathbf{Z} - \langle \mathbf{Z} \rangle_\mathcal{A}$.

\begin{theorem}
    The protocol $\Pi_\mathsf{softmax}$ is securely computes functionality $\mathcal{F}^\mathsf{2pc}_\mathsf{softmax}$ in the presence of static semi-honest adversaries $\mathsf{Adv}$ controlling each of parity $\mathcal{A}$ or $\mathcal{B}$.
\end{theorem}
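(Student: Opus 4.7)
The plan is to follow the same simulation-paradigm template used for $\Pi_\mathsf{matmul}$, but now in a hybrid model where the parties have oracle access to the ideal functionality $\mathcal{F}_\mathsf{rExp}$ for the SIRNN exponential subprotocol (and to the $\mathbb{Z}_p \leftrightarrow \mathbb{Z}_{2^k}$ conversion functionality, whose simulator is standard). By the composition theorem it then suffices to build a simulator $\mathsf{Sim}$ for each corruption case whose transcript is computationally indistinguishable from the real transcript of $\Pi_\mathsf{softmax}$ in this hybrid world. Security will rest on two ingredients: (i) the semantic security of the BFV encryption scheme used for every ciphertext crossing the wire, and (ii) the fact that every plaintext leaving a party is one-time-pad masked in $\mathbb{Z}_p$ by a freshly sampled uniform element ($\mathbf{R}$, $\mathbf{v}$, or $\mathbf{M}$).

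Case 1: $\mathcal{A}$ is corrupted. The messages $\mathcal{A}$ receives during the real execution are $[\![\widetilde{\langle \mathbf{E}\rangle_\mathcal{B}}]\!]_\mathcal{B}$, the set $S_2 = \{[\![\sum(\mathbf{E})\odot \mathbf{v}]\!]_\mathcal{A}, [\![\widehat{\mathbf{V}}]\!]_\mathcal{B}\}$, and the share $\langle exp(\mathbf{X})\rangle_\mathcal{A}$ returned by $\mathcal{F}_\mathsf{rExp}$. $\mathsf{Sim}$ forwards $(\mathsf{Input},\mathsf{sid},\langle \mathbf{X}\rangle_\mathcal{A})$ to $\mathcal{F}^\mathsf{2pc}_\mathsf{softmax}$, receives $\langle \mathbf{Y}\rangle_\mathcal{A}$, and then: (a) samples a uniform $\langle exp(\mathbf{X})\rangle_\mathcal{A}'$ to play the role of the $\mathcal{F}_\mathsf{rExp}$ answer; (b) produces $[\![\widetilde{\langle \mathbf{E}\rangle_\mathcal{B}}]\!]_\mathcal{B}$ as a BFV encryption of an arbitrary dummy plaintext; (c) on seeing $\mathcal{A}$'s outgoing $S_1$, $\mathsf{Sim}$ samples a uniform vector $\mathbf{u}\in \mathbb{Z}_p^{d_m}$ and returns $\{[\![\mathbf{u}]\!]_\mathcal{A}, [\![\mathbf{0}]\!]_\mathcal{B}\}$ as the simulated $S_2$; (d) $\mathsf{Sim}$ finally extracts the real share by using the injected $\langle \mathbf{Y}\rangle_\mathcal{A}$ in place of the final ciphertext $[\![\langle \mathbf{Y}\rangle_\mathcal{B}]\!]_\mathcal{B}$ sent back, and hands $(\mathsf{Modify},\mathsf{sid},\mathbf{Y} - \langle \mathbf{Y}\rangle_\mathcal{A})$ to $\mathcal{F}^\mathsf{2pc}_\mathsf{softmax}$. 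Indistinguishability of ciphertext payloads follows from BFV's IND-CPA property; indistinguishability of the masked plaintext $\sum(\mathbf{E})\odot \mathbf{v}$ follows because $\mathbf{v}$ is uniform in $\mathbb{Z}_p$ and unknown to $\mathcal{A}$, so the product is uniform except on a negligible set where an entry of $\sum(\mathbf{E})$ happens to be zero.

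Case 2: $\mathcal{B}$ is corrupted. The simulator is the mirror image. $\mathsf{Sim}$ receives $\langle \mathbf{Y}\rangle_\mathcal{B}$ from the ideal functionality, answers $\mathcal{B}$'s $\mathcal{F}_\mathsf{rExp}$ query with a freshly sampled uniform $\langle exp(\mathbf{X})\rangle_\mathcal{B}'$, and on behalf of $\mathcal{A}$ sends $S_1 = \{[\![\mathbf{u}_1]\!]_\mathcal{B}, [\![\mathbf{u}_2]\!]_\mathcal{A}\}$ where $\mathbf{u}_1,\mathbf{u}_2$ are fresh uniform elements, followed by $[\![\langle \mathbf{Y}\rangle_\mathcal{B}]\!]_\mathcal{B}$ produced as an encryption of $\langle \mathbf{Y}\rangle_\mathcal{B}$. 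Indistinguishability of the first component of $S_1$ holds because in the real protocol $\widetilde{\mathbf{E}}\oplus \mathbf{R}$ is one-time-pad masked by the uniform matrix $\mathbf{R}$ that is freshly sampled by $\mathcal{A}$ and never revealed; the second component is an encryption under $pk_\mathcal{A}$ and is indistinguishable from an encryption of any fixed value by IND-CPA; the final ciphertext is distributed identically in both worlds because it carries the correct output share.

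The main obstacle I expect is bookkeeping rather than a deep cryptographic step: one must carefully track which party holds which key, which ciphertexts are addressed to which recipient, and which masks ($\mathbf{R}$, $\mathbf{v}$, $\mathbf{M}$) perfectly hide which intermediate plaintexts, all while weaving in the ideal $\mathcal{F}_\mathsf{rExp}$ and $\mathbb{Z}_p\leftrightarrow \mathbb{Z}_{2^k}$ calls. A secondary subtlety is the multiplicative masking step $\sum(\mathbf{E})\boxdot \mathbf{v}$: unlike additive masking, it is not perfectly hiding when an entry of $\sum(\mathbf{E})$ is zero, so the argument must invoke that with overwhelming probability no row-sum vanishes in $\mathbb{Z}_p$, and then conclude via a standard hybrid argument across the constant number of communication rounds that the simulated and real views are computationally indistinguishable.
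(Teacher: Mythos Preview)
Your proposal is correct and follows essentially the same simulation-paradigm template as the paper: split on which party is corrupted, replace ciphertexts under the honest party's key by dummy encryptions (appealing to BFV IND-CPA), and argue that every plaintext the adversary decrypts is hidden by a fresh additive mask ($\mathbf{R}$, $\mathbf{M}$) or the multiplicative mask $\mathbf{v}$. The paper's proof is terser and does not explicitly place $\Pi_\mathsf{rExp}$ and the $\mathbb{Z}_p\!\leftrightarrow\!\mathbb{Z}_{2^k}$ conversions in a hybrid model, nor does it flag the zero-row-sum caveat for the multiplicative mask; your version is slightly more careful on both points, but the underlying argument is the same. One small bookkeeping slip: in Case~1 the functionality does not \emph{hand} $\langle \mathbf{Y}\rangle_\mathcal{A}$ to the simulator; rather, since $\mathcal{A}$ chooses its own share $\widetilde{\mathbf{M}}$, the simulator (who holds $sk_\mathcal{B}$) extracts $\mathbf{M}$ from $\mathcal{A}$'s final outgoing ciphertext and sends $(\mathsf{Modify},\mathsf{sid},\mathbf{M})$ to $\mathcal{F}^\mathsf{2pc}_\mathsf{softmax}$, exactly as the paper does.
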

\begin{proof}\renewcommand{\qedsymbol}{}
Consider the first case that $\mathcal{A}$ is corrupted. In the protocol, $\mathcal{A}$ receives messages $[\![\widetilde{\langle\mathbf{E}\rangle_\mathcal{B}}]\!]_\mathcal{B}$, $ [\![ \sum(\mathbf{E}) \odot \mathbf{v}]\!]_\mathcal{A}$,  $[\![ \widehat{\mathbf{V}} ]\!]_\mathcal{B}$. We construct the simulator $\mathsf{Sim}$ as follows:
\begin{itemize}
\item $\mathsf{Sim}$ sends $(\mathsf{Input}, \mathsf{sid}, \langle X \rangle_\mathcal{A})$ to $\mathcal{F}^\mathsf{2pc}_\mathsf{softmax}$. 
\item $\mathsf{Sim}$ picks random values $\widetilde{\langle\mathbf{E}\rangle_\mathcal{B}}'$, $ \widehat{\mathbf{V}}'$ and sends $[\![\widetilde{\langle\mathbf{E}\rangle_\mathcal{B}}']\!]_\mathcal{B}$,  $[\![ \widehat{\mathbf{V}}' ]\!]_\mathcal{B}$ to corrupted party $\mathcal{A}$.

\item Upon receiving $[\![\widetilde{\mathbf{E}} \oplus \mathbf{R}]\!]_\mathcal{B} $, $[\![ \mathbf{SR} ]\!]_\mathcal{A}$, $[\![\langle \mathbf{Y} \rangle_\mathcal{B}]\!]_\mathcal{B}$ from corrupted $\mathcal{A}$. $\mathsf{Sim}$ can extracts $\widetilde{\mathbf{E}} \oplus \mathbf{R}$, and $\mathbf{M}$ 

\item $\mathsf{Sim}$ sends ($\mathsf{Modeify}$, $\mathsf{Sid}$, $\widetilde{\mathbf{E}} \oplus \mathbf{R}$, ($\mathsf{Modeify}$, $\mathsf{Sid}$, $\mathbf{M}$) to functionality $\mathcal{F}^\mathsf{2pc}_\mathsf{softmax}$.

\end{itemize}

$\mathsf{Indistinguishable.}$ Obviously, when the corrupted $\mathcal{A}$ sets its output as $\mathbf{M}$, $\mathcal{F}^\mathsf{2pc}_\mathsf{softmax}$ will output $\mathbf{Z} - \mathbf{M}$ to $\mathcal{A}$, leading a correct output. For the incoming messages $[\![\widetilde{\langle\mathbf{E}\rangle_\mathcal{B}}]\!]_\mathcal{B}$,  $[\![ \widehat{\mathbf{V}} ]\!]_\mathcal{B}$ of the corrupted $\mathcal{A}$, if the adversary $\mathsf{Adv}$ can distinguish $[\![\widetilde{\langle\mathbf{E}\rangle_\mathcal{B}}']\!]_\mathcal{B}$,  $[\![ \widehat{\mathbf{V}}' ]\!]_\mathcal{B}$, we can construct a game to break the BFV scheme. And with random key $\mathbf{v}$, $\mathcal{A}$ can not distinguish it from the random value $\sum(\mathbf{E}) \odot \mathbf{v}$ of the ideal world.

Consider the second case that $\mathcal{B}$ is corrupted. In the protocol, $\mathcal{B}$ receives messages $[\![\widetilde{\mathbf{E}} \oplus \mathbf{R}]\!]_\mathcal{B} $, $[\![ \mathbf{SR} ]\!]_\mathcal{A}$, $[\![\langle \mathbf{Y} \rangle_\mathcal{B}]\!]_\mathcal{B}$. We construct the simulator $\mathsf{Sim}$ as follows:

\begin{itemize}
\item $\mathsf{Sim}$ sends $(\mathsf{Input}, \mathsf{sid}, B)$ to functionality $\mathcal{F}^\mathsf{2pc}_\mathsf{softmax}$.

\item $\mathsf{Sim}$ receives $[\![\widetilde{\langle\mathbf{E}\rangle_\mathcal{B}}]\!]_\mathcal{B}$,  $[\![ \widehat{\mathbf{V}} ]\!]_\mathcal{B}$ from the corrupted party $\mathcal{B}$.

\item $\mathsf{Sim}$ picks random value $\mathbf{SR}'$ and sends $[\![ {\mathbf{SR}'} ]\!]_\mathcal{A}$ to corrupted $\mathcal{B}$;

\item Upon receiving $ [\![ \sum(\mathbf{E}) \odot \mathbf{v}]\!]_\mathcal{A}$, from the corrupted $\mathcal{B}$, $\mathsf{Sim}$ extracts $\sum(\mathbf{E}) \odot \mathbf{v}$,

\item $\mathsf{Sim}$ sends $(\mathsf{Modify}, \mathsf{sid}, \sum(\mathbf{E}) \odot \mathbf{v})$ to functionality $\mathcal{F}^\mathsf{2pc}_\mathsf{softmax}$.
\end{itemize}

$\mathsf{Indistinguishable.}$ Obviously, when the corrupted $\mathcal{B}$ receives $c= \langle \mathbf{Y} \rangle_\mathcal{B}$, it will decrypt it and set $\langle \mathbf{Y} \rangle_\mathcal{B}$ as a result, leading a correct output. For the incoming message $[\![\widetilde{\mathbf{E}} \oplus \mathbf{R}]\!]_\mathcal{B} $, $[\![\langle \mathbf{Y} \rangle_\mathcal{B}]\!]_\mathcal{B}$ of the corrupted $\mathcal{B}$, due that $\widetilde{\mathbf{E}} \oplus \mathbf{R}$ and $\langle \mathbf{Y} \rangle_\mathcal{B}$ is encrypted with one-time pad, with the random key $\mathbf{R}$ and $\mathbf{M}$, $\mathcal{B}$ cannot distinguish it from the random value $\langle Z \rangle_\mathcal{B}$ and $\widetilde{\mathbf{E}}$ of the ideal world. For the incoming $[\![ \mathbf{SR} ]\!]_\mathcal{A}$ of corrupted $\mathcal{B}$, if the adversary $\mathsf{Adv}$ can distinguish $[\![ \mathbf{SR} ]\!]_\mathcal{A}$ and $[\![ \mathbf{SR}' ]\!]_\mathcal{A}$, we can construct a game to break the BFV scheme.

\end{proof}

\subsection{Security of LayerNrom}

We define the functionality of LayerNorm (denoted as $\mathcal{F}_\mathsf{ln}^\mathsf{2pc}$) as an instance of $\mathcal{F}^\mathsf{2pc}$, where $\mathcal{F}^\mathsf{2pc}_\mathsf{ln}$ calculates $\mathbf{Z} = f(\langle \mathbf{X} \rangle_\mathcal{A}, \langle \mathbf{X} \rangle_\mathcal{B}) := LayerNorm(\mathbf{\gamma} \cdot \frac{x_i - \mu}{\sigma} + \mathbf{\beta})$. if $\mathcal{B}$ is the corrupted party, instead of picking randoms $\langle \mathbf{Z} \rangle_\mathcal{B}$, $\mathcal{F}^\mathsf{2pc}_\mathsf{ln}$ receives $(\mathsf{Modify}, \mathsf{sid}, \langle \mathbf{Z} \rangle_{\mathcal{B}})$ from $\mathsf{Sim}$, calculates $\langle \mathbf{Z} \rangle_\mathcal{A} = \mathbf{Z} - \langle \mathbf{Z} \rangle_\mathcal{B}$.

\begin{theorem}
    The protocol $\Pi_\mathsf{ln}$ is securely computes functionality $\mathcal{F}^\mathsf{2pc}_\mathsf{ln}$ in the presence of static semi-honest adversaries $\mathsf{Adv}$ controlling each of parity $\mathcal{A}$ or $\mathcal{B}$.
\end{theorem}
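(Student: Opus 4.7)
The plan follows the template already used for the $\Pi_\mathsf{matmul}$ and $\Pi_\mathsf{softmax}$ theorems: construct a PPT simulator $\mathsf{Sim}$ for each of the two corruption cases and exhibit a sequence of hybrids showing that the view produced by $\mathsf{Sim}$ in the ideal world is computationally indistinguishable from the real-world view of the corrupted party. Since $\Pi_\mathsf{ln}$ invokes the sub-protocol $\Pi_\mathsf{invsqrt}$ from $\mathrm{SCI}_{\mathrm{lib}}$, I will appeal to its simulator through sequential composition rather than reprove it.

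For the case when $\mathcal{A}$ is corrupted, $\mathsf{Sim}$ begins by sending $(\mathsf{Input},\mathsf{sid},\langle \mathbf{X}\rangle_\mathcal{A})$ to $\mathcal{F}^\mathsf{2pc}_\mathsf{ln}$, and on the forward direction it replaces every ciphertext that real $\mathcal{B}$ would encrypt under its own key, namely $[\![\langle \widetilde{\mathbf{A}}\rangle_\mathcal{B}]\!]_\mathcal{B}$ and the two BFV ciphertexts inside $S_2$ and the final $[\![\mathbf{Y}]\!]_\mathcal{A}\boxminus\mathbf{M}$, by BFV encryptions of freshly sampled uniform values, invoking the $\Pi_\mathsf{invsqrt}$ simulator to produce a consistent $\langle 1/\sqrt{\mathbf{K}}\rangle_\mathcal{A}$ tied to the output it will eventually receive from $\mathcal{F}^\mathsf{2pc}_\mathsf{ln}$. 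On the reverse direction $\mathsf{Sim}$ decrypts the messages $S_1$ and $S_3$ sent by the corrupted $\mathcal{A}$, extracts the effective additive mask $\mathbf{M}$ implicit in the protocol trace, and forwards $(\mathsf{Modify},\mathsf{sid},\mathbf{M})$ to the functionality so that the real output delivered to honest $\mathcal{B}$ is the correct LayerNorm share. Indistinguishability reduces hybrid-by-hybrid to BFV IND-CPA for the substituted ciphertexts, to the security of $\Pi_\mathsf{invsqrt}$, and to the fact that the masks $\mathbf{v}$ and the interior randomness of $\mathcal{B}$ are uniform in $\mathbb{Z}_p$.

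For the case when $\mathcal{B}$ is corrupted, the simulator is symmetric. $\mathsf{Sim}$ samples uniform dummy values in place of $\mathbf{SR}$ and $\widetilde{\mathbf{S}}$, encrypts them under $\mathcal{A}$'s key to synthesize $[\![\mathbf{SR}]\!]_\mathcal{A}$ and the $\mathcal{A}$-side component of $S_3$, and produces $[\![\widetilde{\mathbf{A}}^2\oplus\mathbf{R}]\!]_\mathcal{B}$ and $[\![\tfrac{a_i}{\sqrt{\sum^n a_i^2}}\boxminus\widetilde{\mathbf{S}}]\!]_\mathcal{B}$ as encryptions of uniform plaintexts. The first pair of substitutions is indistinguishable by BFV IND-CPA under $pk_\mathcal{A}$; the second pair is \emph{perfectly} indistinguishable because $\mathbf{R}$ and $\widetilde{\mathbf{S}}$ are sampled uniformly from $\mathbb{Z}_p$ and act as one-time pads on the masked plaintext. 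The call to $\Pi_\mathsf{invsqrt}$ is again handled by composition, and the final $[\![\langle \mathbf{Y}\rangle_\mathcal{B}]\!]_\mathcal{B}$ held by the corrupted party is simply the share delivered by $\mathcal{F}^\mathsf{2pc}_\mathsf{ln}$, so correctness of the ideal output follows.

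The main obstacle will be handling the $\mathbb{Z}_p \leftrightarrow \mathbb{Z}_{2^k}$ conversion that sandwiches the $\Pi_\mathsf{invsqrt}$ call and is only argued statistically in Section II-D. To keep the hybrid argument clean, I will explicitly insert a hybrid that replaces the converted shares by fresh uniform shares in the target ring and bound the distinguishing advantage by the overflow probability $|x|/2^k$, so that the final bound is a sum of one BFV-advantage term per substituted ciphertext, the simulator advantage of $\Pi_\mathsf{invsqrt}$, and this statistical gap. Once the conversion step is isolated in this way, the remainder of the proof is essentially the same bookkeeping exercise performed for $\Pi_\mathsf{softmax}$ and can be written in the same itemized simulator-plus-indistinguishability format used there.
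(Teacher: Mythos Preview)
Your plan is aligned with the paper's own proof, which consists of the single sentence ``The proof process is the same as that of SoftMax''; so reusing the $\Pi_\mathsf{softmax}$ simulator template is exactly what the authors intend, and your version is strictly more careful (explicit composition with $\Pi_\mathsf{invsqrt}$ and an isolated hybrid for the $\mathbb{Z}_p\leftrightarrow\mathbb{Z}_{2^k}$ statistical gap are both points the paper elides entirely).

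Two small bookkeeping corrections before you write it out. First, in the $\mathcal{A}$-corrupted case not every incoming message is under $\mathcal{B}$'s key: the second element of $S_2$, namely $[\![\sum_j\widetilde{a}_{ij}^2-\mathbf{v}]\!]_\mathcal{A}$, and the final $[\![\mathbf{Y}-\mathbf{M}]\!]_\mathcal{A}$ are both encrypted under $pk_\mathcal{A}$, so corrupted $\mathcal{A}$ decrypts them and the indistinguishability argument there must be the one-time-pad uniformity of $\mathbf{v}$ and $\mathbf{M}$, not BFV IND-CPA. Second, the functionality $\mathcal{F}^\mathsf{2pc}_\mathsf{ln}$ as defined in the paper only accepts a $\mathsf{Modify}$ message when $\mathcal{B}$ is corrupted; when $\mathcal{A}$ is corrupted the functionality itself samples $\langle\mathbf{Z}\rangle_\mathcal{B}$ and hands $\langle\mathbf{Z}\rangle_\mathcal{A}$ to $\mathsf{Sim}$, so your simulator should embed that value into the simulated final ciphertext rather than extract a mask from $\mathcal{A}$'s transcript. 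Neither issue affects the soundness of your overall strategy.
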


\begin{proof}\renewcommand{\qedsymbol}{}
The proof process is the same as that of SoftMax.
\end{proof}

\section{Segmental Approximation of Nonlinear Functions} 
\label{appen: non-linear appro}
%%%%%%%%%%%%%%%%%%%%%%%%%%%%%%%%%inputfile$%%%%%%%%%%%
\begin{figure*}%[h!]
    \centering
    \centering\scalebox{0.66}{
    \includegraphics[width=1.10\linewidth]{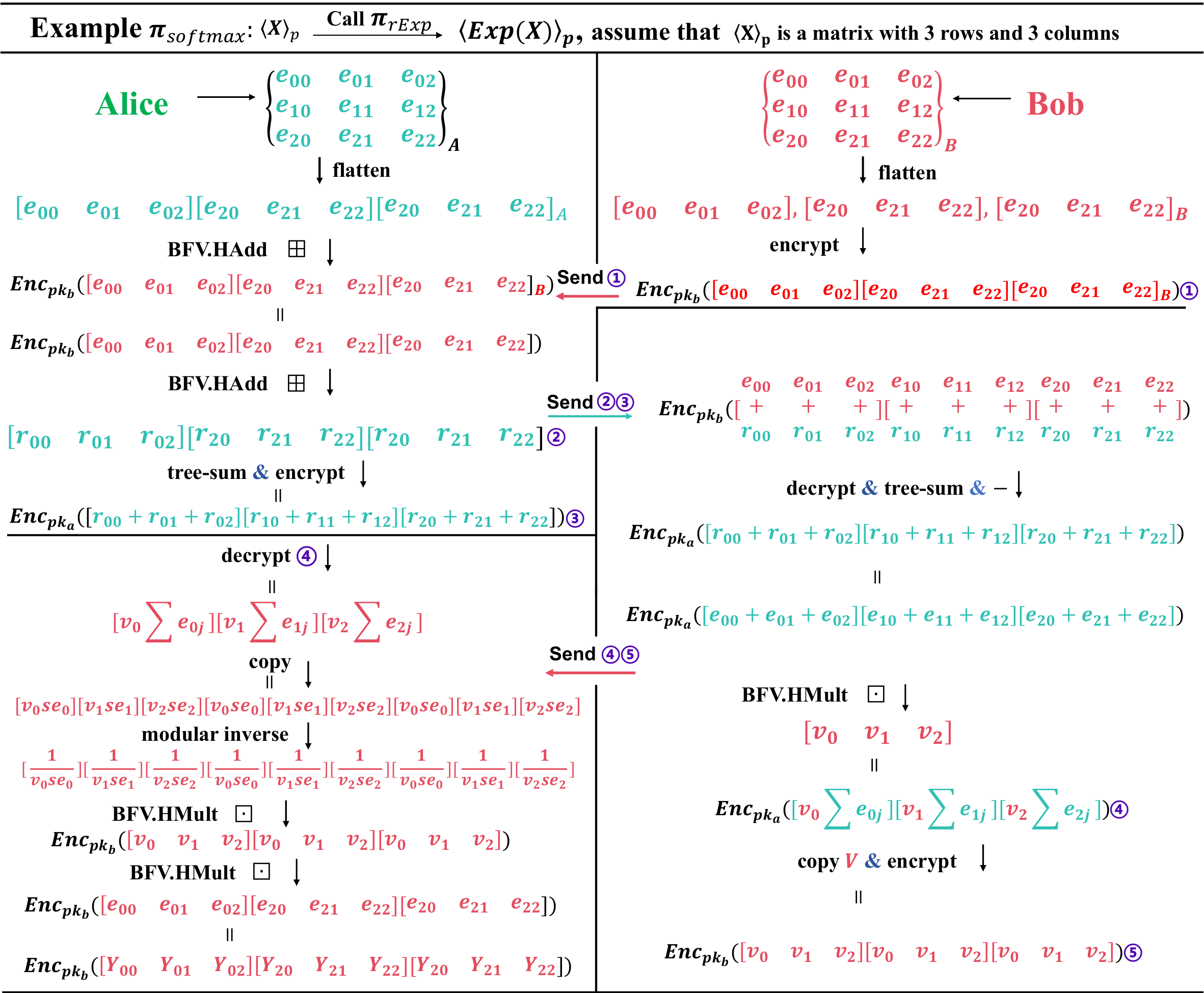}}
    \caption{The example of $\Pi_\mathsf{softmax}$}
    \label{Fig: ieSoftmax}
\end{figure*}

%%%%%%%%%%%%%%%%%%%%%%%%%%%%%%%%%inputfile$%%%%%%%%%%%

\subsection{Approximation of GeLU}
\label{appen: geluinterval}
The piecewise polynomial approximations F1 and F2 for the seg5GeLU function are as follows:

\begin{equation}\footnotesize
\begin{split}
F_1(x)=&-0.568686678 -  0.529288810 x  - 0.183509590 x^2 \\&- 0.028070202 x ^ 3  -0.001597741 x ^ 4\\
F_2(x)=&0.001193207 +  0.5 x  + 0.385858026x^2  -0.045101361 x ^ 4\\
F_3(x)=&-0.438406187 + 1.340789252x - 0.087184212 x ^ 2\\& + 0.007334718 x ^ 3
\label{eq:seg5GeLU_parm}
\end{split}
\end{equation}

\subsection{Approximation of Sigmoid}
The Sigmoid function is:
\begin{equation}
    y = Sigmoid(x) = \frac{1}{1 + e ^{-x}}
\end{equation}
which is regarding $(0, 0.5)$ symmetry, so we can just concern the situation where $x\geq0$, and for $x<0$, $Sigmoid(x)=1-Sigmoid(-x)$. Its derivatives is $y'=y(1-y)>0$, so we can get its second derivative: $y''= y'(1-2y)$; And its third derivative: $y'''=y'(6y^2-6y+1)$. Let $y''=0$, we can get $x=0$, and let $y'''=0$, due to $y'>0$, so let $6y^2-6y+1=0$, that is $y=\frac{3\pm\sqrt{3}}{6}$, and we obtain the inflection point $x_1=\ln(2\pm\sqrt{3})$. Because of $x\geq0$, so $x_1=\ln(2+\sqrt{3})$. We can obtain the other inflection point $x_2 = 6.48$, when $y'' < 10^{-5}$. Consequently, we can derive a piecewise approximation to the Sigmoid function:

\begin{equation}
seg4Sigmoid(x) = 
\begin{cases}
    F_1(x), & \text{if } 0 \leq x  <  x_1 \\
    F_2(x), & \text{if } x_1 \leq 0 < x_2 \\
    1, & \text{if } x \geq x_2
\end{cases}
\label{eq:seg4sigmoid}
\end{equation}
where the interval functions $F_1$ and $F_2$ are given below:
\begin{equation}\footnotesize
\begin{split}
F_1(x)=&0.4998102695 +  0.2527736008 x - 0.0086980795^2 \\ - 0.0127621849 x ^ 3
F_2(x)=&0.4489827105 + 0.3642809155 - 0.0948498277 x ^ 2\\& + 0.0113621587 x ^ 3 -0.0005220290 x ^ 4
\label{eq:seg4sigmoid_parm}
\end{split}
\end{equation}
Seg4Sigmoid for the negative half-axis can be easily obtained based on the symmetry.

\subsection{Approximation of Tanh}
Tanh's graphics is similar to Sigmoid's, the Tanh function is:
\begin{equation}
    y = \tanh(x) = \frac{e^x-e^{-x}}{e^x+e^{-x}}
\end{equation}
Tanh is regarding $(0, 0)$ symmetry, similar to Sigmoid, we can just concern the situation where $x\geq0$, and for $x<0$, $\tanh(x)=-\tanh(-x)$. Its derivative is $y'=1-y^2>0$, so we can get its second derivative: $y''= 2yy'$; And its third derivative: $y'''=2y'(3y^2-1)$. Let $y''=0$, we can get $x=0$, and let $y'''=0$, due to $y'>0$, so $3y^2-1=0$, that is $y=\pm\frac{1}{\sqrt{3}}$, and we obtain the point $x_1=\ln\frac{\sqrt{3}\pm1}{\sqrt2}$. Because of $x\geq0$, so $x_1=\ln\frac{\sqrt{3}+1}{\sqrt2}$. We can locate the segmentation point $x_1' = \ln\frac{\sqrt{3}+2}{\sqrt2}$ on either side of $x_1$ which results in the smallest error. We can also obtain the other inflection point $x_2 = 4.60$ when $y'' < 10^{-5}$. Consequently, we can derive a piecewise approximation to the Tanh function:

\begin{equation}
seg4Tanh(x) = 
\begin{cases}
    F_1(x), & \text{if } 0 \leq x  <  x_1' \\
    F_2(x), & \text{if } x_1' \leq 0 < x_2 \\
    1, & \text{if } x \geq x_2
\end{cases}
\label{eq:seg4sigmoid}
\end{equation}
where the interval functions $F_1$ and $F_2$ are given below:
\begin{equation}\footnotesize
\begin{split}
F_1(x)=&-0.0018890324 +  1.0384417257 x - 0.1695016932x^2\\&  -0.1084776546 x ^ 3\\
F_2(x)=&0.0800126966 + 1.0756763251 x -0.4766182792x ^ 2\\& + 0.0938427835 x ^ 3 -0.0068823466 x ^ 4
\label{eq:seg4sigmoid_parm}
\end{split}
\end{equation}

Compared with Bolt, which employs a piecewise polynomial of degree 5, our model, using a fourth-degree polynomial, still achieves higher fitting accuracy.

\subsection{Approximation of Mish}
Mish's graphics is similar to GeLU's, but it has no symmetry with its derivative. The Mish function is:
\begin{equation}
    y = Mish(x) = x\tanh(\ln(1+e^x))
\end{equation}
Because $y''$ and $y'''$ are too complex and hard to get their zero point, we use dichotomy to get $y''$'s approximate zero point: $x_1=-2.2563763963607935$ and $x_2=1.4905711794854284$. Consequently, we can derive a piecewise approximation to the Mish function:

\begin{equation}
seg4Mish(x) = 
\begin{cases}
    0, & \text{if } x < -8 \\ 
    F_1(x), & \text{if } -8 \leq x  <  x_1 \\
    F_2(x), & \text{if } x_1 \leq x < x_2 \\
    F_3(x), & \text{if } x_2 \leq x < 8 \\
    x, & \text{if } x \geq 8
\end{cases}
\label{eq:seg5mish}
\end{equation}
where the interval functions $F_1$, $F_2$ and $F_2$ are given below:
\begin{equation}\footnotesize
\begin{split}
F_1(x)=&-0.1150272397 + 0.5194677655 x +0.4293028981x^2\\& + 0.1459472737 x ^ 3 + 0.0271015218x^4 + 0.0028988426 x^5\\&+0.0001685503 x^6 + 0.0000041415 x^7\\
F_2(x)=&0.0000929623+0.5993108159x+0.3185423599x^2\\&-0.0135480666x^3-0.0420248186^x4-0.0022342097^x5\\&+0.0043057993x^6+0.0008690923x^7\\
F_3(x)=&-0.2470775212+1.0311064672x+0.1227243900x^2\\&-0.0757410810x^3+0.0200857395x^4-0.0027959123x^5\\&+0.0002003775x^6-0.000005848x^7
\label{eq:seg5mish_parm}
\end{split}
\end{equation}

\subsection{Example of the secure Softmax $\Pi_\mathsf{softmax}$ \& $\Pi_\mathsf{ln}$ } 
\label{appendix: softmax&ln}
Here are two examples, with Fig. \ref{Fig: ieSoftmax} and Fig. \ref{Fig: ieln} corresponding to $\Pi_\mathsf{softmax}$ and $\Pi_\mathsf{ln}$, respectively.

\begin{figure*}%[b!]
    \centering
    \centering\scalebox{0.76}{
    \includegraphics[width=1.0\linewidth]{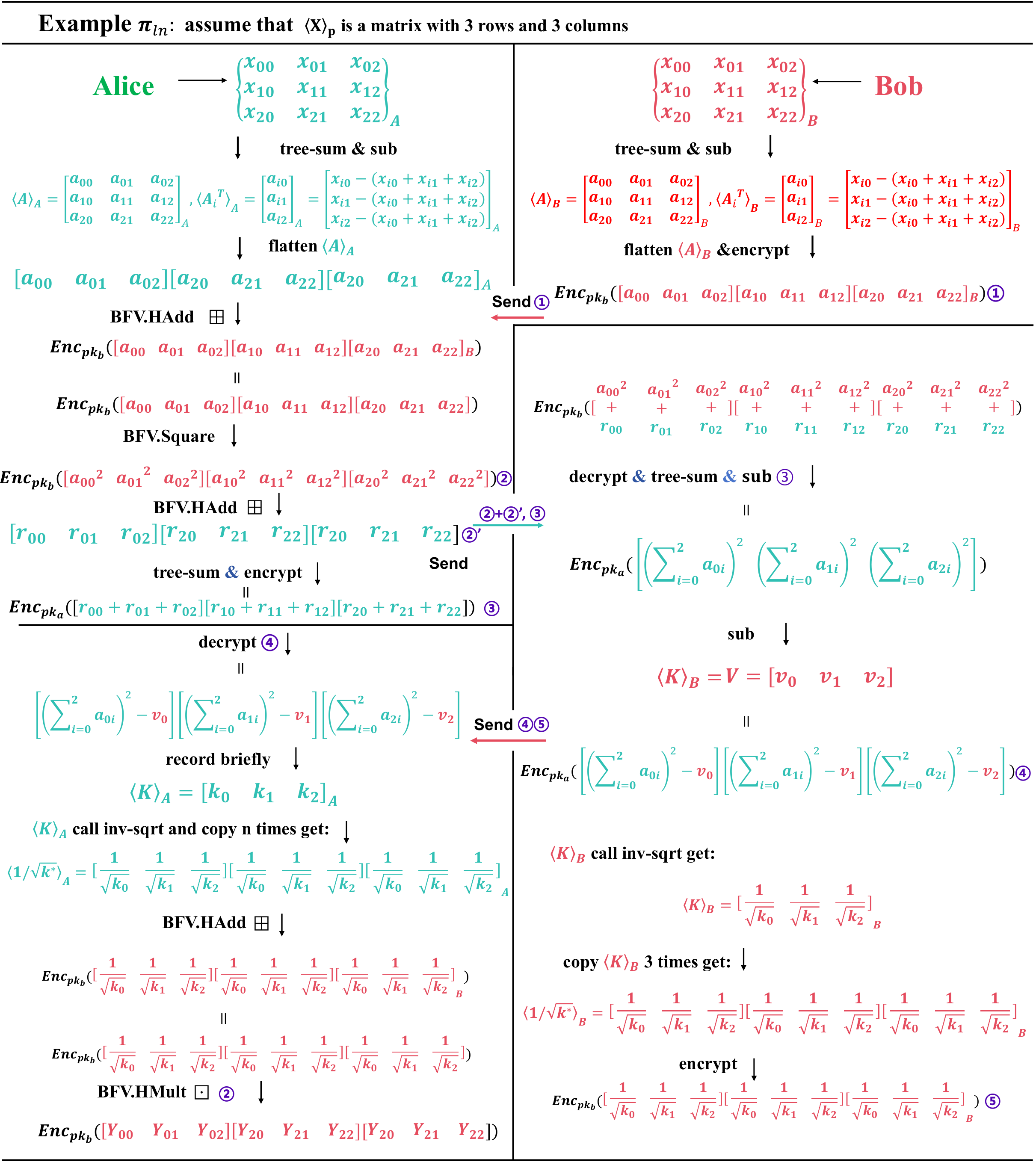}}
    \caption{The example of $\Pi_{ln}$}
    \label{Fig: ieln}
\end{figure*}

%\appendix

% \section{Qualitative examples}
% \subsection{Telecom Task Completion}
%\newpage
%\section*{Appendices}

\end{document}